\definecolor{webgreen}{rgb}{0,.5,0}
\definecolor{webbrown}{rgb}{.6,0,0}
\begin{document}

%\begin{center}
%\epsfxsize=4in
%\leavevmode\epsffile{logo129.eps}
%\end{center}

\theoremstyle{plain}
\newtheorem{theorem}{Theorem}[section]
\newtheorem{corollary}[theorem]{Corollary}
\newtheorem{lemma}[theorem]{Lemma}
\newtheorem{proposition}[theorem]{Proposition}

\theoremstyle{definition}
\newtheorem{definition}[theorem]{Definition}
\newtheorem{example}[theorem]{Example}
\newtheorem{conjecture}[theorem]{Conjecture}

\theoremstyle{remark}
\newtheorem{remark}[theorem]{Remark}

\theoremstyle{plain}	\newtheorem{question}[theorem]{Question}

\begin{center}
\vskip 1cm{\LARGE\bf Towards a Statement of the $S$-adic\\ 
\vskip .17in
Conjecture through Examples}
\vskip 1cm
\large Fabien Durand\\
LAMFA, CNRS UMR 7352\\
Universit\'e de Picardie Jules Verne\\
UFR des Sciences\\
33, rue Saint-Leu\\
80039 Amiens Cedex 1, 
France\\
\href{mailto:fabien.durand@u-picardie.fr}{\tt fabien.durand@u-picardie.fr}\\
\ \\
Julien Leroy\\
Department of Mathematics\\
University of Li\`ege\\
Grande Traverse 12 (B37),\\
B-4000 Li\`ege, Belgium.\\
and\\
LAMFA, CNRS UMR 7352\\
Universit\'e de Picardie Jules Verne\\
UFR des Sciences\\
33, rue Saint-Leu\\
80039 Amiens Cedex 1, 
France\\
\href{mailto:j.leroy@ulg.ac.be}{\tt j.leroy@ulg.ac.be}\\
\ \\
Gw\'ena\"el Richomme\\
Universit\'e Paul-Val\'ery Montpellier 3\\
UFR IV, Dpt MIAp, Case J11,\\
Route de Mende,\\
34199 Montpellier Cedex 5, France\\
and\\
LIRMM (CNRS, Univ. Montpellier 2) - UMR 5506 - CC 477,\\
161 rue Ada, 34095,\\ Montpellier Cedex 5, France\\
\href{mailto:gwenael.richomme@lirmm.fr}{\tt gwenael.richomme@lirmm.fr}\\
\end{center}

\vskip .2 in
\begin{abstract}
The $S$-adic conjecture claims that there exists a condition $C$ such that a sequence has a sub-linear complexity if and only if it is an $S$-adic sequence satisfying Condition $C$ for some finite set $S$ of morphisms. 
We present an overview of the factor complexity of $S$-adic sequences and we give some examples that either illustrate some interesting properties or that are counter-examples to what could be believed to be ``a good Condition $C$''. 
\end{abstract}

\section{Introduction}

A usual tool in the study of a sequence (or an infinite word) $\mathbf{w}$ over a finite alphabet $A$ is the complexity function $p_\mathbf{w}$ (or simply $p$) that counts the number of factors of each length $n$ occurring in the sequence, \textit{i.e.}, $p_{\mathbf{w}}(n) = {\rm Card}(\{u \in A^* \mid |u|=n \ \text{and}\ \exists x \in A^*, \mathbf{y} \in A^{\mathbb{N}}: \mathbf{w} = xu\mathbf{y}\})$ (see Chapter 4 of~\cite{CANT} for a survey on complexity function). The set of factors of length $n$ of $\mathbf{w}$ is denoted by ${\rm Fac}_n(\mathbf{w})$ and ${\rm Fac}(\mathbf{w}) = \bigcup_{n \in \mathbb{N}} {\rm Fac}_n(\mathbf{w})$. The complexity function $p_{\mathbf{w}}$ is clearly bounded by $({\rm Card}(A))^n, n \in \mathbb{N}$, but not any function can be a complexity function. As an example, it is well known (see~\cite{Morse-Hedlund}) that either the sequence is ultimately periodic (and then $p_\mathbf{w}(n)$ is ultimately constant), or its complexity function grows at least like $n+1$. Non-periodic sequences with minimal complexity $p_\mathbf{w}(n) = n+1$ for all $n$ exist. They are called \textit{Sturmian sequences} and a large bibliography is devoted to them (see Chapter 2 of~\cite{Lothaire} and Chapter 6 of~\cite{Pytheas-Fogg} for surveys on these sequences). 

There is a huge literature about sequences with a low complexity. Indeed,  see for instance~\cite{Aberkane,Adamczewski,allouche_survey,Arnoux-Rauzy,Cassaigne_big_thm,Cassaigne_resume,chacon,ferenczi_survey,Glen-Justin,koskas,Rote}. By ``low complexity'' we usually mean that ``the complexity is bounded by a linear function''. Moreover, many well-known families of sequences can also be indefinitely desubstituted with a finite number of morphisms. Formally, an \textit{$S$-adic sequence} is defined as follows. Let $\mathbf{w}$ be a sequence over a finite alphabet $A$. If $S$ is a set of morphisms (possibly infinite), an \textit{$S$-adic representation} of $\mathbf{w}$ is given by a sequence $(\sigma_n: A_{n+1}^* \rightarrow A_n^*)_{n \in \mathbb{N}}$ of morphisms in $S$ and a sequence $(a_n)_{n \in \mathbb{N}}$ of letters, $a_i \in A_i$ for all $i$, such that\footnote{{The topology over $A^{\mathbb{N}}$ is the classical product topology of the discrete topology over $A$.}} $A_0 = A$, $\mathbf{w} = \lim_{n \rightarrow +\infty} \sigma_0 \sigma_1 \cdots \sigma_n (a_{n+1}^{\omega})$ {and $\lim_{n \rightarrow +\infty} |\sigma_0 \sigma_1 \cdots \sigma_n (a_{n+1})|=+\infty$, where $a_{n+1}^{\omega}$ is the sequence only composed of occurrences of $a_{n+1}$}. The sequence $(\sigma_n)_{n \in \mathbb{N}} \in S^{\mathbb{N}}$ is the \textit{directive word} of the representation. In the sequel, we will say that a sequence $\mathbf{w}$ is \textit{$S$-adic} if there exists a set $S$ of morphisms such that $\mathbf{w}$ admits an $S$-adic representation. Observe that in most cases, we deal with finite sets $S$ of morphisms.

An open problem is to determine the link between being an $S$-adic sequence and having a sub-linear complexity (see~\cite{Arnoux-Rauzy,Ferenczi,Leroy}). This problem is called the \textit{$S$-adic conjecture}.
\begin{conjecture}[$S$-adic conjecture]
There exists a condition $C$ such that a sequence has a sub-linear complexity if and only if it is an $S$-adic sequence satisfying Condition $C$ for some finite set $S$ of morphisms.
\end{conjecture}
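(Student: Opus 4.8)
The plan must begin with an honest caveat: the statement to be established is the $S$-adic conjecture itself, and no proof is currently known—this is precisely why it carries the label ``conjecture.'' What I would attempt, therefore, is not a complete proof but an analysis that isolates which half is within reach and which half is the genuine obstacle.

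First I would dispose of the ``only if'' direction, namely that sub-linear complexity forces a finite $S$-adic representation for \emph{some} condition. This part is essentially known: one desubstitutes a sequence $\mathbf{w}$ with $p_{\mathbf{w}}(n) \le Kn$ by repeatedly applying a derivation procedure (via return words, or via reductions of the Rauzy graphs). The key quantitative input is that the collection of distinct derived sequences arising from a word of sub-linear complexity is finite, so the induced morphisms $\sigma_n$ may be taken from a finite set $S$. Extracting from this construction the data a candidate $C$ should record—boundedness of the alphabet sizes $\mathrm{Card}(A_n)$ together with a controlled growth rate of $|\sigma_0 \sigma_1 \cdots \sigma_n(a_{n+1})|$—is then routine bookkeeping.

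The hard direction is the converse, and this is where the conjecture genuinely lives. A finite set $S$ can generate $S$-adic sequences of complexity that is not sub-linear at all, even exponential, so the condition $C$ must do real work: it has to exclude every directive word $(\sigma_n)_{n}$ that, despite the finiteness of $S$, inflates the complexity. The strategy would be to estimate $p_{\mathbf{w}}(n)$ in terms of how the factors of $\sigma_0 \sigma_1 \cdots \sigma_n(a_{n+1})$ proliferate through the \emph{bispecial} factors created at each desubstitution level, and then to phrase $C$ as a control on this creation—bounding the number of bispecial factors, or the imbalance they introduce—so that telescoping the per-level contributions yields a linear bound on $p_{\mathbf{w}}$.

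I expect the main obstacle to be exactly the formulation of $C$: any condition natural enough to make the complexity estimate close tends to be either too weak, admitting $S$-adic sequences of super-linear complexity, or too strong, excluding genuine sub-linear sequences. Indeed, the examples assembled in the remainder of this paper are built precisely to defeat the most plausible candidates for $C$, which is strong evidence that the telescoping bound above cannot be sealed by any of the obvious conditions. Settling the conjecture would require a condition subtle enough to thread between these counterexamples, and producing such a condition—rather than merely verifying a proposed one—is the crux that remains open.
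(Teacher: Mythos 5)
You are right on the central point: this statement is the paper's titular conjecture, the paper contains no proof of it, and its entire content is a battery of examples (Proposition~\ref{prop: morse+n^2}, Proposition~\ref{prop: n^2}, Example~\ref{ex: beta M}, Example~\ref{ex: plein de mots de retours}) built to eliminate naive candidates for Condition~$C$. Declining to produce a proof and instead mapping where the difficulty sits is the only defensible reading, and your diagnosis --- that the crux is the \emph{formulation} of $C$, squeezed between conditions too weak and too strong --- is exactly the paper's own stance.

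Two concrete corrections to your sketch, however. First, the ``only if'' direction is even emptier than you present it: by Cassaigne's construction (Proposition~\ref{prop:Cassaigne adique}) there is a single finite set $S$ of morphisms over $A \cup \{l\}$ such that \emph{every} sequence over $A$ is $S$-adic, so ``sub-linear complexity implies finite-$S$-adic'' holds vacuously, with no desubstitution procedure or bookkeeping needed; this vacuity is precisely why all the content of the conjecture lives in $C$, and your framing of that half as a known-but-nontrivial theorem misplaces the weight. Second, your stated ``key quantitative input'' is wrong: finiteness of the set of derived sequences does \emph{not} follow from sub-linear complexity --- by Durand's characterization it is equivalent to the sequence being primitive substitutive, a strictly smaller class. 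What is actually known (Ferenczi, cited in the paper, see also Leroy and Leroy--Richomme) is that any \emph{uniformly recurrent} sequence with sub-linear complexity is everywhere growing $S$-adic with ${\rm Card}(S) < +\infty$, obtained by bounding the shapes of Rauzy graphs, not by finiteness of derived sequences; moreover the return-word machinery you invoke requires recurrence in the first place, so it cannot be applied to arbitrary sequences of sub-linear complexity, and Theorem~\ref{thm: characterization S-adicity minimal} shows that for uniformly recurrent sequences \emph{without} sub-linear complexity the weakly primitive representation forces $S$ infinite. With those attributions repaired, your assessment of the open problem aligns with the paper's.
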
 
It is clear that we cannot avoid considering a particular condition since there exist some purely substitutive sequences with a quadratic complexity.

In this paper, we present an overview of the factor complexity of $S$-adic sequences and we give some examples that either illustrate some interesting properties or that are counter-examples to what could be believed to be ``a good Condition $C$''. 

In all what follows, we consider that alphabets are finite subsets of $\mathbb{N}$ and if $\sigma : A^* \to B^*$ is a morphism with $A = \{0,1,\dots,k\}$, we write $\sigma = [\sigma(0), \dots,\sigma(k)]$. The following example is classical when considering $S$-adic sequences.

\begin{example}
\label{ex: sturmian}
Let us define the four morphisms $R_0$, $R_1$, $L_0$ and $L_1$ over $\{0,1\}$ by $R_0 = [0,10]$, $R_1 = [01,1]$, $L_0 = [0,01]$ and $L_1 = [10,1]$. Since the work of Morse and Hedlund~\cite{Morse-Hedlund}, it is well known that for any Sturmian sequence $\mathbf{w}$, there is a sequence $(k_n)_{n \in \mathbb{N}}$ of integers such that
\begin{equation}
\label{eq: sturmian}
	\mathbf{w} = \lim_{n \to +\infty} L_0^{k_0} R_0^{k_1} L_1^{k_2} R_1^{k_3} L_0^{k_4} R_0^{k_5} \cdots L_1^{k_{4n+2}} R_1^{k_{4n+3}} (0^{\omega}).
\end{equation}
\end{example}

It is important to notice that, when we talk about an $S$-adic sequence, the corresponding directive word $(\sigma_n)_{n \in \mathbb{N}} \in S^{\mathbb{N}}$ is always implicit (even when it is not unique). Indeed, for a given set $S$ of morphisms, we will see that two distinct $S$-adic sequences can have different properties depending on their respective directive words.

\section{Comparison between morphic and $S$-adic sequences}

The aim of this section is to compare morphic sequences with $S$-adic sequences. In particular, we show that the factor complexity of morphic sequences is rather restricted and can be dependent on some combinatorial criteria although it is not the case at all for $S$-adic sequences.

\subsection{Morphic and purely morphic sequences}

\textit{Purely morphic sequences} correspond to $S$-adic sequences with ${\rm Card}(S)=1$. If $S=\{\sigma\}$, we then have $\sigma_0 \sigma_1 \sigma_2 \cdots = \sigma \sigma \sigma \cdots = \sigma^{\omega}$. In that case, the complexity functions that can occur have been completely determined by Pansiot in~\cite{Pansiot}. Indeed, he proved that for purely morphic sequences $\mathbf{w} = \sigma^{\omega}(a) = \lim_{n \to +\infty} \sigma^n(a^{\omega})$ with $\sigma$ \textit{non-erasing} (\textit{i.e.}, $\sigma(b)$ is not the empty word for all letters $b$), the complexity function $p_{\mathbf{w}}(n)$ can have only five asymptotic behaviors that are\footnote{$f(n) \in \Theta(g(n))$ if $\exists C_1, C_2 >0, n_0 \ \forall n \geq n_0 \ |C_1 g(n)| \leq |f(n)| \leq |C_2 g(n)|$.} $\Theta(1)$, $\Theta(n)$, $\Theta(n \log n)$, $\Theta(n \log \log n)$ and $\Theta(n^2)$. Moreover, when the sequence $\mathbf{w}$ is aperiodic, Morse and Hedlund proved that its complexity function cannot be $\Theta(1)$ (see~\cite{Morse-Hedlund}). Then, Pansiot proved that the class of complexity of the sequence only depends on the growth rate of the length of the images.

\begin{definition}
\label{def: quasi-uniforme, poly div, expo div}
Recall that a morphism $\sigma: A^* \to A^*$ is said to be \textit{everywhere growing} if it does not admit \textit{bounded letter}, i.e., letter $b$ such that $\lim_{n \to +\infty}|\sigma^n(b)|<+\infty$. We let $A_{\mathfrak{B},\sigma}$ (or $A_{\mathfrak{B}}$ when no confusion is possible) denote the set of bounded letters of $\sigma$. By opposition, a non-bounded letter is called a \textit{growing} letter. Since for all letters $a$, we have $|\sigma^n(a)| \in \Theta(n^{\alpha_a} \beta_a^n)$ for some $\alpha_a$ in $\mathbb{N}$ and $\beta_a \geq 1$ (see~\cite{L_system}), any everywhere growing morphism satisfies exactly one of the following three definitions:
\begin{enumerate} 
	\item	a morphism $\sigma: A^* \to A^*$ is \textit{quasi-uniform} if there exists $\beta \geq 1$ such that for all letters $a \in A$, $|\sigma^n(a)| \in \Theta(\beta^n)$;
	
	\item	a morphism $\sigma: A^* \to A^*$ is \textit{polynomially diverging} if there exists $\beta >1$ and a function $\alpha: A \to \mathbb{N}$, $\alpha \neq 0$, such that for all letters $a \in A$, $|\sigma^n(a)| \in \Theta(n^{\alpha(a)} \beta^n)$;
	
	\item 	a morphism $\sigma: A^* \to A^*$ is \textit{exponentially diverging} if there exist $a_1,a_2 \in A$, $\alpha_1,\alpha_2 \in \mathbb{N}$ and $\beta_1,\beta_2 >1$ with $\beta_1 \neq \beta_2$ such that for each $i \in \{1,2\}$, $|\sigma^n(a_i)| \in \Theta(n^{\alpha_i} \beta_i^n)$.
\end{enumerate}
\end{definition}

\begin{theorem}[Pansiot~\cite{Pansiot}]
\label{thm: pansiot}
Let $\mathbf{w} = \sigma^{\omega}(a)$ be a purely morphic sequence with $\sigma$ non-erasing.
\begin{enumerate}
	\item 	If $\sigma$ is everywhere growing and
		\begin{enumerate}[i.]
			\item quasi-uniform, then\footnote{$f(n) \in O(g(n))$ if $\exists C >0, n_0 \ \forall n \geq n_0 \ |f(n)| \leq |C g(n)|$.} $p_{\mathbf{w}}(n) \in O(n)$;
			\item polynomially diverging, then $p_{\mathbf{w}}(n) \in \Theta(n \log \log n)$;
			\item exponentially diverging, then $p_{\mathbf{w}}(n) \in \Theta(n \log n)$.
		\end{enumerate}
	\item If $\sigma$ is not everywhere growing and if there are infinitely many factors of $\mathbf{w}$ in $A_{\mathfrak{B}}^*$, then $p_{\mathbf{w}}(n)=\Theta(n^2)$.
	\item If $\sigma$ is not everywhere growing and if there are only finitely many factors of $\mathbf{w}$ in $A_{\mathfrak{B}}^*$, then there exists a purely morphic sequence $\tau^{\omega}(b)$ with $\tau: B^* \to B^*$ everywhere growing and a non-erasing morphism $\lambda: B^* \to A^*$ such that $\mathbf{w} = \lambda (\tau^{\omega}(b))$. In this case, we have $p_{\mathbf{w}}(n) \in \Theta(p_{\tau^{\omega}(b)}(n))$.  
\end{enumerate}
\end{theorem}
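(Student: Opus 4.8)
The plan is to treat the everywhere growing case as the core and to reduce the remaining situations to it or to a direct counting argument. I would begin with part~(3), since it reorganizes the non-growing situation into a growing one: if only finitely many factors of $\mathbf{w}$ lie in $A_{\mathfrak{B}}^*$, then the maximal runs of bounded letters occurring in $\mathbf{w}$ have uniformly bounded length. One then introduces a new alphabet $B$ whose letters code a growing letter together with the (bounded-length) block of bounded letters preceding it, defines $\tau$ on $B$ as the induced action of $\sigma$, and lets $\lambda\colon B^*\to A^*$ be the decoding morphism; by construction $\tau$ is everywhere growing, $\lambda$ is non-erasing, and $\mathbf{w}=\lambda(\tau^\omega(b))$. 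Because $\lambda$ has bounded image lengths and the decoding is essentially a relabeling, each factor of $\mathbf{w}$ of length $n$ comes from a bounded number of factors of $\tau^\omega(b)$ of comparable length, and conversely, which yields $p_{\mathbf{w}}(n)\in\Theta(p_{\tau^\omega(b)}(n))$.

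For part~(2) I would argue directly. The hypothesis that infinitely many factors lie in $A_{\mathfrak{B}}^*$ means arbitrarily long runs of bounded letters occur; tracing these back through $\sigma$ produces a growing letter whose iterates contain blocks $x\,u\,y$ with $x,y$ growing and $u\in A_{\mathfrak{B}}^*$ of unbounded length. For the lower bound $p_{\mathbf{w}}(n)\in\Omega(n^2)$, I would exhibit, for each length $n$, on the order of $n$ distinct bounded blocks, each occurring in on the order of $n$ distinct surrounding contexts, so that the number of length-$n$ factors grows at least quadratically. The matching upper bound $p_{\mathbf{w}}(n)\in O(n^2)$ follows by observing that a length-$n$ factor is determined by its desubstituted growing skeleton, contributing a linear number of possibilities, times the choice of how the bounded gaps are filled, again a linear number.

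The heart is part~(1). The key structural tool is a desubstitution (recognizability) lemma for everywhere growing morphisms: every sufficiently long factor $w$ of $\mathbf{w}$ admits an essentially unique decomposition $w=s\,\sigma(v)\,p$, where $s$ is a proper suffix of some $\sigma(c)$, $p$ a proper prefix of some $\sigma(d)$, and $v\in{\rm Fac}(\mathbf{w})$, the ambiguity in $s,p,v$ being bounded independently of $w$. I would then count using left-special factors, recalling that $p_{\mathbf{w}}(n+1)-p_{\mathbf{w}}(n)$ equals the sum over left-special factors $u$ of length $n$ of $(\ell(u)-1)$, where $\ell(u)$ is the number of left extensions of $u$, and hence is bounded by a constant times the number of such factors. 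The decomposition lets one trace a left-special factor of length $n$ back to a left-special factor of length roughly $n/\beta$, where $\beta$ is the relevant growth base, a new special factor being ``created'' only at a substitution boundary and only in bounded multiplicity. Summing the increments over the levels of desubstitution reduces the whole problem to estimating how the image lengths $|\sigma^n(a)|\in\Theta(n^{\alpha_a}\beta_a^n)$ spread across the letters.

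This last estimate is where the three regimes and the main difficulty lie. In the quasi-uniform case all bases and exponents agree, so desubstitution contracts lengths by a fixed factor, the number of levels reaching length $n$ is $\Theta(\log n)$, and only $O(1)$ special factors survive per level, giving $p_{\mathbf{w}}(n)\in O(n)$. The delicate point is separating the two diverging regimes. When the morphism is exponentially diverging, the ratio of the largest to the smallest relevant image length at level $k$ grows like $(\beta_1/\beta_2)^k$, that is, exponentially in $k$, so $\Theta(\log n)$ distinct scales each contribute $\Theta(n)$ factors and $p_{\mathbf{w}}(n)\in\Theta(n\log n)$; when it is polynomially diverging, the bases coincide and that ratio grows only polynomially in $k$, which collapses one logarithm and leaves $p_{\mathbf{w}}(n)\in\Theta(n\log\log n)$. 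I expect the hard part to be making this ``$\log$ versus $\log\log$'' dichotomy rigorous: one must track not merely the dominant base but the polynomial corrections $n^{\alpha(a)}$, show precisely how they govern the number of substitution boundaries at which new special factors appear, and supply matching lower bounds exhibiting genuinely distinct special factors at each scale. For all of this I would follow the analysis of Pansiot~\cite{Pansiot}.
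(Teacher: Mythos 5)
The first thing to note is that the paper contains no proof of this statement: Theorem~\ref{thm: pansiot} is quoted as background and attributed to Pansiot~\cite{Pansiot}, so there is no ``paper proof'' to match your argument against; your proposal can only be measured against the known proof in the cited literature. Measured that way, your architecture is the standard and correct one: part~(3) by recoding maximal bounded-letter blocks (finitely many, by hypothesis) into a new alphabet on which $\sigma$ induces an everywhere growing $\tau$; part~(2) by a direct quadratic count; and part~(1) by locating every length-$n$ factor inside images $\sigma^k(ab)$ and counting how many levels $k$ are ``active'' for a given $n$ --- one scale in the quasi-uniform case, $\Theta(\log\log n)$ scales in the polynomially diverging case (the level at which letter $a$ reaches length $n$ is roughly $\log_\beta n - \alpha(a)\log_\beta\log_\beta n$, so the spread across letters is $\Theta(\log\log n)$), and $\Theta(\log n)$ scales in the exponentially diverging case. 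Your ``collapses one logarithm'' heuristic is exactly the right quantitative mechanism separating 1.ii from 1.iii.

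There are, however, genuine gaps. First, your key structural tool --- an essentially unique decomposition $w = s\,\sigma(v)\,p$ with ambiguity bounded independently of $w$, asserted for \emph{every} everywhere growing $\sigma$ --- is false in that generality: take $\sigma$ non-injective (e.g., two letters with equal images) or a fixed point that is periodic; recognizability in this strong sense needs hypotheses such as primitivity and aperiodicity. For the upper bounds you do not need it (the weaker synchronization ``every factor of length $n$ occurs in some $\sigma^k(ab)$ with $\min_c|\sigma^k(c)|\geq n$'' suffices and requires no uniqueness), but your special-factor tracing, with ``bounded creation at substitution boundaries,'' silently relies on it. Second, and more seriously, the lower bounds in 1.ii and 1.iii --- actually exhibiting $\Omega(n\log\log n)$, respectively $\Omega(n\log n)$, distinct factors at each active scale --- together with the $\Omega(n^2)$ bound in part~(2), constitute the real content of the theorem, and your proposal defers precisely these steps with ``I would follow the analysis of Pansiot.'' Since the statement being proved \emph{is} Pansiot's theorem, that deferral is circular: a complete proof must construct the distinct (special) factors scale by scale, e.g., by comparing occurrences of $\sigma^k(a_1)$-blocks and $\sigma^k(a_2)$-blocks of incommensurate lengths, and in part~(2) must argue that run lengths of bounded letters grow linearly in the iteration depth (as for $\gamma=[001,1]$, where $\gamma^{n+1}(0)=\gamma^n(0)\gamma^n(0)1$) so that a window of length $n$ meets $\Theta(n)$ distinct run lengths in $\Theta(n)$ alignments. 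A smaller repair: in part~(3) the induced morphism on blocks may only be well defined after replacing $\sigma$ by a power and restricting to blocks actually occurring, and the two-sided estimate $p_{\mathbf{w}}(n)\in\Theta(p_{\tau^{\omega}(b)}(n))$ needs more than ``relabeling,'' since $\lambda$ need not be injective as a morphism; one direction is Proposition~\ref{prop: cassaigne nicolas action morphisme sur une suite}, the other uses the bounded length of blocks and the fact that growing letters mark the block decomposition uniquely.
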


One could regret that Theorem~\ref{thm: pansiot} only holds for non-erasing morphisms. However, the following result states that when the morphism is erasing, one can see the obtained purely morphic sequence as a \textit{morphic sequence} (\textit{i.e.}, an image under a morphism of a purely morphic sequence) with non-erasing morphisms. The result is due to Cobham~\cite{Cobham_hartmanis} and has been recovered later by Pansiot~\cite{Pansiot_hierarchical}. It can also be found in Cassaigne and Nicolas's survey~\cite{Cassaigne-Nicolas}. 

\begin{theorem}[Cobham~\cite{Cobham_hartmanis} and Pansiot~\cite{Pansiot_hierarchical}]
\label{thm: cobham-erasing}
If $\mathbf{w}$ is a morphic sequence, it is the image under a letter-to-letter morphism of a purely morphic word $\sigma^{\omega}(a)$ with $\sigma$ a non-erasing morphism.
\end{theorem}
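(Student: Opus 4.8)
The plan is to start from the definition of a morphic sequence, namely $\mathbf{w} = g(f^{\omega}(a))$ where $f : B^* \to B^*$ is a morphism prolongable on the letter $a$ (so that $f^{\omega}(a)$ is a genuine infinite fixed point) and $g : B^* \to A^*$ is an arbitrary morphism, and to repair two separate defects: the substitution $f$ may erase letters, and the decoding $g$ need not be letter-to-letter. I would treat these as two reductions and, crucially, arrange them so that neither reintroduces the other's defect. Throughout I would freely pass to a power $f^N$ of $f$ (which has the same fixed point $f^{\omega}(a)$ and is still prolongable on $a$), choosing $N$ at least the cardinality of $B$, so that every \emph{mortal} letter $b$ (one with $f^k(b)=\varepsilon$ for some $k$) already satisfies $f^N(b)=\varepsilon$; this single-step mortality is what will make the whole argument manageable.

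First I would handle the erasing of the substitution. Call a letter \emph{immortal} if it is not mortal. Since the image of a mortal letter consists only of mortal letters, the image of an immortal letter always contains an immortal letter, so deleting the mortal letters from the images defines a genuinely \emph{non-erasing} morphism $\hat f$ on the immortal alphabet whose fixed point is exactly the subsequence of immortal letters of $\mathbf{u}=f^{\omega}(a)$. To recover $\mathbf{u}$ itself one must reinsert, between consecutive immortal letters, the finite mortal words that $f$ produces. Here the power trick pays off: applying $f^N$ annihilates every purely mortal block, so in the identity $f^N(\mathbf{u})=\mathbf{u}$ each maximal mortal factor is determined by \emph{bounded context}, namely by the pair of immortal letters bracketing it (the trailing mortal tail of one $f^N$-image concatenated with the leading mortal head of the next). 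As there are only finitely many such words, I can enlarge the immortal alphabet so that each new symbol records an immortal letter together with the neighbouring mortal decoration it carries, obtaining a non-erasing morphism together with a decoding (still a possibly non-letter-to-letter morphism) that reproduces $\mathbf{w}$.

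The second reduction turns the decoding into a coding by the classical product-alphabet device: indexing the alphabet by pairs $(b,i)$ with $1 \le i \le |g(b)|$ and letting the coding send $(b,i)$ to the $i$-th letter of $g(b)$, one transports the non-erasing substitution onto these pairs so that its fixed point, read symbol by symbol, is precisely $\mathbf{w}=g(\mathbf{u})$. The delicate point, and the step I expect to be the main obstacle, is \emph{compatibility}: the product-alphabet construction can turn a non-erasing morphism into an erasing one (when an image block is shorter than the block it replaces), while the erasing-removal step can turn a coding back into a genuine morphism. I would therefore carry out both reductions inside a single enlarged alphabet whose symbols simultaneously record an immortal letter, its bounded mortal decoration, and a position index within the $g$-image of the decorated block, defining one non-erasing morphism $\sigma$ and one letter-to-letter coding at the same time; verifying that $\sigma$ is prolongable on the designated initial symbol and that its coded fixed point is the original infinite sequence $\mathbf{w}$ is then a bookkeeping matter, the essential enabling idea being the reduction to bounded mortal context afforded by replacing $f$ with $f^N$.
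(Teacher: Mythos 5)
The paper does not actually prove this theorem (it cites Cobham, Pansiot, and the Cassaigne--Nicolas survey), so your sketch must be measured against the classical argument, and there it has two genuine gaps. The first is in your mortal-letter reduction: the claim that each maximal mortal factor of $\mathbf{u}=f^{\omega}(a)$ is ``determined by the pair of immortal letters bracketing it'' is false. Take $f(a)=ab_1b_2$, $f(b_1)=cm_1$, $f(b_2)=cm_2$, $f(c)=c$, $f(m_1)=f(m_2)=\varepsilon$: in $\mathbf{u}$ the bracketing pair $(c,c)$ encloses $m_1$, $m_2$ and $\varepsilon$ at different occurrences. You partially sidestep this by carrying the decoration inside the new symbol, but then the real difficulty appears one step later and your sketch does not address it: the image of a decorated symbol $(c,m)$ under the induced morphism is \emph{not determined by that symbol}. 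Writing $f^N(c)=h_0c_1m_1\cdots c_km_k$ with $c_j$ immortal, the decoration of $c_k$ at the next level is $m_k$ concatenated with the mortal head $h_0'$ of the image of the \emph{next} immortal letter at the current level --- a look-ahead dependency. The standard repair (present in Cobham's and Pansiot's proofs) is to also record the successor, i.e., to work with a sliding-block alphabet of the form (immortal letter, decoration, next immortal letter); without some such device your ``bookkeeping matter'' cannot be carried out.

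The second gap is the one you flag yourself but do not resolve, and merging everything into one enlarged alphabet does not resolve it. The obstruction to the product-alphabet step is not mortality but \emph{bounded immortal} letters, which your first reduction leaves untouched: if $f(b)=c$, $f(c)=c$, $g(b)=000$, $g(c)=0$, and $b$ occurs infinitely often in $\mathbf{u}$ (easily arranged, e.g., $f(e)=eb$ with $e$ growing), then $|g(f^N(b))|=1<|g(b)|$ for \emph{every} power $N$, so any morphism distributing the annotated image of $f^N(b)$ over the symbols $(b,1),(b,2),(b,3)$ must erase some of them; adding mortal decorations to the symbols changes nothing, since the decorations die under $f^N$ while $|g(b)|$ does not shrink. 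The classical proof needs two further moves absent from your plan: first delete the useless letters $b$ with $g(f^n(b))=\varepsilon$ for all $n$ (so that the decoding genuinely reproduces $\mathbf{w}$ and no symbol has empty future), and then absorb the maximal blocks of bounded letters into an adjacent growing letter --- another bounded-context regrouping, with the same look-ahead subtlety as above --- so that \emph{every} symbol of the final alphabet is growing. Only then does a single power $N$ give $|g(f^N(b))|\geq|g(b)|$ uniformly, making the product construction non-erasing and the outer map a coding. Your enabling idea (powers of $f$ killing mortal letters in one step) is correct and is indeed part of the classical proof, but as written the two central constructions of the theorem are not carried out.
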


Theorems~\ref{thm: pansiot} and~\ref{thm: cobham-erasing} show that to compute the complexity function of a purely morphic sequence, it is sometimes necessary to see it as a morphic sequence. It is therefore natural to be interested in the complexity function of such sequences. By definition, it is obvious that any purely morphic sequence is morphic. The converse is known to be false since at least 1980, when Berstel proved that the Arshon word is morphic but not purely morphic \cite{Berstel1980} (other examples can be found in the litterature, as for instance, the result by Seebold showing that the unique, up to letter permutation, binary overlap-free word which is a fixed point of a morphism is the Thue-Morse word \cite{Seebold1985}).

Moreover, not only the class of morphic sequences strictly contains the class of purely morphic sequences, but also the asymptotic behaviors of the complexity functions are different. Indeed, Example~\ref{ex: complexite n^{3/2}} shows that the classes of complexity given by Pansiot are not sufficient anymore.

\begin{example}[Deviatov~\cite{Deviatov}]
\label{ex: complexite n^{3/2}}
Let $\mathbf{w}$ be the morphic sequence $\tau(\sigma^{\omega}(0))$ where $\sigma$ and $\tau$ are defined by
\begin{eqnarray*}
	\sigma :	\begin{cases}
					0 \mapsto 01 \\
					1 \mapsto 12 \\
					2 \mapsto 23 \\
					3 \mapsto 3
				\end{cases}
	&
	\text{and}
	&
	\sigma :	\begin{cases}
					0 \mapsto 0 \\
					1 \mapsto 1 \\
					2 \mapsto 2 \\
					3 \mapsto 2
				\end{cases}
\end{eqnarray*}
We have $p_{\mathbf{w}} \in \Theta(n \sqrt{n})$.
\end{example}

Other examples can be found in~\cite{Pansiot_subword}. Indeed, for all $k \geq 1$, Pansiot explicitly built a morphic sequence $\mathbf{w}$ whose complexity function satisfies $p_{\mathbf{w}}(n) \in \Theta(n \sqrt[k\,]{n})$. Consequently, the number of different asymptotic behaviors for the complexity function of morphic sequences is at least countably infinite. However, the behaviors $\Theta(n \sqrt[k\,]{n})$ seem to be the only new behaviors with respect to purely morphic sequences. Indeed, in~\cite{Deviatov} Deviatov proved the next result and conjectured an equivalent result of Pansiot's Theorem (Theorem~\ref{thm: pansiot}) for morphic sequences.

\begin{theorem}[Deviatov~\cite{Deviatov}]
\label{thm: Deviatov}
Let $\mathbf{w}$ be a morphic sequence. Then, either $p_{\mathbf{w}}(n) \in \Theta(n^{1+\frac{1}{k}})$ for some $k \in \mathbb{N}^*$, or $p_{\mathbf{w}}(n) \in O(n \log n)$. 
\end{theorem}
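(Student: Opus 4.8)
The plan is to combine the structural reduction of Theorem~\ref{thm: cobham-erasing} with Pansiot's classification (Theorem~\ref{thm: pansiot}), and then to isolate the single case in which the new exponents $1 + 1/k$ can appear. First I would use Theorem~\ref{thm: cobham-erasing} to write $\mathbf{w} = \lambda(\mathbf{u})$ with $\mathbf{u} = \sigma^{\omega}(a)$, $\sigma$ non-erasing, and $\lambda$ a letter-to-letter morphism. The elementary but crucial observation is that a letter-to-letter morphism cannot increase complexity: every factor of $\mathbf{w}$ of length $n$ is the $\lambda$-image of at least one factor of $\mathbf{u}$ of the same length, so the induced map ${\rm Fac}_n(\mathbf{u}) \to {\rm Fac}_n(\mathbf{w})$ is surjective and $p_{\mathbf{w}}(n) \le p_{\mathbf{u}}(n)$ for all $n$. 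In particular, if $\mathbf{w}$ is ultimately periodic then $p_{\mathbf{w}}$ is bounded, hence $O(n\log n)$, and the second alternative holds trivially.

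Next I would split according to the trichotomy of Definition~\ref{def: quasi-uniforme, poly div, expo div} applied to $\sigma$. If $\sigma$ is everywhere growing, then by Theorem~\ref{thm: pansiot}(1) we have $p_{\mathbf{u}}(n) \in O(n\log n)$ in each of the three subcases, since the bounds $O(n)$, $\Theta(n\log\log n)$ and $\Theta(n\log n)$ are all $O(n\log n)$; hence $p_{\mathbf{w}}(n) \le p_{\mathbf{u}}(n) \in O(n\log n)$. If $\sigma$ is not everywhere growing but only finitely many factors of $\mathbf{u}$ lie in $A_{\mathfrak{B}}^*$, then Theorem~\ref{thm: pansiot}(3) rewrites $\mathbf{u}$ as a morphic image $\mu(\tau^{\omega}(b))$ of an everywhere-growing purely morphic word, so that $p_{\mathbf{u}}(n) \in \Theta(p_{\tau^{\omega}(b)}(n)) \subseteq O(n\log n)$ by the previous subcase; again $p_{\mathbf{w}}(n) \in O(n\log n)$. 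Thus the whole content of the theorem is concentrated in the remaining case, where the second alternative may fail.

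The \emph{hard case} is when $\sigma$ is not everywhere growing and infinitely many factors of $\mathbf{u}$ lie in $A_{\mathfrak{B}}^*$, so that Theorem~\ref{thm: pansiot}(2) gives $p_{\mathbf{u}}(n) = \Theta(n^2)$: here $\lambda$ may genuinely collapse the quadratic complexity of $\mathbf{u}$ down to an intermediate behavior. To analyze this I would count right-special factors through $p_{\mathbf{w}}(n+1) - p_{\mathbf{w}}(n) = \sum_{s} (d^+(s) - 1)$, summed over right-special factors $s$ of $\mathbf{w}$ of length $n$. The quadratic growth of $p_{\mathbf{u}}$ reflects arbitrarily long factors of $\mathbf{u}$ made only of bounded letters, which recur at exponentially many positions of the self-similar decomposition and contribute on the order of $m$ right-special factors at each length $m \le n$, summing to $\Theta(n^2)$. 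Applying $\lambda$ identifies many of these bounded factors, so the decisive quantity becomes the number of distinct $\lambda$-images of bounded factors of each length, as controlled by the growth estimates $|\sigma^n(a)| \in \Theta(n^{\alpha_a}\beta_a^n)$. When this number grows like a fixed fractional power $m^{1/k}$ of the length, the summation $\sum_{m \le n} m^{1/k}$ yields the exponent $1 + 1/k$, with $k$ governed by the maximal polynomial degree $\alpha_a$ among the growing letters that emit the longest bounded blocks.

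The main obstacle, and the technical heart of the argument, is this last step: one must show that the exponential spreading of bounded blocks through the growing skeleton and their identification under $\lambda$ interact rigidly enough that only the exponents $1 + 1/k$ survive, and that no behavior strictly between $\Theta(n\log n)$ and $\Theta(n^2)$ other than $\Theta(n^{1+1/k})$ can occur. This requires precise bookkeeping of special factors organized by the level of the decomposition at which they are created, together with matching upper and lower bounds on the number of distinct $\lambda$-images of bounded factors of each length; a careful normalization of the pair $(\sigma,\lambda)$ and the growth estimates for $|\sigma^n(a)|$ are the essential tools for pinning down the integer $k$ and excluding the intermediate exponents.
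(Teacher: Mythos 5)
Your reduction steps are fine: Theorem~\ref{thm: cobham-erasing} legitimately normalizes $\mathbf{w} = \lambda(\mathbf{u})$ with $\lambda$ letter-to-letter and $\mathbf{u} = \sigma^{\omega}(a)$ purely morphic with $\sigma$ non-erasing, the inequality $p_{\mathbf{w}}(n) \leq p_{\mathbf{u}}(n)$ is correct for letter-to-letter images, and combining this with Theorem~\ref{thm: pansiot} correctly disposes of every case except the one where $\sigma$ is not everywhere growing and infinitely many factors of $\mathbf{u}$ lie in $A_{\mathfrak{B}}^*$. (Note that the paper itself gives no proof of this theorem --- it is quoted from Deviatov --- so the only question is whether your argument stands on its own.)

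It does not, and the gap is exactly where you place it: the ``hard case'' paragraph is a heuristic, not a proof, and that case \emph{is} the theorem. Everything you prove rigorously shows only that a complexity exceeding $O(n\log n)$ forces the Pansiot case (2) configuration; the dichotomy itself --- that the collapsed complexity is either $O(n\log n)$ or exactly $\Theta(n^{1+1/k})$ for an integer $k$, with no intermediate behavior --- is asserted, not derived. Concretely, three things are missing. First, you never prove that the number of distinct $\lambda$-images of bounded-letter blocks of length $m$ grows like $\Theta(m^{1/k})$ for some integer $k$; nothing you cite rules out growth such as $m^{1/k}\log m$ or $m^{\theta}$ with $\theta$ irrational, which would produce complexities strictly between your allowed classes, and excluding these is precisely the content of Deviatov's result. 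Second, the claim that $k$ is ``governed by the maximal polynomial degree $\alpha_a$'' is unsupported: the estimates $|\sigma^n(a)| \in \Theta(n^{\alpha_a}\beta_a^n)$ concern growing letters, whereas the quadratic complexity in case (2) is driven by the bounded blocks, and you give no mechanism linking the two, nor any argument that identifications under $\lambda$ respect this structure (a letter-to-letter $\lambda$ can merge blocks in ways that depend on the arrangement of bounded letters, not just on their count per length). Third, even granting the counting heuristic, the summation $\sum_{m \leq n} m^{1/k}$ gives at best a lower-bound mechanism for special factors; the matching upper bound $p_{\mathbf{w}}(n) \in O(n^{1+1/k})$, needed for the $\Theta$ statement, is nowhere addressed. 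Since you candidly flag this step as ``the technical heart,'' the proposal should be read as a correct reduction of the theorem to its essential difficulty, plus a plausible plan --- Deviatov's actual proof carries out an analysis of this kind, but at a length and level of bookkeeping that your sketch does not approximate --- rather than as a proof.
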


\begin{conjecture}[Deviatov~\cite{Deviatov}]
The complexity function of any morphic sequence only adopts one of the following asymptotic behaviors: $\Theta(1)$, $\Theta(n)$, $\Theta(n \log \log n)$, $\Theta(n \log n)$, $\Theta(n^{1 + \frac{1}{k}})$ for some $k \in \mathbb{N}$.
\end{conjecture}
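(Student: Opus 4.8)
The plan is to bootstrap from the dichotomy already supplied by Theorem~\ref{thm: Deviatov}. That result splits every morphic sequence into two families: those with $p_{\mathbf{w}}(n) \in \Theta(n^{1 + 1/k})$, which already belong to the conjectured list, and those with $p_{\mathbf{w}}(n) \in O(n \log n)$. Thus the whole conjecture reduces to the following refinement, which I would treat as the real statement to prove: if $\mathbf{w}$ is morphic and $p_{\mathbf{w}}(n) \in O(n \log n)$, then $p_{\mathbf{w}}(n)$ is in fact one of $\Theta(1)$, $\Theta(n)$, $\Theta(n \log \log n)$ or $\Theta(n \log n)$, with no intermediate rate allowed.

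First I would normalise the representation. By Theorem~\ref{thm: cobham-erasing}, write $\mathbf{w} = \pi\bigl(\sigma^{\omega}(a)\bigr)$ with $\pi$ letter-to-letter and $\sigma$ non-erasing, and set $\mathbf{u} = \sigma^{\omega}(a)$. Since a letter-to-letter image can only identify factors, never create new ones, one has $p_{\mathbf{w}}(n) \le p_{\mathbf{u}}(n)$ for all $n$, which pulls the problem back onto Pansiot's classification of $\mathbf{u}$ (Theorem~\ref{thm: pansiot}), together with the growth law $|\sigma^n(b)| \in \Theta(n^{\alpha_b} \beta_b^n)$ underlying Definition~\ref{def: quasi-uniforme, poly div, expo div}. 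The case analysis is then driven by the type of $\sigma$. If $\sigma$ is quasi-uniform, then $p_{\mathbf{u}}(n) \in O(n)$, hence $p_{\mathbf{w}}(n) \in O(n)$, and Morse--Hedlund forces $p_{\mathbf{w}}(n) \in \{\Theta(1), \Theta(n)\}$; the non-everywhere-growing case with finitely many bounded factors reduces to this by part~3 of Theorem~\ref{thm: pansiot}. The substantive cases are the polynomially diverging one ($p_{\mathbf{u}}(n) \in \Theta(n \log \log n)$), the exponentially diverging one ($p_{\mathbf{u}}(n) \in \Theta(n \log n)$), and the non-everywhere-growing case with infinitely many bounded factors, where $p_{\mathbf{u}}(n) \in \Theta(n^2)$ but the standing hypothesis $p_{\mathbf{w}}(n) \in O(n \log n)$ says the coding has genuinely collapsed complexity.

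To control the effect of $\pi$ in these cases I would pass to the first-difference function $s(n) = p_{\mathbf{w}}(n+1) - p_{\mathbf{w}}(n)$, which records the weighted right-special factors, and analyse how special and bispecial factors of $\mathbf{w}$ lift through $\pi$ to special factors of $\mathbf{u}$. The scales at which new bispecial factors of $\mathbf{u}$ appear are governed by the exponents $\alpha_b$ and bases $\beta_b$: polynomial divergence produces a nesting accumulating like $\log \log n$, while distinct bases $\beta_1 \ne \beta_2$ produce a nesting like $\log n$. The aim is a renewal-type recursion for $s(n)$, self-similar under desubstitution by $\sigma$, showing that $s(n)$ is asymptotically constant, or $\Theta(\log \log n)$, or $\Theta(\log n)$, and that coding by $\pi$ can only move $s(n)$ down from one of these plateaus to a strictly lower one.

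The main obstacle, and the reason this remains a conjecture rather than a theorem, is precisely this rigidity step: a coding may merge factors in a graded, position-dependent way, and it is not a priori clear that such merging cannot manufacture an intermediate rate such as $\Theta(n \log \log \log n)$ or $\Theta\bigl(n (\log \log n)^{1/2}\bigr)$. Ruling this out uniformly over all letter-to-letter codings $\pi$ demands a robust quantization of the first-difference function that survives arbitrary collapses, and it is exactly this uniform rigidity that has not yet been established. Deviatov's theorem supplies the quantization on the $\Theta(n^{1+1/k})$ side; the content of the conjecture is that the same discreteness persists throughout the entire $O(n \log n)$ regime.
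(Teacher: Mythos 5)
You were asked to prove a statement that the paper itself presents as an \emph{open conjecture} (attributed to Deviatov): the paper contains no proof of it, only the supporting evidence of Theorem~\ref{thm: Deviatov} and the examples realizing the behaviors $\Theta(n^{1+\frac{1}{k}})$. So there is no paper proof to compare against, and your proposal — which you yourself candidly frame as a reduction plus an unestablished ``rigidity step'' — is not a proof and could not be one with currently known tools. The parts you do carry out are sound: Theorem~\ref{thm: cobham-erasing} gives the normal form $\mathbf{w} = \pi(\sigma^{\omega}(a))$ with $\pi$ letter-to-letter, the inequality $p_{\mathbf{w}}(n) \leq p_{\mathbf{u}}(n)$ is correct since every length-$n$ factor of $\mathbf{w}$ is the $\pi$-image of a length-$n$ factor of $\mathbf{u}$, and combining Theorem~\ref{thm: Deviatov} with the Morse--Hedlund theorem does dispose of the quasi-uniform case. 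But all of the content of the conjecture is concentrated in the step you leave open, so the gap is the theorem.

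Moreover, one guiding heuristic in your sketch is demonstrably false, which shows the plan would fail even as a program: you assert that coding by $\pi$ ``can only move $s(n)$ down from one of these plateaus to a strictly lower one.'' The paper's Example~\ref{ex: complexite n^{3/2}}, and Pansiot's examples with complexity $\Theta(n\sqrt[k\,]{n})$, exhibit letter-to-letter codings of purely morphic sequences of quadratic complexity whose images have complexity $\Theta(n^{1+\frac{1}{k}})$ — rates that are \emph{not} among Pansiot's five plateaus. So codings genuinely manufacture new intermediate rates; the conjecture claims only that the manufacturable rates form the specific countable list, and nothing in your renewal-recursion sketch distinguishes the allowed rates $\Theta(n^{1+\frac{1}{k}})$ from rates to be excluded such as $\Theta(n(\log \log n)^{1/2})$. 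A second tacit assumption is also unjustified: the conjecture says the complexity \emph{adopts} one of the listed behaviors, which additionally requires ruling out complexity functions oscillating between scales and hence $\Theta$ of nothing on the list; your argument presumes $s(n)$ settles onto a plateau. These two points — uniform quantization over all codings, and exclusion of oscillation — are precisely why the statement remains open, as your final paragraph in effect concedes.
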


In particular, Theorem~\ref{thm: Deviatov} implies that the highest complexity that one can get is the same for morphic sequences and for purely morphic sequences. This can be explained by the following result. 

\begin{proposition}[Cassaigne and Nicolas~\cite{Cassaigne-Nicolas}]
\label{prop: cassaigne nicolas action morphisme sur une suite}
Let $\mathbf{w}$ be a one-sided sequence over $A$ and $\sigma:A^* \to B^*$ be a non-erasing morphism. If $M = \max_{a \in A} |\sigma(a)|$, for all $n$ we have $p_{\sigma(\mathbf{w})}(n) \leq M p_{\mathbf{w}}(n)$. Moreover, if $\mathbf{w}$ is purely morphic and $\sigma$ is injective, then $p_{\sigma(\mathbf{w})}(n) \in \Theta(p_{\mathbf{w}}(n))$.
\end{proposition}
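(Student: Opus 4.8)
The plan is to prove the two assertions separately: the universal inequality $p_{\sigma(\mathbf{w})}(n) \le M p_{\mathbf{w}}(n)$ by a direct ``covering'' argument, and the asymptotic equivalence by combining this inequality (for the upper bound) with an injectivity-based lower bound. First I would write $\mathbf{w} = w_0 w_1 w_2 \cdots$ with $w_i \in A$, so that $\sigma(\mathbf{w}) = \sigma(w_0)\sigma(w_1)\cdots$ is a concatenation of the \emph{blocks} $\sigma(w_i)$, each of length between $1$ (as $\sigma$ is non-erasing) and $M$. Given $u \in {\rm Fac}_n(\sigma(\mathbf{w}))$, I fix one occurrence of $u$ in $\sigma(\mathbf{w})$; its first letter lies inside a unique block $\sigma(w_i)$ at some offset $j$ with $0 \le j < |\sigma(w_i)| \le M$. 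Because every block has length at least $1$, the $n$ consecutive blocks $\sigma(w_i), \dots, \sigma(w_{i+n-1})$ provide, from offset $j$ onward, at least $n$ letters, so $u$ is entirely contained in $\sigma(v)$ where $v = w_i \cdots w_{i+n-1} \in {\rm Fac}_n(\mathbf{w})$; more precisely, $u$ is the length-$n$ factor of $\sigma(v)$ beginning at position $j$.

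This gives a map $u \mapsto (v, j)$, and it is injective since $u$ is recovered from $(v,j)$ as that length-$n$ factor of $\sigma(v)$. There are $p_{\mathbf{w}}(n)$ possible values of $v$ and at most $M$ possible values of $j$, whence $p_{\sigma(\mathbf{w})}(n) \le M p_{\mathbf{w}}(n)$, proving the first claim and, in particular, $p_{\sigma(\mathbf{w})}(n) \in O(p_{\mathbf{w}}(n))$.

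For the equivalence it then remains to bound $p_{\mathbf{w}}(n)$ from above by $p_{\sigma(\mathbf{w})}$. I would start from the observation that, as $\sigma$ is injective, the map $v \mapsto \sigma(v)$ is injective on ${\rm Fac}_n(\mathbf{w})$, producing $p_{\mathbf{w}}(n)$ pairwise distinct factors of $\sigma(\mathbf{w})$, all of length in $[n, Mn]$. Reading each $\sigma(v)$ at a block-aligned occurrence and extending it on the right to length exactly $Mn$ inside $\sigma(\mathbf{w})$, one aims to manufacture an injection of ${\rm Fac}_n(\mathbf{w})$ into ${\rm Fac}_{Mn}(\sigma(\mathbf{w}))$ and thus the single-length inequality $p_{\sigma(\mathbf{w})}(Mn) \ge p_{\mathbf{w}}(n)$. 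Finally I would transfer this back to length $n$ using that $p_{\mathbf{w}}$ grows regularly: by Theorem~\ref{thm: pansiot} (applied after Theorem~\ref{thm: cobham-erasing} to dispose of erasing morphisms) $p_{\mathbf{w}}$ is $\Theta$ of one of $1, n, n\log\log n, n\log n, n^2$, each of which satisfies a doubling bound $p_{\mathbf{w}}(Mn) \le C\, p_{\mathbf{w}}(n)$; combined with the monotonicity of complexity this yields $p_{\sigma(\mathbf{w})}(n) \ge c\, p_{\mathbf{w}}(n)$ and hence $p_{\mathbf{w}}(n) \in O(p_{\sigma(\mathbf{w})}(n))$.

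The delicate step is the construction of the single-length injection in the lower bound. Injectivity of $\sigma$ guarantees $\sigma(v) \ne \sigma(v')$ for $v \ne v'$, but after truncating or extending the images to a common length these distinctions can be blurred by prefix overlaps between code words -- the deciphering-delay phenomenon -- so that a naive length-by-length count only gives $p_{\mathbf{w}}(n) \le \sum_{m=n}^{Mn} p_{\sigma(\mathbf{w})}(m)$ and loses a factor of $n$. The key is therefore to recover $v$ from a bounded window of $\sigma(\mathbf{w})$: decoding the block-aligned factor from the left and using that $v$ consists of exactly $n$ code words should pin $v$ down, while the regular growth of $p_{\mathbf{w}}$ supplied by Pansiot's classification is exactly what lets one absorb the resulting bounded spread of lengths. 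I expect this synchronization argument -- rather than the elementary counting of the first part -- to be the main obstacle.
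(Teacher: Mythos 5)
You are proposing a proof of a result the paper itself does not prove (it is quoted from Cassaigne--Nicolas), so your attempt must stand on its own. Your first half does: the covering map $u \mapsto (v,j)$, with $v \in {\rm Fac}_n(\mathbf{w})$ and $j \in \{0,\dots,M-1\}$ the offset inside the first block, is well defined because non-erasure guarantees $n$ consecutive blocks suffice, and it is injective since $u$ is the length-$n$ factor of $\sigma(v)$ at position $j$; this is the standard argument and correctly yields $p_{\sigma(\mathbf{w})}(n) \leq M\,p_{\mathbf{w}}(n)$.

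The lower bound, however, has a genuine gap, and it sits exactly where you yourself locate ``the main obstacle'': you never establish the single-length inequality $p_{\sigma(\mathbf{w})}(Mn) \geq c\,p_{\mathbf{w}}(n)$, and the mechanism you sketch for it --- ``decoding the block-aligned factor from the left and using that $v$ consists of exactly $n$ code words should pin $v$ down'' --- is false for a general injective morphism. Injectivity of $\sigma$ on $A^*$ says only that $\sigma(A)$ is a code; it need not be a prefix code, nor even have bounded deciphering delay. Take $\sigma(0)=a$, $\sigma(1)=ab$, $\sigma(2)=bb$: this $\sigma$ is injective, yet in a window beginning $ab^m$ the identity of the first code word ($a$ versus $ab$) stays undetermined for arbitrarily large $m$; concretely $\sigma(0\,2^{n-1})=ab^{2n-2}$ and $\sigma(1\,2^{n-1})=ab^{2n-1}$ are nested prefixes, so two distinct length-$n$ preimages can share the same length-$Mn$ window. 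In general, distinct $v$'s sharing a window have images that are prefixes of that window of pairwise distinct lengths, hence a priori up to $(M-1)n+1$ of them --- precisely the factor-$n$ loss you computed, which nothing in your sketch reduces to a constant. Note also that the purely-morphic hypothesis must do real work \emph{inside} this counting step, not merely supply the doubling bound at the end: for arbitrary $\mathbf{w}$ the statement is simply false (if $p_{\mathbf{w}}(n)=2^{\Theta(n)}$ and $\sigma=\mu$, then $p_{\sigma(\mathbf{w})}(n)$ is of the order of $p_{\mathbf{w}}(n/2)=o(p_{\mathbf{w}}(n))$), so no argument using injectivity alone, as your window construction does, can close the gap. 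Two smaller points: your final transfer (monotonicity of $p_{\sigma(\mathbf{w})}$ plus $p_{\mathbf{w}}(Mn)\leq C\,p_{\mathbf{w}}(n)$ from the $\Theta$-classes of Theorem~\ref{thm: pansiot}, using $p_{\mathbf{w}}(n)\geq n+1$ in the aperiodic $O(n)$ case) is correct as far as it goes; but your one-line disposal of erasing defining morphisms via Theorem~\ref{thm: cobham-erasing} is too quick, since Cobham's reduction exhibits $\mathbf{w}$ only as a letter-to-letter (not necessarily injective) image of a non-erasing purely morphic word, i.e.\ as a \emph{morphic} word, for which the five Pansiot classes --- and hence your doubling bound --- are not guaranteed (compare the $\Theta(n\sqrt{n})$ morphic example of Example~\ref{ex: complexite n^{3/2}}).
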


{The previous discussion shows that the factor complexity of morphic sequences is rather constrained. To conclude this section, we give some examples of how some additional combinatorial criteria can even more restrict it.}

A first well-known fact is that if a purely morphic sequence is \textit{$k$-power-free}, \textit{i.e.}, it does not contains any factor of the form $u^k$, then its factor complexity grows at least linearly and at most like $n \log n$ (see~\cite{D0L_m-free}). {We will consider a similar criterion on $S$-adic sequences in Section}~\ref{subsection: distinct powers}.

{Another such criterion is the \textit{uniform recurrence} of the sequence, \textit{i.e.}, any factor occurs infinitely often and with bounded gaps in $\mathbf{w}$. For morphic sequences, this implies that the complexity is sub-linear }(see~\cite{Nicolas-Pritykin}).{ Actually, the uniform recurrence of a morphic sequence is even equivalent to its \textit{linear recurrence}, \textit{i.e.}, any factor $u$ occurs infinitely often and with gaps bounded by $K|u|$} (see~\cite{Durand_UR, Durand_preprint}). {Furthermore, if $\mathbf{w}$ is a morphic and uniformly recurrent sequence over $A$, then $\mathbf{w}$ is a morphic sequence $\tau (\sigma^{\omega}(a))$ with $\sigma$ a primitive morphism (this result was already proved in}~\cite{Durand_UR} {in the particular case of morphic sequences $\psi(\varphi^{\omega}(a))$ with $\psi$ non-erasing and $\varphi$ everywhere growing). The following result also provides an algorithm to check whether a purely morphic sequence is uniformly recurrent.}

\begin{theorem}[Damanik and Lenz~\cite{Damanik-Lenz}]
\label{thm: damanik lenz LR}
A purely morphic sequence $\mathbf{w} = \sigma^{\omega}(a)$ is uniformly recurrent if and only if there is a growing letter $b \in A$ that occurs with bounded gaps in $\mathbf{w}$ and such that for all letters $c \in A$ there is a power $\sigma^k$ such that $c$ occurs in $\sigma^k(b)$.
\end{theorem}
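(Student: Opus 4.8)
The plan is to prove the two implications separately, the direct one being essentially immediate and the converse carrying all the content. Throughout I assume, as is standard, that every letter of $A$ actually occurs in $\mathbf{w}$ (otherwise one restricts to the sub-alphabet of letters appearing in $\mathbf{w}$), and I use two basic facts: the starting letter $a$ is necessarily growing, since $|\sigma^n(a)| \to +\infty$, and the prefixes $\sigma^n(a)$ of $\mathbf{w}$ have lengths tending to infinity. For the direct implication I would simply take $b = a$. If $\mathbf{w}$ is uniformly recurrent, then in particular the length-one factor $a$ occurs with bounded gaps, and $a$ is growing; moreover every letter $c$ occurring in $\mathbf{w}$ already occurs in some prefix $\sigma^k(a)$, because these prefixes exhaust $\mathbf{w}$. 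Thus $b = a$ witnesses the condition, and this direction requires no further work.

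For the converse, suppose $b$ is growing, occurs with bounded gaps (say every window of length $g$ in $\mathbf{w}$ contains an occurrence of $b$), and that every letter occurs in some power $\sigma^{k}(b)$. I would fix an arbitrary factor $u$ of $\mathbf{w}$ and show that it occurs with bounded gaps. The first step is to realize $u$ as a factor of a single image $\sigma^N(b)$. Since $u$ is a factor of some prefix $\sigma^n(a)$, and since the starting letter $a$ (being a letter of $\mathbf{w}$) occurs in some $\sigma^{k_a}(b)$, applying $\sigma^n$ to the latter shows that $\sigma^n(a)$, and hence $u$, is a factor of $\sigma^{n+k_a}(b)$. Setting $N = n + k_a$, we have realized $u$ inside $\sigma^N(b)$.

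The second step exploits the fixed-point identity $\mathbf{w} = \sigma^N(\mathbf{w})$. Writing $\mathbf{w} = w_0 w_1 w_2 \cdots$, we obtain $\mathbf{w} = \sigma^N(w_0)\sigma^N(w_1)\cdots$, so each index $i$ with $w_i = b$ produces a block $\sigma^N(b)$ inside $\mathbf{w}$, and each such block contains an occurrence of $u$. Because the letters $w_i = b$ are spaced with gaps at most $g$, and each image $\sigma^N(w_j)$ has length at most $M_N := \max_{d \in A} |\sigma^N(d)|$, two consecutive $\sigma^N(b)$-blocks are separated by at most $(g-1)M_N$ symbols. Hence consecutive occurrences of $u$ lie at distance at most $gM_N + |\sigma^N(b)|$, a bound depending only on $u$; and since $b$ occurs infinitely often, so does $u$. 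As $u$ was arbitrary, $\mathbf{w}$ is uniformly recurrent.

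The main obstacle, and really the single idea on which the argument rests, is the first step of the converse: transferring the bounded-gap property from one growing letter $b$ to every factor. The mechanism is that $b$ is rich enough to generate, through its iterates, every prefix $\sigma^n(a)$ (because $a$ appears in some $\sigma^{k_a}(b)$), so that all factors already live inside the images $\sigma^N(b)$, after which the bounded gaps of $b$ propagate to these images under the fixed-point decomposition. I would note in passing that the growth of $b$ is in fact forced by the other two hypotheses, since $a$ is growing and occurs in $\sigma^{k_a}(b)$, whence $|\sigma^{k_a+n}(b)| \ge |\sigma^n(a)| \to +\infty$; this makes the stated condition clean and the equivalence genuinely symmetric.
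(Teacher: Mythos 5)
The paper does not actually contain a proof of this theorem: it is quoted from Damanik and Lenz \cite{Damanik-Lenz} as a black box, so there is no internal argument to compare yours against. Judged on its own, your proof is correct, and it is essentially the classical argument: the forward direction with $b=a$ (legitimate once you fix, as you do, the standing convention that $A$ is exactly the set of letters occurring in $\mathbf{w}$ --- without it the right-hand condition is unsatisfiable whenever $A$ contains an unused letter, so this reduction is genuinely needed, not cosmetic), and for the converse the two-step mechanism of (i) pushing every factor $u$ into a single image $\sigma^N(b)$ via $a \in {\rm Fac}(\sigma^{k_a}(b))$ and (ii) propagating the bounded gaps of $b$ through the decomposition $\mathbf{w}=\sigma^N(w_0)\sigma^N(w_1)\cdots$, with the gap bound $gM_N+|\sigma^N(b)|$. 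Two small points worth making explicit: you tacitly use the convention $\sigma(a)\in aA^+$ (so that the $\sigma^n(a)$ are nested prefixes and $\mathbf{w}=\sigma(\mathbf{w})$), which is the standard meaning of $\sigma^{\omega}(a)$ and consistent with the paper's definition; and your closing observation that the growth of $b$ is forced by the other two hypotheses (via $|\sigma^{k_a+n}(b)|\ge|\sigma^n(a)|\to+\infty$) is correct and shows your converse never needs that hypothesis --- a nice clarification of why the statement is an equivalence.
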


\subsection{$S$-adic sequences}
\label{subsection S-adic}

The previous section shows that the factor complexity of morphic sequences are rather restricted (especially for purely morphic sequences). In particular, the class of uniformly recurrent morphic sequences is strictly contained in the class of morphic sequences with sub-linear complexity. In this section, we show that, for $S$-adic sequences, things are strongly different. A first important result is the following.

\begin{proposition}[Cassaigne~\cite{Cassaigne_S-adic}]
\label{prop:Cassaigne adique}
Let $A$ be an alphabet and $l \notin A$. There exists a finite set $S$ of morphisms over $A' = A \cup \{l\}$ such that any sequence over $A$ is $S$-adic.
\end{proposition}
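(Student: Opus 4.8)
The plan is to build $S$ from two families of morphisms on $A'$: one family that \emph{records} the successive letters of the target next to the auxiliary symbol $l$, and one family that \emph{closes} a computation by turning $l$ into a genuine letter of $A$. Concretely, for each $a \in A$ I would set
\begin{equation*}
\sigma_a(l) = la, \quad \sigma_a(b) = b \ (b \in A), \qquad \rho_a(l) = a, \quad \rho_a(b) = b \ (b \in A),
\end{equation*}
and take $S = \{\sigma_a \mid a \in A\} \cup \{\rho_a \mid a \in A\}$, a set of $2\,{\rm Card}(A)$ (non-erasing) morphisms on $A'$. The presence of the $\rho_a$ is forced by the very definition of an $S$-adic representation: since the target lives over $A$ one must have $A_0 = A$, so the outermost morphism of the representation has to map into $A^*$, i.e.\ it must eliminate $l$. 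The morphisms $\sigma_a$ never do this (their image of $l$ contains $l$), whereas each $\rho_a$ maps $A'$ into $A^*$, so the $\rho_a$ are exactly the elements of $S$ available for the outermost step. (Had we allowed erasing, a single morphism $l \mapsto$ empty word would replace all the $\rho_a$ and reduce $S$ to ${\rm Card}(A)+1$ morphisms; the definition in the excerpt permits this, but the non-erasing version above is cleaner.)

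Given an arbitrary $\mathbf{w} = w_0 w_1 w_2 \cdots \in A^{\mathbb{N}}$, I would take the directive word $\sigma_0 = \rho_{w_0}$ and $\sigma_n = \sigma_{w_n}$ for $n \geq 1$, with the alphabets $A_0 = A$ and $A_n = A'$ for $n \geq 1$, and the letters $a_n = l$ for $n \geq 1$ (and $a_0 = w_0 \in A_0$). A direct induction, using only that each $\sigma_a$ fixes every letter of $A$ and sends $l$ to $la$, gives
\begin{equation*}
\sigma_1 \sigma_2 \cdots \sigma_n(l) = l\, w_1 w_2 \cdots w_n .
\end{equation*}
Applying the closing morphism on top then yields $\sigma_0 \sigma_1 \cdots \sigma_n(l) = \rho_{w_0}(l\, w_1 \cdots w_n) = w_0 w_1 \cdots w_n$, and likewise $\sigma_0 \sigma_1 \cdots \sigma_n(l^{\omega}) = (w_0 w_1 \cdots w_n)^{\omega}$.

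To finish I would verify the two defining conditions. The words $\sigma_0 \cdots \sigma_n(a_{n+1}) = w_0 \cdots w_n$ form a nested sequence of prefixes whose lengths tend to $+\infty$, so $\lim_{n} \sigma_0 \cdots \sigma_n(a_{n+1}^{\omega})$ is well defined and equals $w_0 w_1 w_2 \cdots = \mathbf{w}$; since every morphism used belongs to the finite set $S$, this exhibits $\mathbf{w}$ as $S$-adic. The only step that genuinely needs care — and the reason a naive attempt fails — is the directionality of the composition. Recording letters by \emph{prepending} (say $l \mapsto al$) would reverse the order of the $w_i$ in each finite product and make the leading letter change with $n$, so that no limit exists; recording by \emph{appending} next to $l$ fixes the order but unavoidably leaves a single $l$ at the left of every approximant $l\, w_1 \cdots w_n$. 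Reconciling this with the constraint $A_0 = A$, namely that the limit be a bona fide sequence over $A$ rather than over $A'$, is the crux, and it is precisely what the closing morphisms $\rho_a$ accomplish by converting that leading $l$ into the missing first letter $w_0$. Everything else is routine bookkeeping.
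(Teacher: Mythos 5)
Your construction is correct: the induction giving $\sigma_1\cdots\sigma_n(l)=l\,w_1w_2\cdots w_n$, the role of the closing morphisms $\rho_a$ in meeting the constraint $A_0=A$, and the two defining conditions (convergence of $\sigma_0\cdots\sigma_n(a_{n+1}^{\omega})$ to $\mathbf{w}$ and $|\sigma_0\cdots\sigma_n(a_{n+1})|=n+1\to+\infty$) all check out, and your observation about why appending rather than prepending next to $l$ is forced is accurate. The paper itself gives no proof of this proposition, only the citation to Cassaigne, and your finite set $S=\{\sigma_a,\rho_a\mid a\in A\}$ with $\sigma_a(l)=la$ is precisely the standard construction behind that citation (up to the optional erasing variant you mention), so your argument matches the intended one.
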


In particular, this implies that one can get any high complexity with $S$-adic sequences (although for morphic sequences, the complexity is at most quadratic). Moreover, the following proposition implies that the set of possible asymptotic behaviors for the complexity function of $S$-adic sequences is uncountable (although it is finite for purely morphic sequences and conjectured to be countable for morphic ones). {See also}~\cite{Mauduit-Moreira} {for another approach to build sequences whose complexity is closed to a given function.}

\begin{proposition}[Cassaigne~\cite{Cassaigne_intermediate}]
\label{prop: cassaigne complexite arbitraire}
Let $f : \mathbb{R}^+ \to \mathbb{R}^+$ be a function such that
\begin{enumerate}[i.]
	\item $\lim_{t \to +\infty} \frac{f(t)}{\log t} = +\infty$;
	\item $f$ is differentiable, except possibly at $0$;
	\item $\lim_{t \to +\infty} f'(t) t^{\beta} = 0$ for some $\beta > 0$;
	\item $f'$ is decreasing.
\end{enumerate}
Then there exists a uniformly recurrent sequence $\mathbf{w}$ over $\{0,1\}$ such that\footnote{$f(n) \sim g(n)$ if $\forall \varepsilon>0 \ \exists n_0 \ \forall n>n_0 \ |f(n)/g(n)-1|<\varepsilon$.} $\log(p_{\mathbf{w}}(n)) \sim f(n)$.
\end{proposition}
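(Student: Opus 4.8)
The plan is to construct $\mathbf{w}$ by prescribing its set of factors scale by scale, exploiting the fact that over the binary alphabet the first difference $s(n) := p_{\mathbf{w}}(n+1) - p_{\mathbf{w}}(n)$ equals the number of right-special factors of length $n$ (those $u$ with both $u0$ and $u1$ in ${\rm Fac}(\mathbf{w})$), so that $p_{\mathbf{w}}(n) = 1 + \sum_{m=0}^{n-1} s(m)$. The asymptotic $\log p_{\mathbf{w}}(n) \sim f(n)$ is equivalent, after writing $p_{\mathbf{w}}(n) \approx e^{f(n)}$ and using $f'(n) \to 0$ (a consequence of (iii)), to prescribing roughly $s(n) \approx f'(n)\, e^{f(n)}$ right-special factors of length $n$. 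Conditions (i) and (iii) together force the exponent $\beta$ of (iii) to lie in $(0,1)$, hence $f$ is of intermediate growth: super-logarithmic by (i), but $O(t^{1-\beta})$ by integrating (iii). Thus the target complexity $p_{\mathbf{w}}$ sits strictly between the super-polynomial and the sub-exponential regimes, leaving room below the maximal (exponential) binary complexity.

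First I would fix a rapidly increasing sequence of scales $(n_k)_k$ and build $\mathbf{w}$ as the limit of a hierarchy of blocks: a finite set $\mathcal{B}_k$ of words of common length $\ell_k$, each element of $\mathcal{B}_{k+1}$ being a concatenation of blocks from $\mathcal{B}_k$. The number of distinct blocks, and the number of positions at which a genuine binary choice (a "branching") is made when passing from level $k$ to level $k+1$, is tuned so that the count of right-special factors of length $n$ matches the target $\lfloor f'(n)\, e^{f(n)}\rfloor$ on each range $[n_k, n_{k+1})$. Here the monotonicity of $f'$ (condition (iv)) is what makes the hierarchy consistent: because the branching rate decreases with $n$, special factors created at a coarse scale survive unambiguously at finer scales, so the cumulative requirement $\sum_m s(m)$ is realizable by a single nested word rather than by conflicting demands at different levels.

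To guarantee uniform recurrence I would impose that every block of $\mathcal{B}_k$ occurs in every block of $\mathcal{B}_{k+1}$, with the multiplicity of each type bounded independently of $k$; this yields a uniform bound of the form $\text{gap} \le K \ell_k$ on the spacing between consecutive occurrences of any factor of length $\le \ell_k$, which is exactly uniform (indeed linear) recurrence. The complexity estimate then follows by summation: $p_{\mathbf{w}}(n) = 1 + \sum_{m<n} s(m) \approx \int_0^n f'(t)\, e^{f(t)}\, dt = e^{f(n)} - e^{f(0)} \sim e^{f(n)}$, where (ii) and (iv) license comparing the sum with the integral, and (iii) controls the accumulated discretisation error (from $s(m)$ being integer-valued and from the block boundaries) so that it remains $o(f(n))$ after taking logarithms.

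The main obstacle is the simultaneous control of uniform recurrence and of the precise asymptotics of $p_{\mathbf{w}}$: forcing each low-level block to reappear in every high-level block mechanically introduces extra factors, hence extra special factors, pushing the complexity up, whereas matching the sub-exponential target demands that branching stay sparse. The construction succeeds only because (iii) leaves a genuine gap between $f$ and the linear regime, equivalently between $p_{\mathbf{w}}$ and the exponential regime: there is enough slack to absorb the recurrence-enforcing repetitions into lower-order terms of $\log p_{\mathbf{w}}$. Making this quantitative — choosing the scales $n_k$, the block multiplicities, and the branching pattern so that the recurrence overhead stays $o(e^{f(n)})$ while the leading count is $e^{f(n)(1+o(1))}$ — is the delicate heart of the argument; by contrast, the integer-rounding of $s(n)$ and the verification that the limit word is genuinely aperiodic (automatic once $p_{\mathbf{w}}(n) \gg n$, which (i) secures) are comparatively routine.
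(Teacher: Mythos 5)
You should first note a structural point: the paper contains no proof of Proposition~\ref{prop: cassaigne complexite arbitraire} to compare against; it is quoted with a citation to Cassaigne~\cite{Cassaigne_intermediate}. Measured against that original construction, your outline is in the right family of ideas: a nested block hierarchy in which every level-$k$ block occurs in every level-$(k+1)$ block (the same device underlying Grillenberger's result, Theorem~\ref{thm: grillenberger}), with the number of level-$k$ blocks tuned to $e^{f(\ell_k)}$. Your preliminary reductions are also sound: over $\{0,1\}$ the first difference $s(n)$ does count right-special factors, conditions (i) and (iii) do force $\beta \in (0,1)$ and $f(t)=O(t^{1-\beta})$, and aperiodicity is automatic once $p_{\mathbf{w}}(n) \gg n$. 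But two points are genuinely wrong or missing.

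First, the recurrence bookkeeping fails as stated. To carry the complexity you need ${\rm Card}(\mathcal{B}_k)$ of order $e^{f(\ell_k)}$, and since every one of these blocks must occur in each level-$(k+1)$ block, necessarily $\ell_{k+1} \geq \ell_k\, e^{f(\ell_k)}$. The gap bound you can extract for factors of length at most $\ell_k$ is therefore of order $\ell_{k+1}$, not $K\ell_k$, and your parenthetical claim of ``indeed linear'' recurrence is impossible: linearly recurrent sequences have at most linear complexity (combine Theorem~\ref{thm: durand LR S-adic} with Propositions~\ref{prop: durand primitive S-adic} and~\ref{prop: durand longueur comparable S-adic}), whereas condition (i) makes $p_{\mathbf{w}}$ grow faster than any polynomial. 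Uniform recurrence is still salvageable with the weaker bound $2\ell_{k+1}$, after also handling factors that straddle block boundaries (e.g., by forcing all pairs of consecutive level-$k$ blocks to appear), but the error signals that you have not confronted the lacunarity the recurrence requirement forces on the scales. Second, and this is the central gap: precisely because $\ell_{k+1}/\ell_k \geq e^{f(\ell_k)}$ is enormous, getting $\log p_{\mathbf{w}}(n) \sim f(n)$ for \emph{all} $n$ in the interval $[\ell_k, \ell_{k+1}]$ requires distributing the branching inside the level-$(k+1)$ blocks and verifying that the recurrence-enforcing repetitions contribute only lower-order terms after taking logarithms. Your sketch asserts that a nested hierarchy can realize the prescribed counts $s(n) \approx f'(n)e^{f(n)}$ at every length, and that coarse-scale special factors ``survive unambiguously'' because the branching rate decreases, but gives no mechanism for either claim; this is exactly where hypotheses (iii) and (iv) must actually be deployed. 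You yourself label this step ``the delicate heart'' and defer it; deferred, the proposal is a plausible plan consistent with Cassaigne's approach, but not a proof.
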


In particular, the function $f(n)$ in the previous proposition can be taken equal to $n^{\alpha}$ for any $\alpha$ with $0 < \alpha < 1$.

Another big difference is that the class of uniformly recurrent $S$-adic sequences with sub-linear complexity is a very short part of the class of uniformly recurrent $S$-adic sequences. Indeed, recall that the \textit{topological entropy} of a sequence over an alphabet $A$ is the real number $h$ with $0 \leq h \leq \log({\rm Card}(A))$ defined by 
\[
	h = \lim_{n \to \infty} \frac{\log (p(n))}{n}.
\]
A uniformly recurrent sequence $\mathbf{w}$ over an alphabet $A$ with at least two letters $a,b \in A$ cannot have maximal complexity ($p_\mathbf{w}(n) = {\rm Card}(A)^n$): since all powers $a^n$ occur in $\mathbf{w}$, there are unbounded gaps between two successive occurrences of $b$ in $\mathbf{w}$. However, together with Proposition~\ref{prop:Cassaigne adique}, the following result shows that, except the maximal one $p_\mathbf{w}(n) = {\rm Card}(A)^n$, any high complexity can be reached by uniformly recurrent $S$-adic sequences.

\begin{theorem}[Grillenberger~\cite{Grillenberger}]
\label{thm: grillenberger}
Let $A$ be an alphabet with $d = {\rm Card}(A) \geq 2$ and $h \in \left[ 0, \log(d)\right[$. There exists a uniformly recurrent one-sided sequence $\mathbf{w}$ over $A$ with topological entropy $h$. 
\end{theorem}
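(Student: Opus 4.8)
The plan is to realize $h$ as the topological entropy of a \emph{minimal} subshift over $A$; since a subshift is minimal exactly when all of its points are uniformly recurrent, any point $\mathbf{w}$ of such a subshift will then be the desired sequence. The boundary case $h=0$ is handled separately: a fixed point of a primitive substitution on the $d$ letters of $A$ is uniformly recurrent with at most linear complexity, hence entropy $0$. So I would concentrate on $h\in(0,\log d)$. Throughout I would exploit that $n\mapsto\log p(n)$ is subadditive (a factor of length $n+m$ is determined by its length-$n$ prefix and its length-$m$ suffix), so the limit $h=\lim_n\frac1n\log p(n)$ exists and equals $\inf_n\frac1n\log p(n)$; consequently it suffices to steer the factor-counting function so that $\frac1n\log p(n)\to h$.

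The core is a hierarchical block construction. I would build inductively finite sets $W_n$ of words over $A$, all of a common length $L_n$, with $|W_n|=m_n$, subject to three requirements. First, \emph{nesting}: each word of $W_{n+1}$ is a concatenation of exactly $q_n$ words of $W_n$, so that $L_{n+1}=q_nL_n$ and the $W_n$ define a decreasing family of admissible words whose limit is a subshift $X$. Second, \emph{recurrence}: the allowed concatenations are chosen so that every word of $W_n$ occurs in every word of $W_{n+1}$; in the limit this forces each factor to reappear with gaps bounded by $2L_{n+1}$, giving uniform recurrence and in fact minimality of $X$. Third, \emph{recognizability}: I would reserve a fixed marker pattern so that any long word of $X$ admits an essentially unique decomposition into $W_n$-blocks (unique up to a bounded shift of the origin), which is what prevents block overlaps from creating uncontrolled extra factors.

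To reach the target rate I would tune $(L_n,m_n,q_n)$ so that $r_n:=\frac{\log m_n}{L_n}\to h$. A counting lemma supplies the needed room: for $m\ge2$ and $q$ large, the number of length-$q$ sequences over an alphabet of size $m$ in which every symbol occurs is at least $\tfrac12 m^{q}$; applied with ``alphabet'' $W_n$ this yields at least $\tfrac12 m_n^{\,q_n}$ candidates for $W_{n+1}$ already meeting the recurrence requirement, so I may select any $m_{n+1}$ up to this bound. Because a word of $W_{n+1}$ is a concatenation of $q_n$ words of $W_n$, one always has $m_{n+1}\le m_n^{\,q_n}$, hence $r_{n+1}\le r_n$: the rate is non-increasing. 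I would therefore start at level $0$ with a rate $r_0\in(h,\log d)$, which is possible by taking $m_0=\lceil d^{\,r_0L_0}\rceil\le d^{L_0}$ for $L_0$ large, and then, by choosing $m_{n+1}$ suitably at each step, make $r_n$ decrease and converge to $h$ from above; since the admissible decrements $r_n-r_{n+1}$ can be made arbitrarily small (pick $m_{n+1}$ near $\tfrac12 m_n^{q_n}$) or large (pick $m_{n+1}$ small), any target in $(0,\log d)$ is attainable, with $h<\log d$ exactly the room provided by the crude bound $m_n\le d^{L_n}$.

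Finally I would verify that $X$ has entropy $h$. The lower bound is immediate, as the $m_n$ words of $W_n$ are themselves factors, so $\frac{1}{L_n}\log p_X(L_n)\ge r_n\to h$. For the upper bound I would count factors of length $N$ for $N\gg L_n$: by recognizability such a factor is determined, up to an offset in $[0,L_n)$, by the finite sequence of consecutive $W_n$-blocks it meets, and the number of occurring block-sequences of length $\lceil N/L_n\rceil$ grows at rate governed by $r_n$, the straddling contributions being negligible once $N/L_n\to\infty$. Squeezing and letting $N\to\infty$ then $n\to\infty$ gives $\frac1N\log p_X(N)\to h$, and subadditivity upgrades this to the full limit; any $\mathbf{w}\in X$ is then uniformly recurrent with entropy $h$. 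The main obstacle is precisely the tension in the second and third requirements: one must force every block of $W_n$ into every block of $W_{n+1}$ (for recurrence) \emph{while simultaneously} ensuring, via the markers and the counting lemma, that the resulting overlaps contribute only subexponentially to the factor count (for the exact entropy). Reconciling these competing demands is the technical heart of the argument.
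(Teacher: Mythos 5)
The paper offers no proof of this statement: Theorem~\ref{thm: grillenberger} is quoted verbatim from Grillenberger's paper, so your attempt can only be compared with the classical argument, not with anything in the text. Measured against that, your outline is essentially the standard hierarchical block construction and its skeleton is sound: the nested sets $W_n$ of words of common length $L_n$, the surjectivity requirement (every $W_n$-word occurs in every $W_{n+1}$-word) yielding minimality with recurrence constant of order $2L_{n+1}$, the counting lemma guaranteeing at least $\tfrac12 m_n^{q_n}$ surjective concatenations once $q_n$ is large relative to $m_n$ (which is fine, since $q_n$ is chosen after $m_n$), and the tuning of $r_n=\frac{\log m_n}{L_n}\downarrow h$ — one can even take $m_{n+1}=\lceil e^{hL_{n+1}}\rceil+1$, which fits under the bound $\tfrac12 m_n^{q_n}$ as soon as $(r_n-h)q_nL_n\geq\log 2$, so the strict inequality $h<\log d$ is used exactly where you use it. Do note that Grillenberger's cited theorem is stronger than what you prove: he constructs \emph{strictly ergodic} (uniquely ergodic and minimal) systems of prescribed entropy, which requires controlling frequencies of blocks, not merely their occurrence; for the statement as quoted in this paper, minimality suffices and your weaker construction is enough.

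The one genuine defect is the third requirement, recognizability via a reserved marker pattern, on which your entropy upper bound explicitly leans and which you yourself flag as the unresolved ``technical heart.'' As stated it is doubly problematic: it is unestablished (a marker built from letters that are forced, by the surjectivity requirement, to occur throughout every block cannot straightforwardly identify block boundaries uniquely, and you never reconcile this tension), and it is unnecessary. Drop it entirely: any factor of length $N$ of the subshift lies inside a concatenation of $\lceil N/L_n\rceil+1$ consecutive $W_n$-blocks and is determined by that block sequence together with an offset in $\{0,\dots,L_n-1\}$, so $p_X(N)\leq L_n\, m_n^{\lceil N/L_n\rceil+1}$ with no uniqueness of decomposition needed; letting $N\to\infty$ gives $h_{\mathrm{top}}(X)\leq r_n$ for every $n$, hence $h_{\mathrm{top}}(X)\leq h$, and your lower bound $p_X(L_n)\geq m_n$ together with the subadditivity of $\log p_X$ closes the squeeze. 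With that substitution the alleged tension between your second and third requirements dissolves and the proof is complete; as written, however, the argument rests on a claim that is neither proved nor provable in the naive form you give it.
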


The following result provides an $S$-adic characterization of uniformly recurrent sequences. 
A sequence is said to be \textit{primitive $S$-adic (with constant $s_0$)} if $s_0 \in \mathbb{N}$ is such that for all $r \in \mathbb{N}$, all letters in $A_r$ occur in all images $\sigma_r \cdots \sigma_{r+s_0}(a)$, $a \in A_{r+s_0+1}$. 
It is said  to be \textit{weakly primitive $S$-adic}  if for all $r \in \mathbb{N}$, there exists $s > 0$ such that all letters in $A_r$ occur in all images $\sigma_r \cdots \sigma_{r+s}(a)$, $a \in A_{r+s+1}$. Any primitive $S$-adic sequence is weakly primitive and Durand~\cite{Durand_corrigentum} proved that any weakly primitive $S$-adic sequence is uniformly recurrent. {Following a construction based on return words (see Section}~\ref{subsection return words} {for definition), one can prove that the converse it true.} 
More precisely, we got

\begin{theorem}[Durand~\cite{Durand_corrigentum}, Leroy~\cite{Leroy_these}]
\label{thm: characterization S-adicity minimal}
A sequence $\mathbf{w}$ is uniformly recurrent if and only if it is weakly primitive $S$-adic.

Moreover, the directive word can be chosen to be proper and if $\mathbf{w}$ does not have a sub-linear complexity, then 
$S$ is infinite.
\end{theorem}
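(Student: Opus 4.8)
The plan is to establish the two implications of the equivalence separately. The implication \emph{weakly primitive $S$-adic $\Rightarrow$ uniformly recurrent} is the one credited to Durand~\cite{Durand_corrigentum} (the idea being that weak primitivity forces every prescribed factor to appear inside the image of a single letter at a high enough level, and these images tile $\mathbf{w}$, so every factor returns with bounded gaps), so the substance lies in the converse together with the two refinements. For the converse, suppose $\mathbf{w}$ is uniformly recurrent over $A_0 = A$ and build the directive word by iterated derivation on return words. Setting $\mathbf{w}_0 = \mathbf{w}$, at each step I would choose a nonempty prefix $u_n$ of $\mathbf{w}_n$; since $\mathbf{w}_n$ is uniformly recurrent its set of return words to $u_n$ is finite, so I index it by an alphabet $A_{n+1}$, let $\sigma_n \colon A_{n+1}^* \to A_n^*$ send each letter to the corresponding return word, and let $\mathbf{w}_{n+1}$ be the associated derived sequence. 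Two facts must be checked: that each $\mathbf{w}_{n+1}$ is again uniformly recurrent (so the procedure never stops), and that $\mathbf{w} = \lim_n \sigma_0 \cdots \sigma_n(a_{n+1}^{\omega})$ with $|\sigma_0 \cdots \sigma_n(a_{n+1})| \to +\infty$, where $a_{n+1}$ is the first letter of $\mathbf{w}_{n+1}$; both follow from the bounded-gap property, the length divergence being the point where one ensures (using aperiodicity of $\mathbf{w}$, the periodic case being trivial and yielding a finite $S$) that the $u_n$ can be chosen so that the minimal image length grows.

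Weak primitivity is where uniform recurrence enters quantitatively. For fixed $r$, uniform recurrence of $\mathbf{w}_r$ provides a bound $G_r$ such that every factor of $\mathbf{w}_r$ of length at least $G_r$ contains every letter of $A_r$. Since each $a \in A_{r+s+1}$ occurs in $\mathbf{w}_{r+s+1}$ and $\mathbf{w}_r = \sigma_r \cdots \sigma_{r+s}(\mathbf{w}_{r+s+1})$, the word $\sigma_r \cdots \sigma_{r+s}(a)$ is a factor of $\mathbf{w}_r$; choosing $s$ large enough that $\min_{a} |\sigma_r \cdots \sigma_{r+s}(a)| \geq G_r$ then forces all letters of $A_r$ into every such image, which is exactly weak primitivity. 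I would also record here that every return word to $u_n$ begins with $u_n$, hence with the first letter of $\mathbf{w}_n$, so all images $\sigma_n(a)$ share a common first letter: the directive word produced is automatically left proper.

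To upgrade left properness to full properness I would pass to \emph{complete return words} to $u_n$, that is, the factors beginning and ending with an occurrence of $u_n$ and containing no other occurrence, whose images begin with the first and end with the last letter of $u_n$ and are therefore proper. The only care required is the one-letter overlap in the corresponding decomposition of $\mathbf{w}_n$; equivalently, one conjugates the left-proper morphisms obtained above, replacing each $\sigma_n$ by a suitable conjugate so that images also end in a common letter while still generating $\mathbf{w}_n$. This conjugation/bookkeeping is the technical ingredient I would attribute to Leroy~\cite{Leroy_these}.

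Finally, for the dichotomy I would argue the contrapositive. If $S$ is finite, then in the return-word construction only finitely many morphisms and hence (up to the choices of prefixes) finitely many distinct derived sequences $\mathbf{w}_n$ occur; by the $S$-adic characterization of linear recurrence, a proper weakly primitive $S$-adic sequence over a finite set of morphisms is linearly recurrent, and linear recurrence forces $p_{\mathbf{w}}(n) \in O(n)$. Consequently a sequence whose complexity is not sub-linear cannot be realized with a finite $S$, so $S$ must be infinite. The obstacles I anticipate are the uniform length divergence $\min_a |\sigma_r \cdots \sigma_{r+s}(a)| \to +\infty$ underlying weak primitivity, and the bookkeeping that keeps the complete-return-word (conjugacy) step a genuine proper $S$-adic representation; by contrast the finite-versus-infinite dichotomy reduces to the established theory of linearly recurrent sequences.
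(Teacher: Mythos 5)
The paper does not actually prove this theorem: it cites Durand's corrigendum and Leroy's thesis, states that the converse direction follows from ``a construction based on return words'', and says the proof is similar to that of Theorem~\ref{thm: durand LR S-adic}. Your outline of the equivalence and of properness follows exactly that intended route --- derived sequences with respect to prefixes, weak primitivity obtained from the uniform recurrence bound $G_r$ forcing all letters of $A_r$ into every sufficiently long image, and properness via complete return words/conjugation of the left-proper morphisms --- and, at the level of a sketch, those parts are sound.

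The genuine gap is in your final step for the dichotomy. You invoke Theorem~\ref{thm: durand LR S-adic} through the claim that ``a proper weakly primitive $S$-adic sequence over a finite set of morphisms is linearly recurrent'', but that theorem requires \emph{primitivity with a uniform constant} $s_0$, and nothing in your argument upgrades weak primitivity to this: with ${\rm Card}(S)<\infty$ the witnessing exponent $s$ in weak primitivity may still grow with $r$, typically when a morphism occurs with unbounded powers in the directive word. Indeed the claim appears to be false: take proper morphisms over $\{0,1,2\}$, say $\tau=[011,001,021]$ (so $2$ occurs exactly once, in $\tau(2)$, and all letters grow at rate $3$) and $\sigma$ proper and positive, and the directive word $\sigma\tau^{n_1}\sigma\tau^{n_2}\cdots$ with $n_k\to\infty$; this is weakly primitive, proper, with finite $S$, and has sub-linear complexity since the length ratios stay bounded (Proposition~\ref{prop: durand longueur comparable S-adic}), yet one checks that the return time to the factor $\sigma_0\cdots\sigma_{r-1}(2)$ at the start of a $\tau^{n_k}$-block is of order $3^{n_k}$ times its length, so the sequence is not linearly recurrent --- the same mechanism as in Example~\ref{ex: durand primitif S-adic pas LR}, transplanted to proper morphisms. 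More structurally, linear recurrence is strictly stronger than sub-linear complexity among uniformly recurrent sequences (Sturmian sequences with unbounded partial quotients have complexity $n+1$ but are not linearly recurrent), so a proof of ``${\rm Card}(S)<\infty \Rightarrow$ sub-linear complexity'' should not be expected to factor through linear recurrence; your side remark that finitely many morphisms force finitely many derived sequences is an unsupported leap of the same kind (finiteness of the set of derived sequences is Durand's characterization of primitive substitutivity, not of general uniform recurrence). The finiteness part therefore needs a direct complexity estimate --- along the lines of Proposition~\ref{prop: durand longueur comparable S-adic} or Cassaigne's special-factor criterion (Theorem~\ref{theorem: cassaigne diff de complexite}) --- which is precisely the content of Leroy's thesis that the paper cites without reproducing.
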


The proof is similar to the proof of the following result which gives an $S$-adic characterization of linearly recurrent sequences. Let recall that a sequence is said to be \textit{proper $S$-adic} if all morphisms in its directive word are \textit{proper}, \textit{i.e.}, for all $n$ there are letters $a,b \in A_n$ such that $\sigma_n(A_{n+1}) \subset a A_n^* b$.

\begin{theorem}[Durand~\cite{Durand_corrigentum}]
\label{thm: durand LR S-adic}
A sequence $\mathbf{w}$ is linearly recurrent if and only if it is primitive and proper $S$-adic with ${\rm Card}(S)< +\infty$.
\end{theorem}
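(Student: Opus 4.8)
The plan is to prove the two implications separately, the nontrivial ingredient being a concrete $S$-adic representation built from \emph{return words}, exactly as announced just before the statement. Recall that for a uniformly recurrent sequence and a factor $w$ occurring in it, a \emph{return word} to $w$ is a factor $r$ such that $rw$ contains exactly two occurrences of $w$, one as a prefix and one as a suffix; when $w$ is a prefix of $\mathbf{w}$, the sequence factorizes uniquely as a concatenation of such return words, and coding them by letters produces the \emph{derived sequence} $\mathcal{D}_w(\mathbf{w})$ together with a morphism $\sigma_w$ sending each code to the corresponding return word and satisfying $\mathbf{w}=\sigma_w(\mathcal{D}_w(\mathbf{w}))$. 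Iterating this desubstitution along a nested sequence of prefixes $w_0, w_1, w_2, \dots$ of increasing length will produce the directive word $(\sigma_n)_{n\in\mathbb{N}}$ with $A_{n+1}$ the alphabet of return words to $w_n$ in $\mathbf{w}^{(n)}$, where $\mathbf{w}^{(0)}=\mathbf{w}$ and $\mathbf{w}^{(n+1)}=\mathcal{D}_{w_n}(\mathbf{w}^{(n)})$.

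For the necessity direction (linearly recurrent $\Rightarrow$ primitive and proper $S$-adic with $\mathrm{Card}(S)<+\infty$), I would fix the linear recurrence constant $K$ and invoke the standard return-word estimates for such sequences: the number of return words to any factor $u$ is bounded by a function of $K$ only, and every return word $r$ to $u$ satisfies $|u|\le |r|\le K|u|$; moreover each derived sequence is again linearly recurrent with a constant bounded in terms of $K$. These three facts do all the uniform bookkeeping. First, the bound on the number of return words gives $\mathrm{Card}(A_n)\le K(K+1)$ for all $n$, so after relabelling into a fixed alphabet $\{0,\dots,K(K+1)\}$ the alphabets stabilize; second, the uniform linear recurrence of the derived sequences bounds the lengths $|\sigma_n(i)|$ (as words over $A_n$) uniformly in $n$, so the $\sigma_n$ range over a finite set $S$, yielding $\mathrm{Card}(S)<+\infty$. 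Primitivity with a uniform constant $s_0$ is obtained from the bounded recurrence gaps: choosing $|w_{n+1}|$ large enough relative to $|w_n|$ forces every level-$n$ return word to appear in every level-$(n+s_0)$ one. The remaining requirement is \emph{properness}: each return word to $w_n$ automatically begins with the first letter of $w_n$, but forcing a common last letter requires replacing the return-word coding by its conjugate (e.g.\ passing from $r$ to $w_n^{-1} r w_n$, so that each image ends with $w_n$ and hence with a fixed letter), and then tracking how this conjugation interacts with the common initial letter. I expect this synchronization of first \emph{and} last letters to be the main obstacle of this direction.

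For the sufficiency direction (primitive and proper $S$-adic with finite $S$ $\Rightarrow$ linearly recurrent), uniform recurrence is free: a primitive $S$-adic sequence is weakly primitive, hence uniformly recurrent by Durand's result recalled above. To upgrade to linear recurrence, set $M=\max\{|\sigma(a)|:\sigma\in S,\ a\}<+\infty$ and write $\rho_r=\sigma_0\cdots\sigma_{r-1}$. Primitivity with constant $s_0$ gives a crucial \emph{balancedness} estimate: since every letter occurs in $\sigma_r\cdots\sigma_{r+s_0}(a)$, one gets $\min_a|\rho_{r+s_0+1}(a)|\ge \max_c|\rho_r(c)|$, while finiteness of $S$ gives $\max_a|\rho_{r+s_0+1}(a)|\le M^{s_0+1}\max_c|\rho_r(c)|$, so at a fixed level the block lengths $|\rho_r(c)|$ have a ratio bounded by a constant. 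Given a factor $u$ of length $n$, I would choose the level $r$ so that these blocks have length comparable to $n$; then $u$ meets only a bounded number of consecutive blocks $\rho_r(c)$, and properness makes this block factorization recognizable (the fixed initial letter of each block marks block boundaries), so an occurrence of $u$ determines a factor $u'$ of $\mathbf{w}^{(r)}$ of length bounded by a constant $C$ independent of $n$. Since $u'$ has constant length, primitivity of the shifted representation bounds the gaps between its occurrences in $\mathbf{w}^{(r)}$ by a constant depending only on $s_0$ and $C$; transporting these gaps back through $\rho_r$ and using the balancedness estimate bounds the gaps of $u$ in $\mathbf{w}$ by a constant multiple of $n$, which is exactly linear recurrence.

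The delicate points, and where I would spend most of the effort, are therefore twofold. In the necessity direction it is the properness construction: the conjugation trick that forces all return words to start and end with prescribed letters, together with the verification that the uniform return-word bounds for linearly recurrent sequences survive this modification so that $S$ stays finite. In the sufficiency direction it is the recognizability of the block factorization under proper morphisms, which is what lets an occurrence of $u$ be lifted to a bounded-length factor of $\mathbf{w}^{(r)}$; once recognizability and the balancedness/ratio estimates are in place, the final gap bound is a routine (if slightly tedious) computation chaining the constants $M$, $s_0$, and the fixed ratio bound.
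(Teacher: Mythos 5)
The paper offers no proof of Theorem~\ref{thm: durand LR S-adic}: it is quoted from Durand's corrigendum~\cite{Durand_corrigentum}, with only the hint (before Theorem~\ref{thm: characterization S-adicity minimal}) that the argument is a return-word construction. Your skeleton does match Durand's route --- derived sequences along nested prefixes for necessity, and for sufficiency the balancedness estimate, which is exactly the paper's Proposition~\ref{prop: durand primitive S-adic}, combined with a choice of level $r$ adapted to $|u|$. But as written, both directions contain genuine gaps, not just deferred routine work.

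In the necessity direction, the properness synchronization you flag as ``the main obstacle'' is not a loose end one can expect to tighten: it is the entire content of the corrigendum, and your proposed fix does not work as stated. Conjugating $r \mapsto w_n^{-1} r w_n$ presupposes that every return word $r$ begins with $w_n$, hence $|r| \geq |w_n|$; but your asserted estimate $|u| \leq |r| \leq K|u|$ is false in general (occurrences of $u$ may overlap, e.g., the return word $0$ to $u=00$ inside $000$; the correct lower bound for aperiodic linearly recurrent sequences is $|r| > |u|/K$), so one must first pass to a lacunary subsequence of prefixes and telescope the morphisms before the conjugation even makes sense. Worse, after conjugation all images end with the last letter of $w_n$ but no longer share a common first letter, so you have traded one half of properness for the other; Durand resolves this by composing along a sparse subsequence of prefixes so that each composed image begins \emph{and} ends with long blocks determined by the prefix itself, and nothing in your sketch substitutes for that step. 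Since the original (pre-corrigendum) statement without properness is genuinely false, this is the one place where a proof cannot be waved through.

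In the sufficiency direction there is an actual error, not merely a gap. The mechanism ``properness makes the block factorization recognizable (the fixed initial letter of each block marks block boundaries)'' is wrong: properness forces \emph{all} level-$r$ blocks to begin with the same letter (indeed with the same long prefix), so that letter occurs throughout the sequence and cannot mark boundaries. No recognizability is needed --- one occurrence of $u$ inside $\rho_r(cc')$ suffices, since every later occurrence of the two-letter word $cc'$ in $\mathbf{w}^{(r)}$ yields an occurrence of $u$. But then your key claim that ``primitivity of the shifted representation bounds the gaps between its occurrences in $\mathbf{w}^{(r)}$ by a constant depending only on $s_0$ and $C$'' is false for factors of length $\geq 2$: if it were true, your own transport argument plus Proposition~\ref{prop: durand primitive S-adic} (which holds under primitivity alone) would prove that every primitive $S$-adic sequence with finite $S$ is linearly recurrent, contradicting the paper's Example~\ref{ex: durand primitif S-adic pas LR}. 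Properness is precisely what must be used at this point: a length-$2$ factor of $\mathbf{w}^{(r)}$ is either internal to a single block $\sigma_r(d)$, or equals the fixed straddle word $b_r a_r$, which occurs at \emph{every} block boundary; combining this with primitivity (every letter occurs in every $\sigma_r \cdots \sigma_{r+s_0}(a)$) gives the uniform gap bound, after which your balancedness computation closes the argument. So the sufficiency proof is repairable, but only by relocating properness from the spurious recognizability step to the gap bound, where your sketch currently invokes primitivity alone.
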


The next example shows that the linear recurrence is an even stronger condition that the primitivity for $S$-adic sequences  (contrary to what holds in the morphic case).

\begin{example}[Durand~\cite{Durand_corrigentum}]
\label{ex: durand primitif S-adic pas LR}
Let $S = \{\sigma,\tau\}$ where $\sigma$ and $\tau$ are defined by
\begin{eqnarray*}
	\sigma:	\begin{cases}
				0 \mapsto 021	\\
				1 \mapsto 101	\\
				2 \mapsto 212
			\end{cases}
	&
	\text{and}
	&
	\tau:	\begin{cases}
				0 \mapsto 012	\\
				1 \mapsto 021	\\
				2 \mapsto 002
			\end{cases}	
\end{eqnarray*}
The sequence
\[
	\mathbf{w} = \lim_{n \to +\infty} \sigma \tau \sigma^2 \tau \cdots \sigma^n \tau (0^{\omega})
\]
is primitive $S$-adic but not linearly recurrent.
\end{example}

\section{$S$-adicity and sub-linear complexity}

The aim of this section is to explore some ideas that one could have about the $S$-adic conjecture. First, we present some sufficient conditions for $S$-adic sequences to have a sub-linear complexity and we show that we cannot make them weaker. Then, we give some counter-examples to some conditions that one could naturally believe to be sufficient to have a sub-linear complexity.

\subsection{The growth rate of the length have still some importance}
\label{subsection croissance}

Durand~\cite{Durand_LR,Durand_corrigentum}, gave some sufficient conditions for an $S$-adic sequence to have a sub-linear complexity. The main condition is the one given by the following result and is a generalization of what exists for purely morphic sequences (see Theorem~\ref{thm: pansiot}). The other conditions are simply consequences of it.

\begin{proposition}[Durand~\cite{Durand_corrigentum}]
\label{prop: durand longueur comparable S-adic}
Let $\mathbf{w}$ be an $S$-adic sequence with ${\rm Card}(S)<+\infty$ and whose directive word is $(\sigma_n)_{n \in \mathbb{N}}$ with $\sigma_n: A_{n+1}^* \to A_n^*$ and $A_0$ the alphabet of $\mathbf{w}$. If there is a constant $D$ such that for all $n$,
\begin{equation}
\label{cond croissance S-adic}
	\max_{a,b \in A_{n+1}} \frac{|\sigma_0 \cdots \sigma_n(a)|}{|\sigma_0 \cdots \sigma_n(b)|} \leq D,
\end{equation}
then $p_{\mathbf{w}}(n) \leq D \max_{\sigma_n \in S, a \in A_{n+1}} |\sigma_n(a)| ({\rm Card}(A))^2 n$ with $A = \cup_{n \in \mathbb{N}}A_n$.
\end{proposition}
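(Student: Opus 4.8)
The plan is to combine the standard desubstitution structure of an $S$-adic sequence with the comparable-lengths hypothesis~\eqref{cond croissance S-adic}. For each $k$ I would write $\tau_k = \sigma_0\sigma_1\cdots\sigma_{k-1}$ (with $\tau_0$ the identity) and let $\mathbf{w}^{(k)}$ be the $S$-adic sequence directed by $(\sigma_{k+n})_{n\in\mathbb{N}}$, so that $\mathbf{w} = \tau_k(\mathbf{w}^{(k)})$. Setting $m_k = \min_{a\in A_k}|\tau_k(a)|$ and $M_k = \max_{a\in A_k}|\tau_k(a)|$, the hypothesis reads simply $M_k \le D\,m_k$ for every $k$. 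Since $|\tau_{n+1}(a_{n+1})|\to+\infty$ by the normalization built into the definition of an $S$-adic representation, and since the ratio bound gives $m_{n+1}\ge |\tau_{n+1}(a_{n+1})|/D$, we get $m_k\to+\infty$. Hence for each $n\ge 2$ there is a well-defined smallest index $k=k(n)\ge 1$ with $m_k\ge n$, and this is the index along which I would desubstitute.

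First I would record the counting principle. Writing $\mathbf{w}^{(k)} = c_0c_1c_2\cdots$, the sequence $\mathbf{w}$ tiles as the concatenation $\tau_k(c_0)\tau_k(c_1)\cdots$ of non-empty blocks (non-empty because $m_k\ge 1$). Fix one occurrence of a length-$n$ factor $w$; its starting position lies inside some block $\tau_k(c_i)$ at an offset $j$ with $0\le j<|\tau_k(c_i)|\le M_k$. Because $n\le m_k\le |\tau_k(c_{i+1})|$, the window $w$ cannot extend beyond the block $\tau_k(c_{i+1})$, so $w$ equals the factor of $\tau_k(c_ic_{i+1})$ of length $n$ beginning at position $j$. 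Thus $w$ is entirely determined by the pair $(c_ic_{i+1},j)$, where $c_ic_{i+1}$ is a length-$2$ factor of $\mathbf{w}^{(k)}$; the resulting map from factors of length $n$ to such pairs is injective, whence $p_{\mathbf{w}}(n)\le p_{\mathbf{w}^{(k)}}(2)\,M_k\le ({\rm Card}(A_k))^2 M_k\le ({\rm Card}(A))^2 M_k$.

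Then I would convert the bound on $M_k$ into one that is linear in $n$. Put $L=\max_{\sigma\in S,\,a}|\sigma(a)|$, the constant appearing in the statement. Since $\tau_k=\tau_{k-1}\sigma_{k-1}$ and $|\sigma_{k-1}(a)|\le L$, every image $\tau_k(a)$ is a concatenation of at most $L$ blocks of the form $\tau_{k-1}(\cdot)$, which yields the one-step growth estimate $M_k\le L\,M_{k-1}$. By minimality of $k$ we have $m_{k-1}<n$, and the ratio hypothesis gives $M_{k-1}\le D\,m_{k-1}<Dn$; combining these, $M_k\le L M_{k-1}\le DLn$. Substituting into the counting inequality gives $p_{\mathbf{w}}(n)\le D\,L\,({\rm Card}(A))^2 n$, exactly as claimed (the case $n=1$ is immediate, since $p_{\mathbf{w}}(1)\le {\rm Card}(A)$ and $D,L\ge 1$).

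The genuinely delicate point is the coupling between the choice of $k(n)$ and the two-block confinement in the counting step. One must pick $k$ so that the blocks $\tau_k(a)$ are just long enough ($m_k\ge n$) that a window of length $n$ meets at most two consecutive blocks, yet still short enough ($M_k=O(n)$) that the number of admissible offsets remains linear in $n$. It is precisely hypothesis~\eqref{cond croissance S-adic} that reconciles these two demands, by tying $M_k$ back to $m_{k-1}<n$ through the estimate $M_k\le LM_{k-1}$; without comparable lengths the offset count $M_k$ could blow up while $m_k$ only barely reaches $n$. Checking that $m_k\to+\infty$ (so $k(n)$ exists) and handling the boundary index $k=1$ are then routine.
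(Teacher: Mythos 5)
Your argument is correct and is essentially the standard proof of this result (the paper itself gives no proof, citing Durand's corrigendum): choose the level $k$ with $m_k \ge n > m_{k-1}$, confine each length-$n$ window to two consecutive blocks $\tau_k(c_ic_{i+1})$, and bound the offset count via $M_k \le LM_{k-1} \le DLn$. The only point you gloss over is the existence of the desubstituted sequence $\mathbf{w}^{(k)}$ with $\mathbf{w} = \tau_k(\mathbf{w}^{(k)})$, which need not arise as the naive limit but follows by a routine compactness/accumulation-point argument (or by working directly with the language of factors of $\sigma_k\cdots\sigma_m(a_{m+1})$), a convention the paper itself uses without comment.
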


\begin{corollary}[Durand~\cite{Durand_corrigentum}]
\label{cor: durand uniform S-adic}
If $\mathbf{w}$ is $S$-adic with ${\rm Card}(S)< \infty$ and all morphisms in $S$ are uniform, then we have $p_{\mathbf{w}}(n) \leq l ({\rm Card}(A))^2 n$ with $A = \cup_{n \in \mathbb{N}}A_n$ and $l = \max\limits_{\sigma \in S, a \in A(\sigma)} |\sigma(a)|$.
\end{corollary}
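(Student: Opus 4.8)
The plan is to obtain this as an immediate specialization of Proposition~\ref{prop: durand longueur comparable S-adic}: the only thing to check is that uniformity of every morphism in $S$ forces the comparison constant $D$ in~\eqref{cond croissance S-adic} to take its smallest possible value $D=1$, after which the stated bound falls out mechanically.

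First I would unwind the definition. A morphism $\sigma$ is \emph{uniform} precisely when there is an integer $\ell_\sigma$ with $|\sigma(a)| = \ell_\sigma$ for every letter $a$ in its domain. The single observation on which everything rests is that uniformity is preserved under composition, with multiplicative length factor. Indeed, if $w$ is any word over the domain of a uniform morphism $\sigma$, then $|\sigma(w)| = \ell_\sigma |w|$, since $\sigma$ replaces each of the $|w|$ letters of $w$ by a block of the common length $\ell_\sigma$. A short induction on $n$ (applying the morphisms from $\sigma_n$ down to $\sigma_0$) then gives, for every letter $a \in A_{n+1}$,
\[
	|\sigma_0 \sigma_1 \cdots \sigma_n(a)| = \ell_{\sigma_0} \ell_{\sigma_1} \cdots \ell_{\sigma_n},
\]
a value that does not depend on the chosen letter $a$.

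Consequently, for every $n$ the ratio appearing in~\eqref{cond croissance S-adic} equals $1$, so the hypothesis of Proposition~\ref{prop: durand longueur comparable S-adic} holds with $D = 1$. Applying that proposition with this value of $D$ yields
\[
	p_{\mathbf{w}}(n) \leq \max_{\sigma_n \in S,\, a \in A_{n+1}} |\sigma_n(a)| \cdot ({\rm Card}(A))^2 \, n .
\]
It then remains to bound the leading factor: since each $\sigma_n$ belongs to $S$ and its domain alphabet is $A_{n+1}$, we have $\max_{a \in A_{n+1}} |\sigma_n(a)| \leq \max_{\sigma \in S,\, a \in A(\sigma)} |\sigma(a)| = l$ for every $n$. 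Substituting gives the announced inequality $p_{\mathbf{w}}(n) \leq l\,({\rm Card}(A))^2 n$.

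I do not expect any genuine obstacle here, and I would say so explicitly in the write-up: the corollary is a direct consequence of Proposition~\ref{prop: durand longueur comparable S-adic}, and the entire argument reduces to the elementary fact that a composition of uniform morphisms is uniform. Whatever difficulty the result carries is located inside the proof of Proposition~\ref{prop: durand longueur comparable S-adic} itself, which we are entitled to assume.
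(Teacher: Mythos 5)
Your proof is correct and matches the paper's intent exactly: the paper presents this corollary as a direct consequence of Proposition~\ref{prop: durand longueur comparable S-adic} (``the other conditions are simply consequences of it''), and your observation that a composition of uniform morphisms is uniform, forcing $D=1$ in Condition~\eqref{cond croissance S-adic}, is precisely the intended specialization. The final bounding of $\max_{\sigma_n \in S,\, a \in A_{n+1}} |\sigma_n(a)|$ by $l$ is also handled correctly, so there is nothing to add.
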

 
\begin{proposition}[Durand~\cite{Durand_LR}]
\label{prop: durand primitive S-adic}
If $\mathbf{w}$ is a primitive $S$-adic sequence with ${\rm Card}(S)<+\infty$ and constant $s_0$ directed by $(\sigma_n)_{n \in \mathbb{N}}$ with $\sigma_n: A_{n+1}^* \to A_n^*$ and $A_0$ the alphabet of $\mathbf{w}$, then there exists a constant $D$ such that for all non-negative integers $r$, 
\[
	\max_{a,b \in A_{r+s_0+1}}\frac{|\sigma_r \cdots \sigma_{r+s_0}(a)|}{|\sigma_r \cdots \sigma_{r+s_0}(b)|} \leq D.
\]
\end{proposition}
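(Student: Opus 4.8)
The plan is to exploit two features of the hypotheses that are both uniform in $r$: the finiteness of $S$, which caps how much a single application of a morphism can lengthen a word, and the primitivity constant $s_0$, which fixes once and for all how many morphisms are composed in each window $\sigma_r \cdots \sigma_{r+s_0}$. Writing $\psi_r = \sigma_r \cdots \sigma_{r+s_0} : A_{r+s_0+1}^* \to A_r^*$, the quantity to be controlled is $|\psi_r(a)|/|\psi_r(b)|$, so it suffices to bound $|\psi_r(a)|$ from above and $|\psi_r(b)|$ from below by constants that do not depend on $r$.

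For the upper bound, I would set $M = \max_{\sigma \in S} \max_{c} |\sigma(c)|$, which is finite because $S$ is a finite set of morphisms. A single morphism from $S$ multiplies the length of the image of any word by at most $M$, since $|\sigma(w)| = \sum_i |\sigma(w_i)| \leq |w| \cdot M$. Composing exactly $s_0+1$ such morphisms therefore gives $|\psi_r(a)| \leq M^{s_0+1}$ for every letter $a \in A_{r+s_0+1}$; this is a short induction on the number of composed morphisms, and the crucial point is that the exponent $s_0+1$ is independent of $r$.

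For the lower bound, I would invoke primitivity with constant $s_0$: for every $b \in A_{r+s_0+1}$, the word $\psi_r(b) = \sigma_r \cdots \sigma_{r+s_0}(b)$ contains every letter of $A_r$, and is in particular non-empty, whence $|\psi_r(b)| \geq {\rm Card}(A_r) \geq 1$. Combining the two estimates yields
\[
	\frac{|\sigma_r \cdots \sigma_{r+s_0}(a)|}{|\sigma_r \cdots \sigma_{r+s_0}(b)|} \leq M^{s_0+1}
\]
for all $a,b \in A_{r+s_0+1}$ and all $r \in \mathbb{N}$, so one may simply take $D = M^{s_0+1}$.

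The statement is thus not difficult once the hypotheses are unwound; the only thing to watch is that $D$ must be genuinely independent of $r$. This is guaranteed because both the per-step length factor $M$ (coming from ${\rm Card}(S) < +\infty$) and the number of steps $s_0+1$ (coming from the primitivity constant) are fixed, so the windows $\psi_r$ have uniformly bounded image lengths above and uniformly non-empty images below. Note that primitivity is used only to secure the lower bound, i.e.\ the non-erasure of the composed morphism $\psi_r$; the finiteness of $S$ alone does all the work for the upper bound.
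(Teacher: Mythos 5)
Your proof is correct. The statement, as recorded in this survey, concerns only the fixed-length window $\psi_r = \sigma_r\cdots\sigma_{r+s_0}$, and for that your two uniform estimates are exactly right: $|\psi_r(a)| \leq M^{s_0+1}$ with $M = \max_{\sigma \in S}\max_c |\sigma(c)| < +\infty$ (finiteness of $S$), and $|\psi_r(b)| \geq \mathrm{Card}(A_r) \geq 1$ (primitivity with constant $s_0$ forces every letter of $A_r$ into $\psi_r(b)$, so in particular $\psi_r$ is non-erasing). Both bounds are manifestly independent of $r$, so $D = M^{s_0+1}$ works. The paper itself gives no proof, deferring to Durand, and your argument is the natural one for the statement as printed. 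One remark worth keeping in mind: the windowed ratio bound you proved is the easy half of the story; the substantive point in Durand's work is that, combined with Proposition~\ref{prop: durand longueur comparable S-adic} applied to the contracted directive word $(\tau_k)_{k}$ with $\tau_k = \sigma_{k(s_0+1)}\cdots\sigma_{(k+1)(s_0+1)-1}$, one needs comparability of the lengths of the \emph{full} compositions $\tau_0\cdots\tau_n(a)$, which does not follow from the trivial bounds alone and genuinely uses primitivity to propagate comparability across windows. Your proof establishes the proposition as stated, but not that stronger chained estimate.
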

  
\begin{corollary}
\label{cor: durand 1 morphisme fortement primitif}
Let $S$ be a set of non-erasing morphisms and $\tau \in S$ be strongly primitive (\textit{i.e.}, for all letters $a$ of $\tau(A)$, $a$ occurs in all images $\tau(b)$ for $b \in A$). Any $S$-adic sequence for which $\tau$ occurs infinitely often with bounded gaps in the directive word is uniformly recurrent and has a sub-linear complexity.
\end{corollary}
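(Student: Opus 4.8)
The plan is to telescope the directive word at the occurrences of $\tau$ and to exploit strong primitivity to produce a \emph{primitive} $S$-adic representation over the sub-alphabet $B = \tau(A)$, to which the results of this section apply directly. Let $(\sigma_n)_{n \in \mathbb{N}}$ be the directive word of $\mathbf{w}$, let $n_1 < n_2 < \cdots$ be the positions with $\sigma_{n_k} = \tau$, and let $g$ bound the gaps $n_{k+1}-n_k$. Since $\tau: A^* \to A^*$, each occurrence forces $A_{n_k} = A_{n_k+1} = A$, so the telescoped morphisms $\psi_k = \sigma_{n_k} \sigma_{n_k+1} \cdots \sigma_{n_{k+1}-1}$ are well-defined morphisms from $A^*$ to $A^*$, and $\mathbf{w} = h(\mathbf{w}')$, where $h = \sigma_0 \cdots \sigma_{n_1-1}$ is a non-erasing prefix morphism (the identity if $n_1 = 0$) and $\mathbf{w}' = \lim_k \psi_1 \cdots \psi_k(\cdot)$.

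The single combinatorial consequence of strong primitivity that drives everything is this: since every $\tau(b)$, $b \in A$, contains all of $B = \tau(A)$, for \emph{any} non-empty word $u \in A^*$ the image $\tau(u)$ contains every letter of $B$. As each $\psi_k = \tau \circ \rho_k$ with $\rho_k = \sigma_{n_k+1}\cdots\sigma_{n_{k+1}-1}$ non-erasing, we get $\psi_k(a) = \tau(\rho_k(a)) \supseteq B$ for every letter $a$. Hence $\mathbf{w}'$, being an image of $\tau$, is a sequence over $B$, every level alphabet of the representation $(\psi_k)$ is contained in $B$, and each image $\psi_k(a)$ already contains the whole level alphabet. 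Thus $(\psi_k)$ is a primitive $S$-adic representation of $\mathbf{w}'$ (with constant $s_0 = 0$), in particular weakly primitive. By Theorem~\ref{thm: characterization S-adicity minimal}, $\mathbf{w}'$ is uniformly recurrent, and since $\mathbf{w} = h(\mathbf{w}')$ is a non-erasing morphic image of a uniformly recurrent sequence, $\mathbf{w}$ is uniformly recurrent too.

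For the complexity, I would first observe that the telescoped morphisms form a finite set: each $\psi_k$ is a product of at most $g$ morphisms of $S$, so, using that $S$ is finite (the standing hypothesis for the complexity results here), they range over a finite set $S'$. Put $M = \max_{\psi \in S', a} |\psi(a)|$ and $L_k(a) = |\psi_1 \cdots \psi_k(a)|$. Because each $\psi_k(a)$ contains every letter of $B$, writing $T_{k-1} = \sum_{c \in B} L_{k-1}(c)$ one gets $T_{k-1} \le L_k(a) \le M\,T_{k-1}$ for every $a$, whence $\max_{a,b} L_k(a)/L_k(b) \le M$. This is precisely condition~\eqref{cond croissance S-adic} for $(\psi_k)$ with $D = M$ (and is the mechanism underlying Proposition~\ref{prop: durand primitive S-adic}). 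Proposition~\ref{prop: durand longueur comparable S-adic} then gives $p_{\mathbf{w}'}(n) \in O(n)$, and Proposition~\ref{prop: cassaigne nicolas action morphisme sur une suite} applied to the non-erasing morphism $h$ yields $p_{\mathbf{w}}(n) \le (\max_a |h(a)|)\,p_{\mathbf{w}'}(n) \in O(n)$, so $\mathbf{w}$ has sub-linear complexity.

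The main obstacle, and the reason telescoping is needed, is that a single occurrence of $\tau$ surrounded by arbitrary non-erasing morphisms does not by itself force all letters of a level alphabet to reappear: strong primitivity controls only the letters of $B = \tau(A)$, while the intermediate morphisms may destroy such letters or introduce letters of $A \setminus B$. The realization that dissolves this difficulty is that after one application of $\tau$ the whole sequence lives over $B$, and on $B$ every telescoped image is ``full''; the delicate point to check carefully is exactly that the telescoped level alphabets are contained in $B$, so that primitivity with $s_0 = 0$ genuinely holds. The only external input beyond strong primitivity and bounded gaps is the finiteness of $S$, which is needed solely for the complexity bound and not for uniform recurrence.
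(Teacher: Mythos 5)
Correct, and this is essentially the paper's intended argument: the corollary is stated there without proof as a consequence of Propositions~\ref{prop: durand longueur comparable S-adic} and~\ref{prop: durand primitive S-adic}, and your telescoping at the occurrences of $\tau$ --- producing a primitive representation (with $s_0=0$) over a finite set $S'$ whose level alphabets lie in $B=\tau(A)$, and verifying condition~\eqref{cond croissance S-adic} directly with $D=M$ --- is exactly the mechanism those propositions encode, with uniform recurrence obtained, as in the paper's own proof of Proposition~\ref{prop: morse+n^2}, from weak primitivity and Theorem~\ref{thm: characterization S-adicity minimal}. Your two supplementary observations are also sound: the prefix $h=\sigma_0\cdots\sigma_{n_1-1}$ is correctly absorbed via Proposition~\ref{prop: cassaigne nicolas action morphisme sur une suite}, and finiteness of $S$ is indeed indispensable for the complexity half (taking $S=\{\mu\}\cup\{\gamma^k \mid k\ge 1\}$ with unbounded $(k_n)$ in Example~\ref{ex: morse+n^2} makes $\tau=\mu$ strongly primitive and occurring with gap $2$, yet the complexity is not sub-linear), while only weak primitivity --- hence neither finiteness of $S$ nor bounded gaps --- is needed for uniform recurrence.
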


The everywhere growing property can be naturally transposed to $S$-adic sequences. But, contrary to what holds in the purely morphic case, under that assumption, the condition given by Equation~\eqref{cond croissance S-adic} is not equivalent to sub-linear complexity. Indeed, even some Sturmian sequences do not satisfy it (those with unbounded coefficient $(k_n)_{n \in \mathbb{N}}$ in Example~\ref{ex: sturmian}). One could therefore try to make that condition a little bit weaker.

We can observe in Example~\ref{ex: sturmian} that Condition~\eqref{cond croissance S-adic} is still infinitely often satisfied. Indeed, let us consider the  directive word $(\tau_n)_{n \in \mathbb{N}}$ defined by $\tau_0 = L_0^{k_0} R_0^{k_1} L_1$, $\tau_{2n} = L_0^{k_{4n}-1} R_0^{k_{4n+1}} L_1$ for $n \geq 1$ and $\tau_{2n+1} = L_1^{k_{4n+2}-1} R_1^{k_{4n+3}} L_0$ for $n \geq 0$. For all $n$, there exist some integers $i$ and $j$ such that either $\tau_n = [0^i10^{j+1},0^i10^j]$ or $\tau_n = [1^i01^j,1^i01^{j+1}]$. With these morphisms, we have
\begin{equation}
\label{eq: sturmian contracte}
	\tau_0 \tau_1 \cdots \tau_n \cdots = L_0^{k_0} R_0^{k_1} L_1^{k_2} R_1^{k_3} \cdots L_1^{k_{4n+2}} R_1^{k_{4n+3}} \cdots
\end{equation}
and there is a constant $K$ such that for all $n$, $\max_{a,b \in A_{n+1}} \frac{|\tau_0 \cdots \tau_n(a)|}{|\tau_0 \cdots \tau_n(b)|} \leq K$. The sequence $(\tau_n)_{n \in \mathbb{N}}$ is called a \textit{contraction} of the directive word $(L_0^{k_0} R_0^{k_1} \cdots)$. Observe that the set $S = \{\tau_n \mid n \in \mathbb{N}\}$ of morphisms might be infinite (when $(k_n)_{n \in \mathbb{N}}$ is unbounded). Consequently, it may be interesting to work either with infinite sets of morphisms or with contractions.

But, Example~\ref{ex: morse+n^2} below shows that Proposition~\ref{prop: durand longueur comparable S-adic} is not true anymore when ${\rm Card}(S)=\infty$. Indeed, if we consider the  contraction $(\sigma_n)_{n \in \mathbb{N}}$ of the directive word of Proposition~\ref{prop: morse+n^2} defined for all $n \geq 0$ by 
\begin{equation}
\label{eq: morse+n^2 contr.acte}
	\sigma_n = \gamma^{k_n} \mu,
\end{equation}
we have $|\sigma_0 \cdots \sigma_n(0)| = |\sigma_0 \cdots \sigma_n(1)|$ for all $n$ although the complexity is not sub-linear as soon as the sequence $(k_n)_{n \in \mathbb{N}}$ is unbounded.

\begin{example} 
\label{ex: morse+n^2}
Let $\mu$ be the \textit{Thue-Morse morphism} $[01,10]$ and let $\gamma$ be the morphism $[001,1]$. From Theorem~\ref{thm: pansiot} we know that the sequence 
\[
	\gamma^{\omega}(0) = 001 001^2 001 001^3 001 001^2 001 001^4 \cdots  
\]
has a quadratic complexity.
 
\begin{proposition}
\label{prop: morse+n^2}
Let $(k_n)_{n \in \mathbb{N}}$ be a sequence of non-negative integers. The sequence 
\[
	\mathbf{w}_{\gamma,\mu} = \lim_{n \to +\infty} \gamma^{k_0} \mu \gamma^{k_1} \mu \gamma^{k_2} \mu \cdots \gamma^{k_n} \mu (0^{\omega})
\]
is uniformly recurrent. Moreover, $\mathbf{w}_{\gamma,\mu}$ has an at most linear complexity if and only if the sequence $(k_n)_{n \in \mathbb{N}}$ is bounded. Finally, for all $n$ we have 
\[
	|\gamma^{k_0} \mu \gamma^{k_1} \mu \gamma^{k_2} \mu \cdots \gamma^{k_n} \mu(0)| = |\gamma^{k_0} \mu \gamma^{k_1} \mu \gamma^{k_2} \mu \cdots \gamma^{k_n} \mu(1)|,
\]
and denoting
\[
	\ell_n = |\gamma^{k_0} \mu \gamma^{k_1} \mu \gamma^{k_2} \mu \cdots \gamma^{k_n} \mu(0)|,
\]
we have
\[
	p_{\mathbf{w}_{\gamma,\mu}}(\ell_n) \leq 4 \ell_n-2.
\]
\end{proposition}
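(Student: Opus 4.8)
The plan is to establish the four assertions in order, writing throughout $\Phi_n:=\gamma^{k_0}\mu\gamma^{k_1}\mu\cdots\gamma^{k_n}\mu$ and letting $\mathbf{w}^{(n)}$ denote the sequence directed by the tail $\gamma^{k_n}\mu\gamma^{k_{n+1}}\mu\cdots$, so that $\mathbf{w}_{\gamma,\mu}=\Phi_{n-1}(\mathbf{w}^{(n)})=\Phi_n(\mathbf{w}^{(n+1)})$. For \emph{uniform recurrence} I would read the directive word as an infinite sequence of morphisms over $\{0,1\}$ and check weak primitivity directly: given a level $r$, let $m\ge r$ be the position of the next occurrence of $\mu$; then $\sigma_r\cdots\sigma_m=\gamma^{m-r}\mu$, and since $\mu(a)$ contains a $0$ while $\gamma(0)=001$ forces $\gamma$ to send any word containing a $0$ to a word containing both letters, $\sigma_r\cdots\sigma_m(a)$ contains $0$ and $1$ for every $a$. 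As $\mu$ occurs infinitely often, the representation is weakly primitive, hence $\mathbf{w}_{\gamma,\mu}$ is uniformly recurrent by~\cite{Durand_corrigentum} (the discussion preceding Theorem~\ref{thm: characterization S-adicity minimal}). The \emph{length equality} $|\Phi_n(0)|=|\Phi_n(1)|$ is immediate because the first morphism applied is $\mu$: writing $\Psi_n:=\gamma^{k_0}\mu\cdots\gamma^{k_n}$ one has $\Phi_n(0)=\Psi_n(01)$ and $\Phi_n(1)=\Psi_n(10)$, so both have length $\ell_n=|\Psi_n(0)|+|\Psi_n(1)|$.

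For the bound $p(\ell_n)\le 4\ell_n-2$ I would use the equal length to view $\mathbf{w}_{\gamma,\mu}$ as a concatenation of the two blocks $\Phi_n(0)$ and $\Phi_n(1)$, each of length $\ell_n$, in the order given by $\mathbf{w}^{(n+1)}$. A window of length $\ell_n$ meets at most two consecutive blocks, so a factor of length $\ell_n$ is determined either by a single block (offset $0$: at most $2$ factors) or, for each offset $t\in\{1,\dots,\ell_n-1\}$, by the straddled pair of letters of $\mathbf{w}^{(n+1)}$ (at most ${\rm Card}({\rm Fac}_2(\mathbf{w}^{(n+1)}))\le 4$ factors). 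Summing gives $p_{\mathbf{w}_{\gamma,\mu}}(\ell_n)\le 2+4(\ell_n-1)=4\ell_n-2$. In particular, if $(k_n)$ is bounded then $S=\{\gamma^{j}\mu\}$ is finite and the length equality makes Condition~\eqref{cond croissance S-adic} hold with $D=1$, so Proposition~\ref{prop: durand longueur comparable S-adic} yields an at most linear complexity.

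For the hard implication ($(k_n)$ unbounded $\Rightarrow$ complexity not $O(n)$) I would fix $n$ with $k_n$ large and use $\mathbf{w}_{\gamma,\mu}=\Phi_{n-1}(\mathbf{w}^{(n)})$, where $\Phi_{n-1}$ is injective ($\{001,1\}$ and $\{01,10\}$ are codes) and of constant image length $\ell_{n-1}$. The factor $\gamma^{k_n}(0)$ of $\mathbf{w}^{(n)}$ is a prefix of $\gamma^{\omega}(0)$, which has quadratic complexity (Example~\ref{ex: morse+n^2}); since a prefix $\gamma^{Ct}(0)$ already contains every length-$t$ factor of $\gamma^{\omega}(0)$ for a fixed constant $C$, one gets $p_{\mathbf{w}^{(n)}}(t)\ge c\,t^2$ for $t\le k_n/C$. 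The injective, constant-length transfer gives, even at offset $0$, $p_{\mathbf{w}_{\gamma,\mu}}(t\ell_{n-1})\ge p_{\mathbf{w}^{(n)}}(t)$, which already forces $p_{\mathbf{w}_{\gamma,\mu}}$ to be unbounded and hence $\mathbf{w}_{\gamma,\mu}$ to be aperiodic. To get genuine super-linearity I would refine the count over all phases: for each $s\in\{0,\dots,\ell_{n-1}-1\}$ and each length-$(t+1)$ factor $u$ of $\mathbf{w}^{(n)}$, take the length-$t\ell_{n-1}$ window of $\Phi_{n-1}(u)$ starting at $s$. Windows with different fully covered interior blocks are distinct, and windows of different phases are distinct, so $p_{\mathbf{w}_{\gamma,\mu}}(t\ell_{n-1})\ge \ell_{n-1}\,p_{\mathbf{w}^{(n)}}(t-1)\ge c'\,\ell_{n-1}t^2$, whence $p_{\mathbf{w}_{\gamma,\mu}}(m)/m\ge c'\,t\to\infty$ along $m=t\ell_{n-1}$ with $t\sim k_n\to\infty$.

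The two non-routine ingredients are the phase-injectivity of the windows and the quantitative prefix statement, and the first is the main obstacle. The naive offset-$0$ transfer loses the factor $\ell_{n-1}$ produced by the phases and only yields unboundedness, not super-linearity; one genuinely needs that a long factor of $\mathbf{w}_{\gamma,\mu}$ determines its phase modulo $\ell_{n-1}$, which is recognizability of $\Phi_{n-1}$ on the aperiodic uniformly recurrent word $\mathbf{w}_{\gamma,\mu}$ (Moss\'e's theorem), or an ad hoc synchronization exploiting that the blocks $\Phi_{n-1}(0)$ and $\Phi_{n-1}(1)$ are distinguishable. The second ingredient, that a finite prefix $\gamma^{Ct}(0)$ already realizes the quadratic complexity at length $t$, requires a short analysis of the run structure of $\gamma^{\omega}(0)$.
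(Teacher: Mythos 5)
Your treatment of three of the four assertions is correct and essentially matches or cleanly improves the paper's. The weak-primitivity argument for uniform recurrence is the paper's own; the observation that, with $\Phi_n=\gamma^{k_0}\mu\cdots\gamma^{k_n}\mu$ and $\Psi_n=\gamma^{k_0}\mu\cdots\gamma^{k_n}$, one has $\Phi_n(0)=\Psi_n(01)$ and $\Phi_n(1)=\Psi_n(10)$ gives the length equality immediately; your offset-by-offset count $2+4(\ell_n-1)=4\ell_n-2$ is a tidier organization of the paper's count (which bounds by $4(\ell_n+1)$ factors inside the words $\Phi_n(v)$, $|v|=2$, and then removes the overcounted blocks $\Phi_n(0)$, $\Phi_n(1)$); and handling the bounded case via the contraction $\sigma_n=\gamma^{k_n}\mu$, for which Condition~\eqref{cond croissance S-adic} holds with $D=1$ and Proposition~\ref{prop: durand longueur comparable S-adic} applies, is a legitimate alternative to the paper's appeal to Corollary~\ref{cor: durand 1 morphisme fortement primitif}.

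The genuine gap is in the direction ``$(k_n)$ unbounded $\Rightarrow$ complexity not at most linear'', and it sits exactly where you flag it: phase-injectivity of the windows. Your refined count needs every factor of length $t\ell_{n-1}$ of $\mathbf{w}_{\gamma,\mu}$ to determine its offset modulo $\ell_{n-1}$, and Moss\'e's theorem does not deliver this: it concerns occurrences of images inside the fixed point of a primitive substitution, whereas $\mathbf{w}_{\gamma,\mu}$ is not a fixed point of $\Phi_{n-1}$. You would need a recognizability theorem for an injective constant-length morphism at an aperiodic point (a much later and heavier result than anything in this paper), and even granting it, the recognizability threshold $L(n)$ for the pair $(\Phi_{n-1},\mathbf{w}^{(n)})$ is not effective: your usable window lengths are capped at $t\ell_{n-1}$ with $t\leq k_n/C$, and nothing excludes $L(n)>(k_n/C)\,\ell_{n-1}$, in which case factors may admit several parsings, the factor $\ell_{n-1}$ is lost, and you fall back on the naive offset-$0$ transfer $p_{\mathbf{w}_{\gamma,\mu}}(t\ell_{n-1})\geq c\,t^2$ --- which, since $t/\ell_{n-1}$ can tend to $0$, is compatible with linear complexity and proves only aperiodicity. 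Your second deferred ingredient (every length-$t$ factor of $\gamma^{\omega}(0)$ occurs in $\gamma^{Ct}(0)$) is plausible but also unproven, and stronger than needed. The paper's proof shows how to avoid both ingredients at once: instead of transferring complexity counts, it transfers \emph{right special factors} and invokes Cassaigne's Theorem~\ref{theorem: cassaigne diff de complexite}. Since $\gamma(0)$ and $\mu(0)$ begin with $0$ while $\gamma(1)$ and $\mu(1)$ begin with $1$, the image under $\Phi_n$ of a right special factor of $\mathbf{w}^{(n+1)}$ is again right special, and injectivity together with the constant expansion $|\Phi_n(u)|=|u|2^q$ sends distinct right special factors of length $m$ to distinct right special factors of the \emph{single} length $m2^q$; the only appearance fact needed is that right special factors of length $m$ of $\gamma^{\omega}(0)$ occur in $\gamma^{m+1}(0)$, hence in $\gamma^{k_{n+1}}(0)$ once $k_{n+1}>m$. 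Cassaigne's theorem then turns ``unboundedly many right special factors of a given length'' directly into failure of sub-linearity, with no phase analysis whatsoever. To repair your write-up along sound lines, replace the window count by this special-factor transfer.
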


Before proving the result, recall that a \textit{right} (resp. \textit{left}) \textit{special factor} in a language $L \subset A^*$ is a factor such that there are at least two letters $a,b \in A$ for which $ua$ and $ub$ (resp. $au$ and $bu$) belong to $L$.  A \textit{bispecial factor is a factor which is both left and right special.} This definition can be extended to words and sequences $\mathbf{w}$ by replacing $L$ by ${\rm Fac}(\mathbf{w})$. Let us recall the following result.

\begin{theorem}[Cassaigne~\cite{Cassaigne_big_thm}]
\label{theorem: cassaigne diff de complexite}
A sequence has a sub-linear complexity if and only if there is a constant $K$ such that for all $n$, the number of right (resp. left) special factors of length $n$ is less than $K$.
\end{theorem}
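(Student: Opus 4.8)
The plan is to tie the first difference of the complexity function to the special factors. For $u \in {\rm Fac}_n(\mathbf{w})$ write $d^+(u) = {\rm Card}(\{a \in A \mid ua \in {\rm Fac}(\mathbf{w})\})$ and $d^-(u) = {\rm Card}(\{a \in A \mid au \in {\rm Fac}(\mathbf{w})\})$, so that $u$ is right special iff $d^+(u) \ge 2$ and left special iff $d^-(u) \ge 2$. Since $\mathbf{w}$ is infinite, every factor of length $n$ extends to the right, and dropping the last letter gives a surjection ${\rm Fac}_{n+1}(\mathbf{w}) \to {\rm Fac}_n(\mathbf{w})$ whose fibre over $u$ has size $d^+(u)$. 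Hence $p_{\mathbf{w}}(n+1) = \sum_{u \in {\rm Fac}_n(\mathbf{w})} d^+(u)$ and
\[
	p_{\mathbf{w}}(n+1) - p_{\mathbf{w}}(n) = \sum_{u \in {\rm Fac}_n(\mathbf{w})} (d^+(u) - 1) = \sum_{u \text{ right special},\, |u| = n} (d^+(u) - 1).
\]
Writing $s_r(n)$ for the number of right special factors of length $n$, this gives the sandwich $s_r(n) \le p_{\mathbf{w}}(n+1) - p_{\mathbf{w}}(n) \le ({\rm Card}(A)-1)\,s_r(n)$; in particular $p_{\mathbf{w}}$ is non-decreasing. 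The symmetric computation with suffixes yields the same bounds with the count $s_\ell(n)$ of left special factors (up to a harmless correction for the single prefix of each length, whose left degree may vanish), so it suffices to treat the right special case.

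The direction from bounded special factors to sub-linear complexity is then immediate. If $s_r(n) \le K$ for all $n$, the sandwich gives $p_{\mathbf{w}}(n+1) - p_{\mathbf{w}}(n) \le ({\rm Card}(A)-1)K$, and summing from $0$ to $n-1$ yields $p_{\mathbf{w}}(n) \le 1 + ({\rm Card}(A)-1)K\,n$, which is sub-linear; the computation for $s_\ell$ is identical.

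For the converse I would assume $p_{\mathbf{w}}(n) \le Cn$ and try to bound $s_r(n)$ uniformly. Telescoping the sandwich already gives $\sum_{n=0}^{N-1} s_r(n) \le p_{\mathbf{w}}(N) - 1 \le CN$, so the averages of $s_r$ are bounded by $C$ and, in particular, $s_r(n) \le C$ for infinitely many $n$. The remaining task is to upgrade this bound on the Cesàro mean to a bound on every value, and for this I would pass to the second difference. Only bispecial factors contribute to it: setting $e(u) = {\rm Card}(\{(a,b) \in A^2 \mid aub \in {\rm Fac}(\mathbf{w})\})$, one checks that
\[
	p_{\mathbf{w}}(n+2) - 2p_{\mathbf{w}}(n+1) + p_{\mathbf{w}}(n) = \sum_{u \text{ bispecial},\, |u|=n} m(u), \qquad m(u) = e(u) - d^+(u) - d^-(u) + 1,
\]
because the summand vanishes for any $u$ that is not simultaneously left and right special (if $d^+(u)=1$ then every occurrence of $u$ is followed by the same letter, forcing $e(u)=d^-(u)$, and dually for $d^-(u)=1$). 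Each term $m(u)$ is bounded in absolute value by $({\rm Card}(A)-1)^2$.

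The hard part will be to rule out spikes: a priori the first difference could take rare large values while keeping the Cesàro mean bounded. To control this I would track how right special factors are created and destroyed between consecutive lengths via the suffix map that sends a right special factor of length $n+1$ to its length-$n$ suffix (again right special), whose fibres are governed by the bispecial factors of length $n$ through the quantities $m(u)$. New right special factors can appear only at bispecial factors, and the net upward variation of the first difference must therefore be matched, up to an additive constant, against the linear budget $\sum_{n<N} s_r(n) \le CN$. Turning this bookkeeping into a uniform bound — quantifying the balance between the bispecial factors with $m(u)>0$ and those with $m(u)<0$ so as to conclude that $p_{\mathbf{w}}(n+1)-p_{\mathbf{w}}(n)$ is genuinely bounded and not merely bounded on average — is the crux of the argument and is precisely the substance of Cassaigne's analysis.
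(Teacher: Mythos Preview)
The paper does not prove this theorem: it is stated with attribution to Cassaigne~\cite{Cassaigne_big_thm} and used as a black box in the proofs of Propositions~\ref{prop: morse+n^2} and~\ref{prop: n^2}. There is therefore no ``paper's own proof'' to compare against.

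That said, your proposal is not a complete proof. The forward implication (bounded number of right special factors $\Rightarrow$ sub-linear complexity) is correct and standard, and your sandwich inequality $s_r(n) \le p_{\mathbf{w}}(n+1)-p_{\mathbf{w}}(n) \le ({\rm Card}(A)-1)\,s_r(n)$ is the right starting point. But for the converse you only establish that the Ces\`aro averages of $s_r$ are bounded, and then you explicitly defer the crucial step --- passing from a bound on averages to a uniform bound on $s_r(n)$, equivalently showing that $p_{\mathbf{w}}(n+1)-p_{\mathbf{w}}(n)$ is bounded when $p_{\mathbf{w}}(n)\le Cn$ --- by saying it ``is precisely the substance of Cassaigne's analysis.'' That step is the entire content of the theorem in this direction; without it you have proved nothing beyond the trivial observation that $\liminf_n s_r(n) \le C$. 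The bispecial bookkeeping you sketch is indeed the right mechanism, but the actual argument (a careful counting that bounds the positive part of the second difference over an interval by the first difference at its right endpoint, see Cassaigne's paper) is missing. As written, the proposal is an outline of the easy half plus an acknowledgment that the hard half remains to be done.
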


\begin{proof}[Proof of Proposition~\ref{prop: morse+n^2}]
First, as $\mu$ occurs infinitely often in the directive word, 
$\mathbf{w}_{\gamma,\mu}$ is weakly primitive $\{\gamma, \mu\}$-adic and so, by Theorem~\ref{thm: characterization S-adicity minimal}, it is uniformly recurrent.

Now let us study the complexity depending on the sequence $(k_n)_{n \in \mathbb{N}}$. The case of a bounded sequence is a direct consequence of Corollary~\ref{cor: durand 1 morphisme fortement primitif}. Hence let us consider that the sequence $(k_n)_{n \in \mathbb{N}}$ is unbounded and let us show that the complexity is not at most linear. Due to Theorem~\ref{theorem: cassaigne diff de complexite}, we only have to prove that the number of right special factors of length $n$ of $\mathbf{w}_{\gamma,\mu}$ is unbounded.

As already mentioned in Example~\ref{ex: morse+n^2}, the fixed point $\gamma^{\omega}(0)$ has a quadratic complexity. Consequently the number of right special factors of $\gamma^{\omega}(0)$ of a length $n$ is unbounded. 

We let readers check that bispecial factors of $\gamma^\omega(0)$ are words $\varepsilon$, $0$, $1$ and words $\gamma(1v)$ with $v$ bispecial, and that other right special factors of $\gamma^\omega(0)$ are the suffixes of words $\gamma(v)$ for $v$ right special. Thus, by induction, one can state that all right special factors of length $n$ of 
 $\gamma^{\omega}(0)$ occur in $\gamma^{n+1}(0)$. 
 
Now let us show that if $u$ is a right special factor in $\gamma^{k_{n+1}}(0)$, then $\gamma^{k_0} \mu \gamma^{k_1} \mu \cdots \gamma^{k_n} \mu (u)$ is a right special factor of $\mathbf{w}$ of length $|u| 2^{q}$ with $q = \sum_{i = 0}^{n} (k_i +1)$. Indeed, as $\mu(0)$ and $\gamma(0)$ start with $0$ and $\mu(1)$ and $\gamma(1)$ start with $1$, the image of $u$ is still a right special factor. Moreover, $\mu(u)$ contains exactly $|u|$ occurrences of the letter $0$ and $n$ occurrences of the letter $1$, and both $\gamma$ and $\mu$ map a word with the same number of $0$ and $1$ to a word of double length with the same number of $0$ and $1$. Hence $|\gamma^{k_0} \mu \gamma^{k_1} \mu \cdots \gamma^{k_n} \mu (u)| = |u| 2^q$ with $q$ defined as previously. Now, if $u$ and $v$ are two distinct right special factors of length $m$ of $\gamma^{\omega}(0)$, then $\gamma^{k_0} \mu \gamma^{k_1} \mu \cdots \gamma^{k_n} \mu (u)$ and $\gamma^{k_0} \mu \gamma^{k_1} \mu \cdots \gamma^{k_n} \mu (v)$ are two distinct special factors of length $m 2^q$ of $\mathbf{w}$. As the number of right special factors of a given length of $\gamma^{\omega}(0)$ is unbounded, and as the sequence $(k_n)_{n \in \mathbb{N}}$ is unbounded, the number of right special factors of a given length of $\mathbf{w}$ is also unbounded which concludes the first part of the proof.

The last step is to show that, for all integers $\ell_n$, we have $p_{\mathbf{w}_{\gamma,\mu}}(\ell_n) \leq 4 \ell_n$. For all non-negative integers $n$, we already know that 
\[
	|\gamma^{k_0} \mu \gamma^{k_1} \mu \cdots \gamma^{k_n} \mu(0)| = |\gamma^{k_0} \mu \gamma^{k_1} \mu \cdots \gamma^{k_n} \mu(1)| = \ell_n = 2^q
\]
with $q$ as defined previously by $\sum_{i = 0}^n (k_i +1)$. Consequently, all factors $u$ of length $\ell_n$ are factors of $|\gamma^{k_0} \mu \gamma^{k_1} \mu \cdots \gamma^{k_n} \mu(v)|$ for some word $v$ of length 2. As there are only 4 possible binary words of length 2 and as there are less than $\ell_n+1$ distinct factors of length $\ell_n$ in a word of length $2\ell_n$, we obtain $p_{\mathbf{w}_{\gamma,\mu}}(\ell_n) \leq 4\ell_n +4$. However, among the $4\ell_n +4$ words, both words $\gamma^{k_0} \mu \gamma^{k_1} \mu \cdots \gamma^{k_n} \mu(0)$ and $\gamma^{k_0} \mu \gamma^{k_1} \mu \cdots \gamma^{k_n} \mu(1)$ have been counted 4 times, hence $p_{\mathbf{w}_{\gamma,\mu}}(\ell_n) \leq 4\ell_n-2$.
\end{proof}
\end{example}

\subsection{The condition $C$ of the conjecture could not only concern the set $S$}
\label{subsection: condition C}

Since it seems hard to make the condition of Equation~\eqref{cond croissance S-adic} weaker, another idea is to determine new sufficient conditions that are independent from it. A first attempt in this direction was proposed by Boshernitzan, asking whether \textit{if $S$ contains only morphisms for which the fixed points (if exists) of their powers have sub-linear complexity, then any $S$-adic sequence has a sub-linear complexity}.

But, Boshernitzan eventually provided the following counter-example to that conjecture. Since we did not find any detailed proof of it, we provided it.

\begin{example}
\label{ex: n^2}
Let $\gamma$ and $E$ be the morphisms over $\{0,1\}$ respectively defined by $[001,1]$ and $[1,0]$. Observe that both morphisms $\gamma E$ and $E \gamma$ are \textit{primitive}. Consequently, they admit a power whose fixed points have sub-linear complexity. We consider the  sequence 
\[
	\mathbf{w}_{\gamma,E} = \lim_{n \to + \infty} \gamma E \gamma^2 E \gamma^3 E \cdots \gamma^{n-1} E \gamma^n(0^{\omega}).
\]

\begin{proposition}[Boshernitzan]
\label{prop: n^2}
The sequence $\mathbf{w}_{\gamma,E}$ is $S$-adic for $S = \{ \gamma E, E \gamma\}$, is uniformly recurrent and does not have a sub-linear complexity.
\end{proposition}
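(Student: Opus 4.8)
The plan is to prove the three assertions separately, the last being by far the hardest.

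\emph{$S$-adicity.} I would first record the single relation $E^2=\mathrm{id}$, which allows one to pass freely between words over the ``letters'' $\{\gamma,E\}$ and words over $S=\{\gamma E,E\gamma\}$. The directive word $\gamma E\gamma^2E\gamma^3E\cdots$ cannot be cut into $S$-blocks directly, because the runs $\gamma^k$ have the wrong parity; but the identity $(\gamma E)(E\gamma)=\gamma^2$ lets one realize each finite prefix. For instance $\gamma E=\gamma E$ and $\gamma E\gamma^2=(\gamma E)(\gamma E)(E\gamma)$, and in general one grows a run $\gamma^k$ one letter at a time by alternately appending $\gamma E$ and $E\gamma$, absorbing the spurious factors $EE$. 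I would then check that the products over $S$ obtained in this way have, as reduced words, longer and longer prefixes of $\gamma E\gamma^2E\cdots$, so that the associated morphisms converge to $\mathbf{w}_{\gamma,E}$ with image lengths tending to infinity; this provides the desired $S$-adic representation.

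\emph{Uniform recurrence.} Here I would invoke Theorem~\ref{thm: characterization S-adicity minimal} and only verify weak primitivity of the $\{\gamma,E\}$-directive word. The key computation is that any block $\gamma^{a}E\gamma^{b}$ with $a,b\ge1$ sends both letters to words containing both letters: one has $\gamma^{a}E\gamma^{b}(1)=\gamma^{a}(0)$ and $\gamma^{a}E\gamma^{b}(0)=\gamma^{a}(E(\gamma^{b}(0)))$, and $\gamma^{a}(0)$ already contains both $0$ and $1$ as soon as $a\ge1$. Since such a block (or, at positions starting with $E$, its composition with the letter-exchange $E$) occurs ahead of every position of $\gamma E\gamma^2E\cdots$, the directive word is weakly primitive and $\mathbf{w}_{\gamma,E}$ is uniformly recurrent.

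\emph{Complexity.} Following the proof of Proposition~\ref{prop: morse+n^2}, I would use Theorem~\ref{theorem: cassaigne diff de complexite} and aim to produce, for every bound $K$, a length carrying more than $K$ right-special factors. The two inputs are as before: $\gamma^{\omega}(0)$ has quadratic complexity (Theorem~\ref{thm: pansiot}), hence unboundedly many right-special factors; and since the first-letter map of each of $\gamma,E$ is a bijection of $\{0,1\}$, so is that of every prefix $\gamma E\gamma^2E\cdots\gamma^{n}E$, whence the image of a right-special factor of $\gamma^{\omega}(0)$ is again right-special in $\mathbf{w}_{\gamma,E}$. Distinct right-special factors give distinct images, and the unboundedness of the exponents $1,2,3,\dots$ ensures that right-special factors of any prescribed length of $\gamma^{\omega}(0)$ occur deep enough in the expansion to be used.

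The step I expect to be the main obstacle is length control, and this is precisely where the argument departs from Proposition~\ref{prop: morse+n^2}. There the Thue–Morse morphism $\mu$ equalizes the numbers of $0$'s and $1$'s, so all images of right-special factors of a common length have a common length and accumulate. Here $E$ merely exchanges letters and does not balance, so $|\gamma E\cdots\gamma^{n}E(0)|\neq|\gamma E\cdots\gamma^{n}E(1)|$ and images of equal-length right-special factors scatter over many lengths; moreover a short check suggests that the right-special factors of $\gamma^{\omega}(0)$ do not concentrate by abelianization (at each length one sees at most a bounded number of them per letter-content), so the naive balancing substitute is not available. The real work is therefore to track the right-special (equivalently bispecial) factors produced by the successive large powers $\gamma^{n}$ through the renormalization and to show that enough of them end up at a common length in $\mathbf{w}_{\gamma,E}$; only then does Theorem~\ref{theorem: cassaigne diff de complexite} yield that the complexity is not sub-linear.
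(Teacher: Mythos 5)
Your first two parts are fine and essentially match the paper: the $S$-adicity is read off from $E^2=\mathrm{id}$ (the paper treats it as immediate), and your weak-primitivity computation with blocks $\gamma^aE\gamma^b$ is a correct variant of the paper's observation that the strongly primitive composition $\gamma E\gamma$ occurs infinitely often, so Theorem~\ref{thm: characterization S-adicity minimal} applies. The complexity part, however, has a genuine gap, and it is precisely the step you flag and then leave undone: you never exhibit, for a single length $n$, an unbounded number of right special factors, which is what Theorem~\ref{theorem: cassaigne diff de complexite} requires; saying that ``the real work is to show enough of them end up at a common length'' is the whole content of the proposition. There is also a secondary unproved claim in your transfer step: the desubstituted sequences here are $\mathbf{w}^{(k)}=\lim_{n}\gamma^{k+1}E\gamma^{k+2}E\cdots\gamma^{k+n}(0^{\omega})$, not $\gamma^{\omega}(0)$, so it is not automatic that a right special factor of $\gamma^{\omega}(0)$ occurs in $\mathbf{w}^{(k)}$ \emph{together with both of its right extensions}; this needs an argument.

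The idea you are missing is that right-specialness is inherited by suffixes, so one does not need many images of a common length: it suffices to have many \emph{long} right special words whose pairwise longest common suffixes are \emph{short}, and then truncate them all to one common length. The paper implements this with the explicit words $u_i=\gamma^i(10)=1\gamma^i(0)$, shown directly to be right special in $\mathbf{w}^{(k)}$ for $1\le i\le k+1$ (via the factors $100$, $101$ of $E(\mathbf{w}^{(k+1)})$ and $100$, $001$ of $\gamma^{j}\gamma E(\mathbf{w}^{(k+1)})$), so that each $\Gamma_k(u_i)$, with $\Gamma_k=\gamma E\gamma^2E\cdots\gamma^kE$, and every suffix of it is right special in $\mathbf{w}_{\gamma,E}$ because $\Gamma_k(0)$ and $\Gamma_k(1)$ start with different letters. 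Since $u_i$ ends with $1001^i$, the longest common suffix of $\Gamma_k(u_i)$ and $\Gamma_k(u_{i+1})$ is $p_k\Gamma_k(1^i)$ with $p_k=1\Gamma_1(1^2)\Gamma_2(1^3)\cdots\Gamma_{k-1}(1^k)$, and the abelian computation (your observation that $E$ does not balance letters is exactly what makes $|u_i|_0=|u_i|_1=2^i$ relevant) gives $|\Gamma_k(u_i)|=2^i2^{f(k)}$ with $f(k)=\frac{k^2+k+2}{2}$, while $|p_k\Gamma_k(1^k)|\le k^2 2^{f(k)}$. Hence at the single length $|p_k\Gamma_k(1^k)|+1$ there are at least $k+1-\left\lceil\log_2 k^2\right\rceil$ distinct right special factors, which is unbounded in $k$, and Theorem~\ref{theorem: cassaigne diff de complexite} concludes. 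So your diagnosis that abelian balancing (the mechanism of Proposition~\ref{prop: morse+n^2}) is unavailable was correct, but the suffix-truncation device makes balancing unnecessary, and without it your argument does not close.
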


\begin{proof}
First, by definition, $\mathbf{w}_{\gamma,E}$ is indeed $S$-adic for $S= \{ \gamma E, E \gamma\}$. 

Next, the composition $\gamma \circ E \circ \gamma$ is strongly primitive and occurs infinitely often in the directive word of $\mathbf{w}_{\gamma,E}$. It is therefore a consequence of Theorem~\ref{thm: characterization S-adicity minimal} that $\mathbf{w}_{\gamma,E}$ is uniformly recurrent. 

To prove that $\mathbf{w}_{\gamma,E}$ does not have a sub-linear complexity, by Theorem~\ref{theorem: cassaigne diff de complexite}, it is sufficient to prove that the number of its right special factors of length $n$ is unbounded. For this purpose, let us introduce notation. For all $k \in \mathbb{N}^*$, let us define the morphism $\Gamma_k = \gamma E \gamma^2 E \cdots \gamma^{k-1} E \gamma^k E$ and, for all $k \in \mathbb{N}$, the sequence
\[
	\mathbf{w}^{(k)} = \lim_{n \to + \infty} \gamma^{k+1} E \gamma^{k+2} E \cdots \gamma^{k+n-1} E \gamma^{k+n}(0^{\omega}).
\]

We then have $\mathbf{w}_{\gamma,E} = \mathbf{w}^{(0)} = \Gamma_k (\mathbf{w}^{(k)})$, for all $k \geq 1$. For all $i \geq 1$ we also consider the word $u_i = \gamma^i(10) = 1 \gamma^i(0)$. As $100$ and $101$ are factors of $E(\mathbf{w}^{(k+1)})$, and as $100$ and $001$ are factors of $\gamma^{k-j}\gamma(E(\mathbf{w}^{(k+1)}))$ for all $j$ with $1 \leq j \leq k$ (observe also that $\gamma(100)= \gamma(10)001$ and $\gamma(001) = 00\gamma(10)1$), we can deduce  that $u_i$ is a right special factor of $\mathbf{w}^{(k)}$ for all $i$ with $1 \leq i \leq k+1$.
As words $\Gamma_k(0)$ and $\Gamma_k(1)$ start with different letters, for all integers $i$ such that $1 \leq i \leq k+1$, the word $\Gamma_k(u_i)$ is a right special factor of $\mathbf{w}_{\gamma,E}$, and so are all of its suffixes.

For $1 \leq i \leq k+1$, word $u_i$ end with $1001^i$, so that the longest common suffix between $u_i$ and 
$u_{i+1}$ is $1^i$. It follows that the longest common suffix between $\Gamma_k(u_i)$ and $\Gamma_k(u_{i+1})$ is the word $p_k\Gamma_k(1^i)$ where $p_0 = \varepsilon$ and for $k \geq 1$,
\[
	p_k = 1 \Gamma_1(1^2)\Gamma_2(1^3)\cdots \Gamma_{k-1}(1^k).
\]
This is indeed a consequence of the fact (we let readers verify it) that for any word $u$ containing at least one occurrence of $0$ and $1$, the word $\Gamma^k(u)$ ends with $ap_k$ where $a$ is the last letter of $u$ when $k$ is even, and where $a$ is the opposite letter to the last letter of $u$ when $k$ is odd.

 From what precedes $\mathbf{w}_{\gamma,E}$ has at least $n(k)$ right special words of length $|p_k\Gamma_k(1^k)|+1$, where $n(k)$ denotes the number of integers $i$ between $1$ and $k+1$ such that $|\Gamma(u_i)| > |p_k\Gamma_k(1^k)|$. Next fact allows to estimate $n(k)$.

\textit{Fact}: With $f(k) = \frac{k^2+k+2}{2}$, for all $k \geq 2$,
\begin{enumerate}
\item for all $1 \leq i \leq k+1$, $|\Gamma_k(u_i)| = 2^i 2^{f(k)}$,
\item $|p_k\Gamma_k(1^k)| \leq k^2 2^{f(k)}$.
\end{enumerate}

\begin{proof}[Proof of the fact]

1. Let $i$ be between $1$ and $k+1$. By induction, one can verify that $|\gamma^i(0)|_0 = 2^i$ and  $|\gamma^i(0)|_1 = 2^i-1$. As $u_i = 1 \gamma^i(0)$, we have $|u_i|_0 = |u_i|_1 = 2^i$. Observe that for any word $v$ such that $|v|_0 = |v|_1$, we have $|\gamma(v)| = 2 |v|$ and $|\gamma(v)|_0 = |\gamma(v)|_1 = |v|$ and $|E(v)|_0 = |E(v)|_1 = |v|_0$.  

Thus, as  $\Gamma_k = \gamma E \gamma^2 E \cdots \gamma^{k-1} E \gamma^k E$, $|\Gamma_k(u_i)| = 2^{i+1} 2^{\sum_{j=1}^k j} = 2^{i+1} 2^{\frac{k(k+1)}{2}} = 2^i 2^{f(k)}$. 

2. Before estimating $|p_k\Gamma_k(1^k)|$, we need an estimate of $|\Gamma_k(1)|$.
\begin{eqnarray*}
	|\Gamma_k(1)| 	&=& \left| \gamma E \gamma^2 E \cdots \gamma^k E (1) \right|	\\
					&=&	\left| \gamma^k(0)\right|_0 \left| \gamma E \gamma^2 E \cdots \gamma^{k-1} E(0) \right| 	
						+ \left| \gamma^k(0)\right|_1 \left| \gamma E \gamma^2 E \cdots \gamma^{k-1} E(1) \right| \\
					&=& 2^k \left( \left| \Gamma_{k-1}(01) \right| - \left| \Gamma_{k-1}(1) \right| \right)
						+ \left( 2^k -1 \right) \left| \Gamma_{k-1}(1) \right|	\\
					&=&	2^{\frac{k^2+k+2}{2}} - \left| \Gamma_{k-1}(1) \right| < 2^{f(k)}.
\end{eqnarray*}
Observe that $|\Gamma_k(1^k)| = k |\Gamma_k(1)|$. One can verify that, for all $j$ with $1 \leq j \leq k$, $\Gamma_{j-1}(1^j) \leq \Gamma_k(1^k)$. Moreover as $k \geq 2$, $|1\Gamma_1(1^2)|\leq |\Gamma_k(1^k)|$.

Hence $|p_k\Gamma_k(1^k)| = |1\Gamma_1(1^2)|+\sum_{j=3}^{k}|\Gamma_{j-1}(1^j)|+|\Gamma_k(1^k)| < k|\Gamma_k(1^k)| < k^2f(k)$.
\end{proof}

To end the proof of proposition~\ref{prop: n^2}, it suffices to notice that for all $i$ such that $\log_2 k^2 < i \leq k+1$, 
$|\Gamma_k(u_i)| > |p_k\Gamma_k(1^k)|$, so that $n(k) \geq k+1- \left\lceil \log_2 k^2 \right\rceil$. In other words, the number of right special words of $\mathbf{w}_{\gamma,E}$ of length $|p_k\Gamma_k(1^k)|+1$ is unbounded, which shows by Theorem~\ref{theorem: cassaigne diff de complexite} that $\mathbf{w}_{\gamma,E}$ does not have a sub-linear complexity.
\end{proof}

\begin{remark}
The previous result is even stronger than just considering sets $S$ of morphisms with fixed points of sub-linear complexity. Indeed, the sequence also has \textit{bounded partial quotients}, \textit{i.e.}, all morphisms occur with bounded gaps in the directive word (over $\{\gamma E, E \gamma\}$). 
\end{remark}
\end{example}

An opposite question to the one previously answered is to ask whether $S$-adic sequences can have a sub-linear complexity when $S$ contains a morphism that admits a fixed point that does not have a sub-linear complexity. Example~\ref{ex: morse+n^2} positively answers that question and we can prove even more. Indeed, the following example provides a uniformly recurrent $S$-adic sequence with sub-linear complexity such that a ``bad morphism'' occurs with arbitrary high powers in its directive word.

\begin{example}
\label{ex: beta M}
Let us consider the morphisms
\begin{eqnarray*}
	\beta : \begin{cases}
				0 \mapsto 010	\\
				1 \mapsto 1112	\\
				2 \mapsto 2
			\end{cases}
	&
	\text{ and }
	&
	M :	\begin{cases}
			0 \mapsto 0	\\
			1 \mapsto 1 \\
			2 \mapsto 1
		\end{cases}
\end{eqnarray*}
and the sequence 
\[
	\mathbf{w}_{\beta,M} = \lim_{n \to +\infty} M \beta M \beta^2 M \beta^3 M \cdots \beta^{n-1} M \beta^n(0^{\omega}).
\]
Note that $\beta$ is not everywhere growing, its set of bounded letters is $A_{\mathfrak{B},\beta} = \{2\}$ and all words in $A_{\mathfrak{B},\beta}^*$ are factors of $\beta^\omega(0)$. Hence by Theorem~\mbox{\ref{thm: pansiot}(2)}, $p_{\beta^\omega(0)}$ is in $\theta(n^2)$.
\end{example}

\begin{proposition}
\label{prop: beta n^2}
The sequence $\mathbf{w}_{\beta,M}$ defined just above has a sub-linear complexity. More precisely, for all $n$ we have $p(n+1)-p(n) \in \{1,2\}$.
\end{proposition}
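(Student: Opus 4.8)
The plan is to reduce the statement to a count of special factors and then follow these through the renormalization encoded by the directive word.

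\emph{Setup and reduction.} Regroup the directive word by setting $\tau_k=M\beta^k$, so that $\mathbf{w}_{\beta,M}=\lim_k\tau_1\tau_2\cdots\tau_k(0^\omega)$; putting $\mathbf{v}_0=\mathbf{w}_{\beta,M}$ and $\mathbf{v}_{k-1}=\tau_k(\mathbf{v}_k)$, each $\mathbf{v}_k$ is a sequence over $\{0,1\}$. A direct computation gives $\tau_k(1)=1^{L_k}$ with $L_k=|\beta^k(1)|$, and $\tau_k(0)=P_k$ with $P_0=0$ and $P_k=P_{k-1}1^{L_{k-1}}P_{k-1}$; hence $P_k$ begins and ends with $0$ and all its maximal blocks of $1$'s have length at most $L_{k-1}<L_k$. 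Note also that $\tau_k(0)$ and $\tau_k(1)$ start with distinct letters and end with distinct letters. Since $\mathbf{w}_{\beta,M}$ is binary, $s(n):=p(n+1)-p(n)$ equals the number of right special factors of length $n$ (each such factor is followed by both letters); as the maximal $1$-blocks are of unbounded length, $\mathbf{w}_{\beta,M}$ is aperiodic, so $s(n)\ge1$ and only $s(n)\le2$ is at issue. I would use the classical identity of Cassaigne $s(n+1)-s(n)=\sum_{v\text{ bispecial},\,|v|=n}m(v)$, where $m(v)=E(v)-L(v)-R(v)+1$, $E(v)$ counting the pairs $(a,b)$ with $avb$ a factor and $L(v),R(v)$ the left, resp.\ right, extensions of $v$, together with $s(0)=1$; it then suffices to show that every partial sum $\sum_{|v|<n}m(v)$ lies in $\{0,1\}$.

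\emph{No factor $00$, and the powers $1^m$.} I first prove that no $\mathbf{v}_k$ contains $00$. Since $P_k$ and $1^{L_k}$ are $00$-free and $P_k$ starts and ends with $0$, the morphism $\tau_k$ maps a $00$-free word to a $00$-free word; as every factor of $\mathbf{v}_{k-1}$ occurs in some word $\tau_k\tau_{k+1}\cdots\tau_{k+m}(0)$, an induction on the number of morphisms (uniform in $k$) shows these are all $00$-free. Consequently, in each $\mathbf{v}_k$ the letter $0$ is isolated, right special factors end with $1$, left special ones begin with $1$, and every bispecial factor is either a power $1^m$ or has the form $1^s0\cdots01^t$ with $s,t\ge1$. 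A short computation gives $m(1^m)=1$ if $m\in\mathcal{R}$ and $0$ otherwise, where $\mathcal{R}$ is the set of lengths of maximal $1$-blocks of $\mathbf{w}_{\beta,M}$: the words $01^{m+1}$, $1^{m+1}0$, $1^{m+2}$ are always factors, while $01^m0$ is a factor exactly when some block has length $m$. Thus each $1^m$ is strong ($m\in\mathcal{R}$) or neutral.

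\emph{Inner bispecials and the alternation.} It remains to handle bispecials containing a $0$. The recursion for $P_k$ yields the run-length structure of $\mathbf{v}_{k-1}$: the fixed block-length pattern of $P_k$ (all entries $<L_k$) is inserted between separating blocks of length $\ge L_k$, and an induction shows that in every $\mathbf{v}_k$ the maximal $1$-blocks have length $1$ at the odd positions and length $\ge4$ at the even ones. In particular two consecutive maximal $1$-blocks of $\mathbf{w}_{\beta,M}$ are never both $\ge2$ and never both equal to $1$, which already forces the shortest inner bispecial $101$ to be weak ($m(101)=-1$). To conclude I would exploit the renormalization: each $\tau_k$ is injective, its images start (and end) with distinct letters, and a $1$-block of length $\ge L_k$ pins down the $\tau_k$-block boundaries (synchronization); hence the bispecial factors of $\mathbf{v}_{k-1}$ are the suitably extended images of the bispecials of $\mathbf{v}_k$, with multiplicity preserved, together with finitely many bispecials created by the internal structure of $P_k$. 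One shows these new bispecials are weak and that, listed by increasing length, the strong powers $1^m$ $(m\in\mathcal{R})$ and the weak inner bispecials strictly alternate, beginning with $1^1$; this keeps $\sum_{|v|<n}m(v)\in\{0,1\}$ and gives $p(n+1)-p(n)\in\{1,2\}$.

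The main obstacle is this last step: controlling, level by level, the finitely many bispecials produced by the internal blocks of $P_k$, and proving both that each is weak and that it falls, in length, strictly between two consecutive strong powers. The delicate point is the behaviour at the extremities $1^s$ and $1^t$ of an inner special factor, where desubstitution is a priori ambiguous and spurious special factors could arise; the threshold $L_k$, which separates the internal blocks of $P_k$ from the separating blocks, is exactly what makes the synchronization, and hence the whole count, well defined.
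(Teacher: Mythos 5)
Your framework coincides with the paper's: your regrouped morphisms $\tau_k=M\beta^k$ are exactly its $\mathcal{B}_k$ (composed level by level), and, like the paper, you reduce the proposition via Cassaigne's identity~\eqref{eq:cassaigne} to showing that the bispecial factors with $m(u)\neq 0$ split into strong and weak ones with $m(u)=\pm 1$ whose lengths strictly alternate. The preliminary facts you actually establish are correct and consistent with the paper: every $\mathbf{v}_k$ is $00$-free (so $0$ is isolated, right special factors end with $1$, left special ones begin with $1$); $m(1^m)=1$ exactly when $m\in\mathcal{R}$ and $m(1^m)=0$ otherwise (your $\mathcal{R}$ is precisely the set of lengths $|\mathcal{B}_kM\beta^i(1)|$ of the paper's strong bispecials, all of which are powers of $1$); the run-length alternation between blocks of length $1$ and blocks of length $\geq 4$; and $m(101)=-1$.

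However, the decisive step is missing, and you say so yourself. Your entire count rests on the claim that the only bispecial factors containing a $0$ are, level by level, suitably extended images plus finitely many ``new'' ones from the internal structure of $P_k$, that all of these are weak, and that exactly one weak bispecial falls strictly between consecutive strong lengths. You assert this with ``one shows'' and then flag it as ``the main obstacle''; but this is precisely the content of the paper's Lemma~\ref{lem:w beta M} and of the two observations opening its proof of the proposition. Concretely, what is missing is: (i) the unique decomposition of any bispecial $u$ with at least two occurrences of $0$ as $u = p_0\mathcal{B}_1(p_1)\cdots\mathcal{B}_{k-1}(p_{k-1})\mathcal{B}_k(v)\mathcal{B}_{k-1}(s_{k-1})\cdots s_0$, with the remainders $p_i,s_i$ forced to be empty because $\mathcal{B}_i(0)$ and $\mathcal{B}_i(1)$ have no common prefix or suffix — this is exactly what resolves the ``delicate point'' at the extremities $1^s$, $1^t$ that you identify but leave open; (ii) the verification that the surviving $v$'s are exactly the words $M\beta^i(101)$ and that they are weak, which the paper does by explicit exclusions ($0\mathcal{B}_k(M\beta^i(101))0$ would force the factor $01010$, and $1\mathcal{B}_k(M\beta^i(101))1$ would force $11011$, in a desubstituted sequence, both impossible); and (iii) the interleaving inequalities $|\mathcal{B}_kM\beta^i(1)| < |\mathcal{B}_kM\beta^i(101)| < |\mathcal{B}_kM\beta^{i+1}(1)|$ together with the telescoping identity $\mathcal{B}_kM\beta^{k+1}(1)=\mathcal{B}_{k+1}M\beta^0(1)$, which make the strong/weak enumeration consistent across levels and rule out, e.g., two weak bispecials between consecutive strong ones (which would drive your partial sum $\sum_{|v|<n}m(v)$ to $-1$). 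Without (i)--(iii), the bound $\sum_{|v|<n}m(v)\in\{0,1\}$ — i.e., the proposition itself — is unsupported; so this is a plan matching the paper's route, with the core combinatorial lemma left unproved.
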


The proof of this proposition will use the link, proved by Cassaigne \mbox{\cite{Cassaigne_resume}} between bispecial factors and the first difference $s$ of the complexity function. Given an infinite word $\mathbf{w}$, we recall that the  \textit{bilateral order} of a word $u$ is the value
\[
 	m(u) = {\rm Card} ({\rm Fac}(\mathbf{w}_{\beta,M}) \cap A u A) - {\rm Card}({\rm Fac}(\mathbf{w}_{\beta,M}) \cap Au) - {\rm Card}({\rm Fac}(\mathbf{w}_{\beta,M}) \cap uA) +1.
\]
When $m(u)>0$, $u$ is a bispecial factor of $\mathbf{w}$ called a \textit{strong bispecial} factor, and
when $m(u)<0$, $u$ is also a bispecial factor of $\mathbf{w}$ called a \textit{weak bispecial} factor.

Let $s_\mathbf{w}$ (or simply $s$) be the function defined for $n \geq 0$ by $s(n) = p(n+1)-p(n)$. We have $s(0)=1$ and Cassaigne proved (see~\mbox{\cite{Cassaigne_resume}}) that
\begin{equation}
\label{eq:cassaigne}
	s(n+1) - s(n) = \sum_{u \in {\rm Fac}_n(\mathbf{w})} m(u)
\end{equation}

In next useful result, $\mathcal{B}_0$ is the identity morphism and, for $k > 0$, $\mathcal{B}_k = M \beta M \beta^2 \cdots M \beta^k$.

\begin{lemma}
\label{lem:w beta M}
A word $u$ is a strong bispecial factor of $\mathbf{w}_{\beta,M}$ if and only if $u = \mathcal{B}_k M \beta^i(1)$ for some $k \geq 0$ and some $i$ in $\{0, \ldots, k\}$.

A word $u$ is a weak bispecial factor of $\mathbf{w}_{\beta,M}$ if and only if $u = \mathcal{B}_k M \beta^i(101)$ for some $k \geq 0$ and some $i$ in $\{0, \ldots, k\}$.
\end{lemma}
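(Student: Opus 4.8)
The strategy is to determine all bispecial factors of $\mathbf{w}_{\beta,M}$ by tracking how bispecial factors transform under the morphisms $M$ and $\beta$, exploiting the $S$-adic structure $\mathbf{w}_{\beta,M} = \mathcal{B}_k(\mathbf{w}^{(k)})$ where $\mathbf{w}^{(k)} = \lim_n M\beta^{k+1}M\beta^{k+2}\cdots(0^\omega)$, so that $\mathbf{w}^{(k)} = M\beta^{k+1}(\mathbf{w}^{(k+1)})$. The key general principle is that for a suitable (left-proper, injective) morphism $\varphi$, the bispecial factors of $\varphi(\mathbf{x})$ are, up to a bounded correction, images $\varphi(v)$ of bispecial factors $v$ of $\mathbf{x}$, together with possibly a few ``short'' bispecial factors coming from the internal structure of the images $\varphi(a)$. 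First I would make this desubstitution principle precise for the two morphisms at hand: since $\beta(0),\beta(1),\beta(2)$ and $M(0),M(1),M(2)$ each begin with a determined letter and the morphisms are injective on prefixes, extending a bispecial factor $u$ of the image to a bispecial factor of the form $\varphi(v)$ (plus a fixed prefix/suffix) reduces to understanding bispeciality in the preimage $\mathbf{x}$.

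The heart of the computation is to identify the bispecial factors at the ``bottom level'', i.e.\ of the sequences $\mathbf{w}^{(k)}$, and then to propagate them outward through $\mathcal{B}_k$. I would first analyze $\beta$ directly: because $2$ is the unique bounded letter and $\beta(2)=2$, the word $\beta^\omega(0)$ has the same special-factor structure governed by the blocks $\beta^i(1)$ and their right extensions. One checks that the relevant left/right special behaviour is created precisely at the words $\beta^i(1)$ (which become strong bispecial after applying a layer of $M$ to collapse $2\mapsto 1$) and at $\beta^i(101)$ (which carry the negative bilateral order, hence are weak). The role of $M$ is crucial here: it is the application of $M$ that merges the letters $1$ and $2$ and thereby creates the genuine bispecial factors of $\mathbf{w}_{\beta,M}$ out of the special factors of the $\beta$-structure. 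So the plan is to show that a factor becomes strong bispecial exactly when it has the form $M\beta^i(1)$ at the appropriate level, and weak bispecial exactly for $M\beta^i(101)$, for $i$ in the available range $\{0,\dots,k\}$ dictated by how many powers of $\beta$ have been applied before the next $M$.

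The two implications would be handled as follows. For the ``if'' direction I would compute the bilateral order $m(u)$ directly for $u=\mathcal{B}_k M\beta^i(1)$ and $u=\mathcal{B}_k M\beta^i(101)$, verifying $m(u)>0$ and $m(u)<0$ respectively by listing the admissible left and right extensions; the injectivity and prefix/suffix determinism of $\mathcal{B}_k$ ensure that extending $u$ is equivalent to extending the short word $M\beta^i(1)$ (resp.\ $M\beta^i(101)$) inside $\mathbf{w}^{(k)}$, a finite check. For the ``only if'' direction I would argue by induction on the number of desubstitution steps: any bispecial factor of $\mathbf{w}_{\beta,M}$ is either short enough to be one of the listed base cases, or long enough that the desubstitution principle forces it to be $\mathcal{B}_1(u')$ for a bispecial factor $u'$ of $\mathbf{w}^{(1)}$, and one recurses. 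I expect the main obstacle to be the bookkeeping in this desubstitution step: showing that \emph{no other} bispecial factors appear, i.e.\ that the short internal bispecial factors arising from the images $\beta^i(\cdot)$ and $M(\cdot)$ are exactly the claimed ones and that the index $i$ cannot exceed $k$. Controlling these boundary cases — ensuring the range $\{0,\dots,k\}$ is tight and that the alternation pattern $M\beta M\beta^2\cdots$ does not produce extra special factors at the seams — is where the argument must be most careful.
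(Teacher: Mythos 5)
Your plan follows the same architecture as the paper's actual proof: the tower $\mathbf{w}^{(k)}$ with $\mathbf{w}_{\beta,M}=\mathcal{B}_k(\mathbf{w}^{(k)})$, the ``if'' direction by exhibiting the four (resp.\ two) extensions at level $k$ and lifting them through $\mathcal{B}_k$ using that $\mathcal{B}_k(0)$ and $\mathcal{B}_k(1)$ begin and end with distinct letters, and a desubstitution argument for the converse. However, the induction step of your ``only if'' direction is stated in a form that fails, and it fails precisely on the family of words the lemma is about. You propose the dichotomy: a bispecial factor is either short enough to be a base case, or long enough that a recognizability principle forces it to equal $\mathcal{B}_1(u')$ for a bispecial $u'$ of $\mathbf{w}^{(1)}$. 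But since $\beta^i(1)\in\{1,2\}^*$ and $M$ collapses $2$ to $1$, every strong bispecial factor $\mathcal{B}_kM\beta^i(1)$ is a power $1^n$ of the single letter $1$, with $n$ unbounded; and arbitrarily long runs of $1$ do not synchronize: writing $\mathbf{w}_{\beta,M}$ as a concatenation of blocks $\mathcal{B}_1(0)=010$ and $\mathcal{B}_1(1)=1111$, a factor $1^n$ overlaps the block structure in several phases, so no unique decomposition of the kind you invoke exists for it. The paper avoids this by splitting according to the \emph{number of occurrences of $0$} rather than by length: factors $1^n$ are handled by a direct run-length analysis (determining when $01^n0$ occurs), factors with exactly one $0$ are shown to have the forms $01^n$, $101^n$, $1^n0$, $1^n01$, and only for factors with at least two occurrences of $0$ does one obtain the unique decomposition $u=p_0\mathcal{B}_1(p_1)\cdots\mathcal{B}_{k-1}(p_{k-1})\mathcal{B}_k(v)\mathcal{B}_{k-1}(s_{k-1})\cdots s_0$ with $v$ a proper factor of $M\beta^{k+1}(010)$, bispeciality then forcing all the boundary words $p_i,s_i$ to be empty because $\mathcal{B}_i(0)$ and $\mathcal{B}_i(1)$ share no common prefix or suffix. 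Your plan, as written, has no mechanism for the unbounded $1$-runs, so the recursion stalls exactly where the strong bispecial factors live.

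A second, smaller defect is your injectivity premise: $M$ is not injective ($M(1)=M(2)=1$), and the composites are not injective on $\{0,1,2\}^*$ either (e.g.\ $\mathcal{B}_1(2)=M\beta(2)=1$, so $\mathcal{B}_1(2^4)=1111=\mathcal{B}_1(1)$). What saves the argument, and what the paper actually uses, is that each $\mathbf{w}^{(k)}$ begins with an application of $M$ and is therefore a sequence over $\{0,1\}$, so desubstitution only ever needs the restriction of $M\beta^{k+1}$ to $\{0,1\}$, together with the prefix/suffix separation of $\mathcal{B}_i(0)$ and $\mathcal{B}_i(1)$ --- not injectivity. Your ``if'' direction is essentially the paper's, with one caveat: the check that, say, $0\mathcal{B}_kM\beta^i(101)0$ is \emph{not} a factor is not a one-level finite verification; the paper propagates the hypothetical occurrence down through the levels until it produces the forbidden word $01010$ (resp.\ $11011$) in $\beta^{k+1-i}(\mathbf{w}^{(k+1)})$. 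With the case split by occurrences of $0$ added and the recognizability statement reformulated as above, your outline matches the published proof.
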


\begin{proof}
For all integers $k \geq 0$,
let  $\mathbf{w}^{(k)}$ be the sequence $\mathbf{w}^{(k)}$ whose directive word is 
$(M,$ $\beta^{k+1},$ $M,$ $\beta^{k+2},\dots)$. Note that, for $k \geq 0$, $\mathbf{w}_{\beta,M} = \mathcal{B}_k(\mathbf{w}^{(k)})$.

All words $010$, $011$, $110$, $111$ occur in $\mathbf{w}^{(k+1)}$ and all words $010$, $011$, $120$, $111$ occur in $\beta^j(\mathbf{w}^{(k+1)})$ for $j \geq 1$. Thus $M\beta^i(1)$ for $i$ in $\{0, 1, \ldots, k+1\}$ are strong bispecial factors of $\mathbf{w}^{(k+1)}$ (observe $M\beta^i(120) = M\beta^i(210)$ contains the factor $1M\beta^i(1)0$). As $\mathcal{B}_k(0)$ starts and ends with $0$, and as $\mathcal{B}_k(1)$ starts and ends with $1$, $\mathcal{B}_k M\beta^i(1)$ for $i \in \{0, 1, \ldots, k\}$ are strong bispecial factors of $\mathbf{w}_{\beta,M}$.

Now let $i \in \{0, 1, \ldots, k\}$. Words $0 M\beta^i(101)1$ and $1M\beta^i(101)0$ are factors of $\mathbf{w}^{(k)}$ (respectively factors of $M\beta^{i+1}(01)1$ and $1M\beta^{i+1}(10)$). So $0\mathcal{B}_k(M\beta^i(101))1$ and $1 \mathcal{B}_k(M\beta^i(101)) 0$ are factors of $\mathbf{w}_{\beta,M}$.
One can verify that the existence of factor $0\mathcal{B}_k(M\beta^i(101))0$ in $\mathbf{w}_{\beta,M}$ would imply the existence of factors $M(0\beta^j(101)0)$ in $\beta^{(k+1-j)}(\mathbf{w}^{(k+1)})$ for $0 \leq j \leq i$, and of factor $01010$ in  $\beta^{(k+1-i)}(\mathbf{w}^{(k+1)})$ which is impossible. 
Similarly the existence of factor $1\mathcal{B}_k(M\beta^i(101))1$ in $\mathbf{w}_{\beta,M}$ would imply the existence of $11011$ in  $\beta^{(k+1-i)}(\mathbf{w}^{(k+1)})$ which is impossible.

We now prove that there are no other strong and weak bispecial factors.

Let $u$ be a bispecial factor of $\mathbf{w}_{\beta,M}$. If $u$ contains no occurrence of the letter $0$, $u = 1^n$ for some $n \geq 0$.
As $\mathbf{w}_{\beta,M}$ contains arbitrarily large powers of $1$ and infinitely many occurrences of $0$, words $11^n1$, $01^n1$, $11^n0$ are all factors of $\mathbf{w}_{\beta,M}$. When $01^n0$ is a factor of $\mathbf{w}_{\beta,M}$, one can verify that $1^n = \mathcal{B}_kM\beta^i(1)$ for some $k \geq 0$ and some integer $i$ in $\{0, \ldots, k\}$. Thus $1^n$ is a strong bispecial factor of $\mathbf{w}_{\beta,M}$ if and only if $1^n = \mathcal{B}_kM\beta^i(1)$ for some $k \geq 0$ and some integer $i$ in $\{0, \ldots, k\}$.

Note that any factor of $\mathbf{w}_{\beta,M}$ containing only one occurrence of $0$ is on the form $01^n$, $101^n$, $1^n0$ or $1^n01$ for some $n \geq 0$ and is not bispecial.

Thus assume $u$ contains at least two occurrences of $0$. From the shape of the morphisms $\mathcal{B}_k$, there is a unique integer $k$ and a unique sequence of words 
\[
	v, p_0, s_0, p_1, s_1, \dots, p_{k-1}, s_{k-1}
\]
over $\{0,1\}$ such that 
\[
	u = p_0 \mathcal{B}_1(p_1) \mathcal{B}_2(p_2) \cdots \mathcal{B}_{k-1}(p_{k-1}) \mathcal{B}_k(v) \mathcal{B}_{k-1}(s_{k-1}) \cdots \mathcal{B}_2(s_2) \mathcal{B}_1(s_1)s_0,
\]
where $v$ is a proper factor of $M\beta^{k+1}(010)$ and for all $i$, $0 \leq i \leq k-1$, the word $p_i$ (resp. $s_i$) is a suffix (resp. prefix) of $\mathcal{B}_{i+1}(0)$ or of $\mathcal{B}_{i+1}(1)$ but is neither equal to $\mathcal{B}_{i+1}(0)$ nor to $\mathcal{B}_{i+1}(1)$.

Since for all $i$, $\mathcal{B}_i(0)$ and $\mathcal{B}_i(1)$ do not have any common prefix or suffix, the factor $u$ is bispecial if and only if all words $p_0,s_0,p_1,s_1,\dots,p_{k-1},s_{k-1}$ are empty words and $v$ is bispecial in $\mathbf{w}^{(k)}$. Moreover, $v$ contains an occurrence of the letter $0$. We also have that $u$ is strongly (resp. weakly) bispecial if and only if so is $v$. Then, it is only a verification that the only bispecial factors in $\mathbf{w}^{(k)}$ that contain an occurrence of $0$ and that are factors of $M\beta^{k+1}(010)$ are the words $M \beta^i(101)$ for $i \in \{0,1,\dots,k\}$, which are weak bispecial factors.
\end{proof}

\begin{proof}[Proof of Proposition~\ref{prop: beta n^2}]
Let $(u_n)_{n \geq 0}$ be the sequence of strong and weak bispecial factors of $\mathbf{w}_{\beta,M}$.
Observe that for all $k \geq 0$,
\begin{itemize}
\itemsep0cm
\item $\mathcal{B}_k M \beta^{k+1}(1) = \mathcal{B}_{k+1}(1) = \mathcal{B}_{k+1}M\beta^0(1)$. 
\item $|\mathcal{B}_k(M\beta^i(1))| < |\mathcal{B}_k(M\beta^i(101))| < |\mathcal{B}_k(M\beta^{i+1}(1))|$, for all $i$ in $\{0, 1, \ldots, k\}$.
\end{itemize}
Therefore by Lemma~\mbox{\ref{lem:w beta M}}, $u_n$ is a strong bispecial factor of $\mathbf{w}_{\beta,M}$ if $n$ is even, and it is a weak bispecial factor of $\mathbf{w}_{\beta,M}$ if $n$ is odd. As, for any factor, $m(u) =1$ if it is strong bispecial, $m(u) = -1$ if it is weak bispecial and $m(u) = 0$ otherwise, by Formula~\mbox{\ref{eq:cassaigne}}, for all $n \geq 0$,
\begin{itemize}
\itemsep0cm
\item $s(|u_n|+1)-s(|u_n|) = 1$ if $n$ is even,
\item $s(|u_n|+1)-s(|u_n|) = -1$ if $n$ is odd,
\item $s(n+1)-s(n) = 0$ if $n \not \in \{ |u_m| \mid m \geq 0\}$.
\end{itemize}
Consequently, since $s(0) = p(1)-p(0)= 1$, for all $n \geq 0$, we have $p(n+1)-p(n) = s(n) \in \{1,2\}$.
\end{proof}

\subsection{Another (easier?) version of the $S$-adic conjecture}

A natural idea to try to understand the conjecture is to consider examples composed of well-known morphisms. For instance, one could consider the \textit{Fibonacci morphism} $\varphi = [01,0]$ whose fixed point is a Sturmian sequence and the \textit{Thue-Morse morphism} $\mu = [01,10]$ whose fixed points both have a sub-linear complexity. We have

\begin{proposition}
\label{proposition: fibo+TM}
If $S = \{\varphi, \mu\}$ where $\varphi$ and $\mu$ are defined above, any $S$-adic sequence is linearly recurrent.
\end{proposition}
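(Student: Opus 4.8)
The plan is to apply Theorem~\ref{thm: durand LR S-adic}: it suffices to exhibit, for an arbitrary $\{\varphi,\mu\}$-adic sequence $\mathbf{w}$ directed by $(\sigma_n)_{n\in\mathbb{N}}$, a \emph{primitive} and \emph{proper} $S'$-adic representation of $\mathbf{w}$ with ${\rm Card}(S')<+\infty$. First I would dispatch primitivity, which is essentially free. Computing the four possible products of two consecutive morphisms gives $\varphi^2=[010,01]$, $\varphi\mu=[010,001]$, $\mu\varphi=[0110,01]$ and $\mu^2=[0110,1001]$, and in each of them both letters $0$ and $1$ occur in the image of every letter. Hence every block $\sigma_r\sigma_{r+1}$ is strongly primitive, so the representation is primitive with constant $s_0=1$; in particular $\mathbf{w}$ is uniformly recurrent by Theorem~\ref{thm: characterization S-adicity minimal}, and by Proposition~\ref{prop: durand primitive S-adic} the length ratios $|\sigma_0\cdots\sigma_n(a)|/|\sigma_0\cdots\sigma_n(b)|$ stay bounded.

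The delicate point is \emph{properness}, and here the obstruction is genuine: no contraction of $\varphi$ and $\mu$ can be proper. Indeed, the last letter of $\sigma(a)$ is the value $E(a)$ of the exchange $E=[1,0]$ for both $\sigma=\varphi$ and $\sigma=\mu$, so the last letter of any length-$m$ product applied to $a$ is $E^m(a)$, which is never independent of $a$. Thus no product of $\varphi$ and $\mu$ admits a common terminal letter, and grouping alone cannot produce proper morphisms on $\{0,1\}$. I would therefore not contract, but instead pass to the \emph{derived sequences}: following the return-word construction underlying Theorems~\ref{thm: characterization S-adicity minimal} and~\ref{thm: durand LR S-adic}, I recode $\mathbf{w}$ and each of its desubstituted sequences $\mathbf{w}^{(n)}$ on the alphabet of return words to the first letter. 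Return morphisms are automatically left-proper, and one arranges a common terminal behaviour by the standard marking of return words, so this yields a proper primitive representation over a new, a priori infinite, alphabet of return words.

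The heart of the proof — and the step I expect to be the main obstacle — is to check that only \emph{finitely many} return morphisms occur over all levels, so that $S'$ is finite. This is exactly what fails in Example~\ref{ex: durand primitif S-adic pas LR}, where unbounded powers of a primitive morphism force the number of return words, hence of return morphisms, to blow up. Here one must exploit the specific shapes of $\varphi$ and $\mu$: I would show that the family of sequences reachable as $\sigma(\mathbf{w}')$ with $\sigma\in\{\varphi,\mu\}$ and $\mathbf{w}'$ again $\{\varphi,\mu\}$-adic is closed under the operation ``desubstitute and recode by return words to the first letter,'' and that along this family the number of such return words stays bounded (two return words in the Sturmian regime carried by $\varphi$, a bounded number in the regime carried by $\mu$), uniformly in how high the powers of $\varphi$ climb. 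Because there are then boundedly many return morphisms up to relabelling, $S'$ can be taken finite, and Theorem~\ref{thm: durand LR S-adic} yields linear recurrence. The technical crux is precisely this uniform boundedness of the return words, which is where the contrast with the non-linearly-recurrent primitive example of Example~\ref{ex: durand primitif S-adic pas LR} must be turned into a quantitative statement.
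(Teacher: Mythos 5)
Your preliminary steps are sound: the four products $\varphi^2=[010,01]$, $\varphi\mu=[010,001]$, $\mu\varphi=[0110,01]$ and $\mu^2=[0110,1001]$ are indeed strongly primitive, so every $\{\varphi,\mu\}$-adic sequence is primitive $S$-adic with constant $s_0=1$; and your observation that the last letter of $\sigma_0\cdots\sigma_{m-1}(a)$ is $E^m(a)$, so that no contraction of the directive word can consist of proper morphisms, is correct and explains why Theorem~\ref{thm: durand LR S-adic} cannot be applied by regrouping alone. But from there the proposal only restates the problem. The step you yourself call ``the heart of the proof'' --- that the return-word recodings of all the levels $\mathbf{w}^{(n)}$ involve only finitely many return morphisms --- is essentially equivalent to the linear recurrence being proved, and you offer no argument for it beyond the phrase ``two return words in the Sturmian regime carried by $\varphi$, a bounded number in the regime carried by $\mu$.'' That heuristic is both unproven and aimed at the wrong quantity: finiteness of $S'$ requires not merely a bounded \emph{number} of return words at each level but a uniform bound on their \emph{lengths} (a morphism is its finite list of images), i.e., uniformly bounded gaps between occurrences of the relevant factors in every $\mathbf{w}^{(k)}$. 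This is exactly the point of contrast with Example~\ref{ex: durand primitif S-adic pas LR}, where primitivity holds and alphabets stay fixed but the lengths of return words blow up; nothing in your outline converts the ``specific shapes of $\varphi$ and $\mu$'' into the required quantitative bound. Note also that the Sturmian count of two return words cannot simply be imported: once a $\mu$ occurs in the tail of the directive word, the sequences $\mathbf{w}^{(k)}$ are no longer Sturmian.

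The missing quantitative statement is in fact elementary, and it is the entire content of the paper's proof, which bypasses properness altogether. The paper checks that $111$ occurs in no $\mathbf{w}^{(k)}$ (neither $\varphi$ nor $\mu$ can produce it), that $000$ does not occur in $\mu(\{0,1\}^*)$, and that an occurrence of $0000$ in $\varphi(\mathbf{w}^{(k+1)})$ would force $111$ in $\mathbf{w}^{(k+1)}$; hence neither $111$ nor $0000$ occurs in any $\mathbf{w}^{(k)}$, from which every word of length $2$ occurs in every $\mathbf{w}^{(k)}$ with gaps bounded by $12$. Combined with primitivity, a result of Durand~\cite{Durand_corrigentum} then yields linear recurrence directly, without ever constructing a proper representation. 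If you wish to keep your derived-sequence route you would first have to establish this same uniform bounded-gap property at every level --- at which point the detour through return words buys you nothing over the direct argument.
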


\begin{proof}
Let $S = \{ \mu, \varphi\}$.
Let $\mathbf{w}$ be an $S$-adic sequence directed by $(\sigma_n)_{n \in \mathbb{N}}$ and, for all $k \in \mathbb{N}$, let $\mathbf{w}^{(k)}$ be the $S$-adic sequence directed by $(\sigma_n)_{n \geq k}$.

By definition of $\varphi$ and $\mu$, as  $\mathbf{w}^{(k)} \in \{\varphi(\mathbf{w}^{(k+1)}), \mu(\mathbf{w}^{(k+1)})\}$, word $111$ does not occur in $\mathbf{w}^{(k)}$ for all $k \geq 1$. 
Note also that  $000$ does not occur in $\mu(\{0,1\}^*)$, and $0000$ does not occur in $\varphi(\mathbf{w}^{(k+1)})$ since otherwise $111$ should occur in $\mathbf{w}^{(k+1)}$. Thus both words $0000$ and $111$ does not occur in $\mathbf{w}^{(k)}$ for all $k \geq 0$. This implies that the gap between two occurrences of $01$ and $10$ is at most 5 (7 is the maximal length of a shortest word on the form $10u10$ or on the form $01u01$).

Now observe that $11$ occurs in $\mathbf{w}^{(k)}$ only in the image of $\sigma_k(10)$ (and only if $\sigma_k = \mu$). This implies that the gap between two occurrences of $11$ is at most $12$.

Finally observe that $00$ occurs in $\mathbf{w}^{(k)}$ only in the image of $\sigma_1(10)$ or in the image of $\sigma_k(11)$ (when $\sigma_k = \varphi$). As factors of $\mathbf{w}^{(k)}$ that start with $10$ or $11$, end with $10$ or $11$ and do not contain any other occurrences of $10$ or $11$ have length at most $7$, the gap between two consecutive occurrences of $00$ is at most 12.

We have just proved that the length of the largest gap between two occurrences of a word of length $2$ in all words $\mathbf{w}^{(k)}$ is bounded. By choice of $S$, $\mathbf{w}$ is primitive $S$-adic. Hence by a result of Durand~\cite{Durand_corrigentum}, $\mathbf{w}$ is linearly recurrent.
\end{proof}

Let us say that a set of morphisms $S$ is \textit{good-adic} if all $S$-adic sequences have a sub-linear complexity.
Proposition~\mbox{\ref{proposition: fibo+TM}} provides an example of such a set. Many other examples are known as:
\begin{itemize}
	\item any singleton $\{ f \}$ with $f$ a morphism with fixed points of linear complexity;
	\item	the set of Sturmian morphisms $\{[01,1],[10,1],[0,01],[0,10]\}$;
	\item	the set of episturmian morphisms over an alphabet $A$: $S = \{ L_a, R_a \mid a \in A \}$ with for all $a \in A$, $L_a(a)=R_a(a) = a$, $L_a(b)=ab$ and $R_a(b)=ab$ for $b \neq a$;
	\item 	any finite set $S$ that contains only uniform morphisms (see Corollary~\ref{cor: durand uniform S-adic});
	\item 	any finite set $S$ that contains only strongly primitive morphisms (see Corollary~\ref{cor: durand 1 morphisme fortement primitif}).
\end{itemize}

Note that if $S$ is good-adic then, for any morphism $f$ in $S$ admitting an infinite fixed point, this fixed point must have a sub-linear complexity. But this necessary condition is certainly not the only one.

\begin{question}
What are good-adic sets of morphisms?
\end{question}

\subsection{About return words}
\label{subsection return words}

If $u$ is a factor of a sequence $\mathbf{w}$, a \textit{return word} to $u$ is a word $r$ such that $ru$ is a factor of $\mathbf{w}$ that admits $u$ as a prefix and that contains only two occurrences of $u$. Return words to $u$ in $\mathbf{w}$ actually correspond to the gaps between two occurrences of $u$ in $\mathbf{w}$.

Durand \mbox{\cite{Durand_UR}} proved that primitive {morphic} sequences, \textit{i.e.}, sequences defined as the image, under a morphism, of a fixed point of a primitive morphism can be characterised using return words. Hence, it is quite natural to ask whether such a result exists for $S$-adic sequences with sub-linear complexity.

There exist many examples of $S$-adic sequences $\mathbf{w}$ with sub-linear complexity for which there exists an integer $K$, such that all factors of $\mathbf{w}$ have at most $K$ return words.
For instance, Vuillon~\mbox{\cite{Vuillon_sturmian}} proved that Sturmian sequences are exactly sequences whose  factors have exactly two return words. Justin and Vuillon \mbox{\cite{Justin-Vuillon}} extended this result to the family of Arnoux-Rauzy sequences (also called strict episturmian sequences) whose factors have exactly $m$ return words with $m$ the cardinality of the alphabet. Balkov\'{a}, Pelantov\'{a} and Steiner~\mbox{\cite{Balkova-Pelantova-Steiner}} characterized sequences whose factors have all the same number of return words showing they have all a sub-linear complexity. Improving a result by Ferenczi~\mbox{\cite{Ferenczi}}, Leroy~\mbox{\cite{Leroy_these}} proved that sequences whose complexity verify ultimately $1 \leq p(n+1)-p(n) \leq 2$ are $S$-adic and {infinitely many} of their factors admit two or three return words.

Despite of previous example, next proposition show that there exist $S$-adic sequences with sub-linear complexity {such  that any long factor has many return words (``many'' depending on ``long'').} 

\begin{example}
\label{ex: plein de mots de retours}
Let us consider the three morphisms $p$, $\sigma$ and $s$ defined by
\begin{eqnarray*}
	p :	\begin{cases}
			a_1 \mapsto 0	\\
			a_2 \mapsto 1	\\
			b_1 \mapsto 1	\\
			b_2 \mapsto 0	
		\end{cases}
	\quad 
	\sigma :	\begin{cases}
					a_1 \mapsto a_1 a_2 a_1	\\
					a_2 \mapsto a_2 a_2 a_2	\\
					b_1 \mapsto b_1 b_2 b_1	\\
					b_2 \mapsto b_2 b_2 b_2	
				\end{cases}
	\quad
	s :	\begin{cases}
			0 \mapsto a_1	\\
			1 \mapsto b_1
		\end{cases}
\end{eqnarray*}
For all $n \geq 1$ we let $\pi_n$ denote the morphism $p \sigma^n s$; we have
\[
	\pi_n : \begin{cases}
				0 \mapsto 0 1^{\mathbf{e}_0} 0 1^{\mathbf{e}_1} 0 1^{\mathbf{e}_2} \cdots = \left( \prod_{i = 0}^{2^{n-1}-1}  0 1^{\mathbf{e}_i} \right) 0 	\\
				1 \mapsto 1 0^{\mathbf{e}_0} 1 0^{\mathbf{e}_1} 1 0^{\mathbf{e}_2} \cdots = \left( \prod_{i = 0}^{2^{n-1}-1}  1 0^{\mathbf{e}_i} \right) 1 	\\
			\end{cases},
\]
where $\mathbf{e}$ is the fixed point of the morphism ${\rm Exp}$ defined over the infinite alphabet $\{ 3^n \mid n \in \mathbb{N}\}$ by
\[
	\forall n \in \mathbb{N}, \quad  {\rm Exp}(3^n) = 1 3^{n+1}, 	
\]
\textit{i.e.},
\[
	\mathbf{e} = 1 3 1 3^2 1 3 1 3^3 1 3 1 3^2 1 3 1 3^4 1 3 1 3^2 1 3 1 3^3 1 3 1 3^2 1 3 \cdots.
\]
Now let us consider the sequence 
\[
	\mathbf{w}_{\pi} = \lim_{n \to + \infty} \pi_1 \pi_2 \cdots \pi_n (0^{\omega}).
\]

\begin{proposition}
The sequence $\mathbf{w}_{\pi}$ defined above is uniformly recurrent, has a sub-linear complexity and for all integers $k$, there is a length $\ell_k$ such that all factors of $\mathbf{w}_{\pi}$ of length at least $\ell_k$ have at least $k$ return words in $\mathbf{w}_{\pi}$.
\end{proposition}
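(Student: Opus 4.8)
The plan is to treat the three assertions separately, exploiting the decomposition $\pi_n=p\sigma^n s$ and the tower of tail sequences $\mathbf{w}^{(k)}=\lim_{m}\pi_{k+1}\cdots\pi_{k+m}(0^{\omega})$, so that $\mathbf{w}^{(0)}=\mathbf{w}_\pi$ and $\mathbf{w}^{(k-1)}=\pi_k(\mathbf{w}^{(k)})$. First I would record the elementary facts. Since $\sigma(a_1)=a_1a_2a_1$ and $\sigma(a_2)=a_2^3$ (symmetrically for $b_1,b_2$), the word $\pi_n(0)=p\sigma^n(a_1)$ has its letters $0$ (images of $a_1$) separated by runs of $1$ whose lengths form the length-$(2^n-1)$ prefix of the fixed point $\mathbf{e}$ of ${\rm Exp}$; an easy induction shows that the gap sequence of $\sigma^{n+1}(a_1)$ is obtained from that of $\sigma^{n}(a_1)$ by applying ${\rm Exp}$ and appending a final entry, which is exactly the displayed formula. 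The letter exchange $E\colon 0\leftrightarrow1$ satisfies $\pi_nE=E\pi_n$, hence $\pi_n(1)=E(\pi_n(0))$ and $|\pi_n(0)|=|\pi_n(1)|=3^n$; consequently every composite $\Pi_n:=\pi_1\cdots\pi_n$ is \emph{uniform}, with $|\Pi_n(0)|=|\Pi_n(1)|=:L_n$, and sends $0$ (resp.\ $1$) to a word beginning and ending with $0$ (resp.\ $1$).

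For uniform recurrence I would use the coarse representation with directive word $(\pi_n)_{n\ge1}$ over $\{0,1\}$: both $\pi_n(0)$ and $\pi_n(1)$ contain the two letters $0,1$, so the representation is weakly primitive (indeed primitive with constant $s_0=0$), and Theorem~\ref{thm: characterization S-adicity minimal} gives that $\mathbf{w}_\pi$ is uniformly recurrent. For the sub-linear complexity the key remark is that $\mathbf{w}_\pi$ \emph{also} admits a \emph{finite} $S$-adic representation with $S=\{p,\sigma,s\}$: expanding each $\pi_n$ as $p\,\sigma\cdots\sigma\,s$ turns the directive word into $p\,\sigma\,s\,p\,\sigma\,\sigma\,s\,p\,\sigma\,\sigma\,\sigma\,s\cdots$ over the finite alphabet $\{0,1,a_1,a_2,b_1,b_2\}$, and each of $p$, $\sigma$, $s$ is uniform (of length $1$, $3$, $1$). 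Corollary~\ref{cor: durand uniform S-adic} then applies verbatim and yields $p_{\mathbf{w}_\pi}(n)\le l\,({\rm Card}(A))^2 n$, i.e.\ a sub-linear complexity.

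The substance is the return-word statement, which I would prove by an \emph{ascent by desubstitution}. Each $\pi_m$ is recognizable on $\mathbf{w}^{(m-1)}$: for $m\ge2$ the maximal run $1^{3^{m-1}}$ occurs exactly once inside each block $\pi_m(0)$ and never inside $\pi_m(1)$ (whose $1$'s are isolated), so it pins down the length-$3^m$ block grid, and the initial step $m=1$ is handled directly since $\mathbf{w}_\pi$ has only bounded alternating stretches. Recognizability gives, for a factor $w$ of $\mathbf{w}^{(m-1)}$ long enough to contain such a landmark, a clean desubstitution $w=p'\pi_m(v')s'$ for which the return words of $w$ are the $\pi_m$-images of those of $v'$; hence $R_{\mathbf{w}^{(m-1)}}(w)=R_{\mathbf{w}^{(m)}}(v')$, writing $R_{\mathbf{x}}(\cdot)$ for the number of return words in $\mathbf{x}$. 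Iterating, the return-word count of a sufficiently long factor $u$ of $\mathbf{w}_\pi$ is transported unchanged to a factor $v$ of some $\mathbf{w}^{(N)}$ with $N\to\infty$ as $|u|\to\infty$. Finally one computes that the return words to the letter $0$ in $\mathbf{w}^{(N)}$ are the words $0\,1^{g}$, where $g$ runs over the run lengths of $1$ occurring in $\mathbf{w}^{(N)}$, together with $0$ itself; since these run lengths are the entries of a long prefix of $\mathbf{e}$ they include all powers $3^0,3^1,\dots,3^N$, so $R_{\mathbf{w}^{(N)}}(0)\ge N+1$. Choosing $\ell_k$ so that the ascent from a factor of length $\ge\ell_k$ reaches level $k$ then forces $R_{\mathbf{w}_\pi}(u)\ge k$.

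The hard part will be the passage, inside the ascent, from block images to \emph{arbitrary} long factors: desubstitution shrinks lengths by the growing factor $3^m$ at level $m$, so a general $u$ is transported not to a single letter but to a factor $v$ of $\mathbf{w}^{(N)}$ of length $O(3^{N})$. I therefore need the uniform lower bound that \emph{every} factor $v$ of $\mathbf{w}^{(N)}$ with $|v|\ge1$ satisfies $R_{\mathbf{w}^{(N)}}(v)\ge N$. I would prove this by induction on $N$: a factor longer than two level-$(N+1)$ blocks necessarily contains a landmark $1^{3^{N}}$ or $0^{3^{N}}$ and is pushed up to $\mathbf{w}^{(N+1)}$ by the previous paragraph, while the remaining short factors lie inside $\pi_{N+1}(c_1c_2c_3)$ for a length-$3$ factor $c_1c_2c_3$ of $\mathbf{w}^{(N+1)}$, and for these one exhibits $N+1$ distinct return words by pairing $v$ with the $N+1$ distinct run lengths $3^j$ that can be crossed upon returning to an occurrence of $v$. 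Verifying rigorously that such a short $v$ genuinely ``sees'' enough distinct run lengths, whatever its position relative to the runs, is the delicate combinatorial core of the proof.
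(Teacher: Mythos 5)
Your treatment of the first two assertions is correct and coincides with the paper's: uniform recurrence via weak primitivity of the directive word and Theorem~\ref{thm: characterization S-adicity minimal}, and sub-linear complexity via the finite representation over $S=\{p,\sigma,s\}$ by uniform morphisms and Corollary~\ref{cor: durand uniform S-adic}. The skeleton of your return-word argument is also the paper's: it proves exactly your recognizability claims (every factor of $\mathbf{w}_\pi$ of length at least $24$ decomposes uniquely into images $\pi_1(0),\pi_1(1)$, and for $k\geq 2$ every factor of $\mathbf{w}^{(k)}$ of length at least $7$ decomposes uniquely into $\pi_k(0),\pi_k(1)$), locates each factor of length $n$ with $\ell_{k-1}\leq n<\ell_k$ inside ${\rm Fac}\bigl(\pi_1\cdots\pi_k(\{0,1\}^2)\bigr)$, and transports return-word counts upward. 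Note that only the inequality (the number of return words to $u$ in $\mathbf{w}_\pi$ is at least the number of return words to $v$ in $\mathbf{w}^{(k)}$) is needed; your claimed equality is stronger than necessary and not fully justified by recognizability alone, though harmless.

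The genuine gap is precisely the step you flag yourself as unresolved: the lower bound for \emph{short} factors at level $N$, i.e., your claim that every nonempty factor of $\mathbf{w}^{(N)}$ has at least $N$ return words. Your induction on $N$ only pushes long factors upward; the short-factor case \emph{is} the content of the statement, and ``pairing $v$ with the $N+1$ distinct run lengths $3^j$ that can be crossed upon returning'' is a hope, not an argument: for factors like $01$ or $0101$ a return can occur without crossing any distinguished maximal run, and one must exclude that many candidate returns collapse to the same word. The paper closes this with a device absent from your proposal: the number of return words to a factor equals the number of return words to the smallest bispecial factor containing it. This collapses the problematic case ($|v|<7$, avoiding $000$ and $111$; the cases $|v|\geq 7$, or $000$ or $111$ occurring in $v$, are handled by one more desubstitution, reducing to the return words of a letter or of a word in $\{0,1,00,01,10,11\}$ at level $k+1$) to the explicit finite list $\{0,1,00,01,10,11,010,101,0101,1010,01010,01011,11010,10101,010101,101010\}$ of bispecial factors. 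For each of these one exhibits linearly many return words, each containing a different highest power $3^j$ of $0$ or of $1$; for $01010$, $10101$, $010101$ and $101010$, one uses instead that they occur only astride the junctions in $\pi_k(01)$ and $\pi_k(10)$ and counts the words filling the gaps between occurrences of $01$ and $10$ in $\mathbf{w}^{(k+1)}$. Without this reduction, or an equivalent exhaustive case analysis, your ``delicate combinatorial core'' remains unproven and the proof is incomplete.
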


\begin{proof}
The uniform recurrence is a direct consequence of Theorem~\ref{thm: characterization S-adicity minimal}. Since $\mathbf{w}$ is $S$-adic with $S = \{p,\sigma,s\}$, it is a consequence of Corollary~\ref{cor: durand uniform S-adic} that $\mathbf{w}_{\pi}$ has a sub-linear complexity.

Let us prove that all sufficiently long factors of $\mathbf{w}_{\pi}$ have many return words. For all $k \geq 1$, we let $\mathbf{w}^{(k)}$ denote the sequence
\[
	\mathbf{w}^{(k)} = \lim_{n \to \infty}	\pi_k \pi_{k+1} \cdots \pi_n (0^{\omega}).
\]
We obviously have $\mathbf{w}^{(1)} = \mathbf{w}_{\pi}$ and for all $k \geq 1$, $\mathbf{w}_{\pi} = \pi_1 \cdots \pi_k (\mathbf{w}^{(k+1)})$. We also have $|\pi_k(0)| = |\pi_k(1)| = 3^k$ for all $k$ and we let $\ell_k$ denote the length
\[
	|\pi_1 \pi_2 \cdots \pi_k(0)| = |\pi_1 \pi_2 \cdots \pi_k(1)| = 3^{\frac{k(k+1)}{2}}.
\]

Due to the shape of morphisms $\pi_1$, any factor $u$ of $\mathbf{w}_\pi$ of length at least equal to $24$ can be uniquely decomposed into images $\pi_1(0)$ and $\pi_1(1)$, \textit{i.e.}, there is a unique word $v$ in $\mathbf{w}^{(2)}$ such that $u \in {\rm Fac}(\pi_1(v))$ and such that if $u \in {\rm Fac}(\pi_1(v'))$, then $v \in {\rm Fac}(v')$. Indeed, given such a factor $u$, either it contains an occurrence of $00$ or of $11$ (which uniquely determine how to decompose the factor), or it is equal to $\pi_1(10101010)$ or to $\pi_1(01010101)$ (neither $101010101$, nor $010101010$ occur in $\mathbf{w}^{(2)}$).

Similarly, we see that, for all $k \geq 2$, any factor of $\mathbf{w}^{(k)}$ of length at least equal to $7$ can be uniquely decomposed into $\pi_k(0)$ and $\pi_k(1)$.

Let $n$ be a positive integer (suppose it is large). The sequence $(l_k)_{k \geq 1}$ is increasing so there is a unique positive integer $k$ such that $l_{k-1} \leq n < l_k$. Consequently, all factors of length $n$ of $\mathbf{w}_{\pi}$ belong to ${\rm Fac} \left( \pi_1 \cdots \pi_k \left( \{0,1\}^2 \right) \right)$. From what precedes, there is a unique word $v$ in ${\rm Fac}\left(\pi_k\left(\{0,1\}^2\right)\right) \subset {\rm Fac}\left(\mathbf{w}^{(k)}\right)$ such that $u \in {\rm Fac}(\pi_1 \cdots \pi_{k-1}(v))$ and any word $v'$ over $\{0,1\}$ such that $u \in {\rm Fac} \left( \pi_1 \cdots \pi_{k-1}(v') \right)$ contains $v$ as a factor. 

By unicity of $v$, the number of return words to $u$ in $\mathbf{w}_{\pi}$ is at least equal to the number of return words to $v$ in $\mathbf{w}^{(k)}$ (some return words could also occur in $\pi_1 \cdots \pi_{k-1}(v)$) so we only have to show that the number of return words to $v$ in $\mathbf{w}_k$ is at least linear in $k$. This will prove the result since $k$ increases with $n$.

First, if $|v| \geq 7$, we have seen that there is a unique word $x \in \{0,1,00,01,10,11\}$ such that $v \in {\rm Fac}(\pi_k(x))$ and such that if $v \in {\rm Fac}(\pi_k(y))$, then $x \in {\rm Fac}(y)$. The number of return words to $v$ in $\mathbf{w}^{(k)}$ is at least equal to the number of return words to $x$ in $\mathbf{w}^{(k+1)}$ and we let the reader check that it is at least linear in $k$.

Let us suppose that $|v|$ is less than 7 and that the factor $000$ occurs in $v$ (the case $111 \in {\rm Fac}(v)$ is similar). Then, from the shape of $\pi_k$, $v$ only occurs in $\mathbf{w}^{(k)}$ as factor of $\pi_k(1)$. The number of return words to $v$ in $\mathbf{w}^{(k)}$ is then at least equal to the number of return words to $1$ in $\mathbf{w}^{(k+1)}$ and we let the reader check that it is at least linear in $k$.

Now suppose that neither $000$ nor $111$ occur in $v$ (still with $|v| < 7$). Since the number of return words to a factor is the same as the number of return words to the smallest bispecial factor containing it, we only have to count the number of return words to bispecial factors of $\mathbf{w}^{(k)}$ that are smaller than $7$ and that do not contain $000$ neither $111$ as factors. We let the reader check that these factors are exactly
\[
	\left\{	0,1,00,01,10,11,010,101,0101,1010,01010,01011,11010,10101,010101,101010	\right\}.
\]
The bispecial factors $01010$, $10101$, $010101$ and $101010$ only occur in $\mathbf{w}^{(k)}$ as factors of $\pi_k(01)$ and $\pi_k(01)$ but are not factor of $\pi_k(0)$ neither of $\pi_k(1)$. Since the number of words filling the gaps between occurrences of $01$ and $10$ in $\mathbf{w}^{(k+1)}$ is linear in $k$, these bispecial factors have a number of return words which is linear in $k$ too.

We let the reader check that the other bispecial factors have a number of return words in $\mathbf{w}^{(k)}$ which is linear in $k$ (each return word containing a different highest power of $0$ or of $1$ depending on the bispecial factor). This completes the proof.
\end{proof}
\end{example}

To end with return words, let us observe that it is possible to build sequences whose complexity is not sub-linear and in which infinitely many factors have a bounded number of return words. Indeed, for instance, the sequence $\gamma^{\omega}(0)$ (see Example~\mbox{\ref{ex: morse+n^2}}) has a quadratic complexity and any factor $\gamma^n(1)$ admits two return words.
After last observation it seems difficult to characterize sequences with a sub-linear complexity using return words. Nevertheless next interesting question is open.

\begin{question}
Let $\mathbf{w}$ be a sequence such that, for some integer $K$, all factors of $\mathbf{w}$ has at most $K$ return words. Is it true that $\mathbf{w}$ is $S$-adic for some suitable {set} $S$ and has linear complexity?
\end{question}

\subsection{What about the number of distinct powers?}
\label{subsection: distinct powers}

As already mentioned, it is known that the factor complexity of $k$-power-free morphic purely sequences grows at most like $n \log n$. In this section, we explore links between bounding the number of distinct exponents of factors instead of the maximal exponent of factors and the factor complexity. 

Let us introduce the following notation: given a language $L$ and a word $u = u_1 \cdots u_{|u|} \in L$, ${\rm Pow}(u, L)$ is the set of all non-negative integers $i$ such that there are some words $p$ and $s$ such that $pu^i s$ belongs to $L$, 
$|p| \leq |u|$, $|s|\leq |u|$, $p$ is not a suffix of $u$ and $s$ is not a prefix of $u$.
With the notation of previous section, we have 
${\rm Pow}(\pi_1\cdots \pi_n(0), {\rm Fac}(\mathbf{w}_\pi)) = \{3^1, 3^2, \ldots, 3^n\}$, which shows the existence of an $S$-adic sequence with an at most linear complexity that have factors with an unbounded number of distinct exponents.

The previous phenomenon also holds for $S$-adic sequences that do not have a sub-linear complexity.
Indeed this is the case of the $S$-adic sequence considered in Proposition~\ref{prop: morse+n^2} when the sequence $(k_n)_{n \in \mathbb{N}}$ is unbounded, as one can observe that
\[
	{\rm Pow}(\gamma^{k_0} \mu \gamma^{k_1} \mu \gamma^{k_2} \mu \cdots \gamma^{k_n} \mu(1), {\rm Fac}(\mathbf{w}_{\gamma,\mu})) = \{1, 2, \ldots, k_{n+1}+2\}.
\]

If it seems difficult from previous discussion to get an $S$-adic characterization of sequences with sub-linear complexity using the number of distinct powers of their factors, next proposition inspired by word $\mathbf{w}_{\gamma,\mu}$ shows that the number of distinct powers can be used to prove that a sequence does not have a sub-linear complexity. Recall that a word $u$ is \textit{primitive} if it is not a power of a smaller word $v$, \textit{i.e.}, there is no integer $k \geq 2$ such that $u = v^k$.

\begin{proposition}
\label{prop: puissances}
Let $\mathbf{w}$ be a recurrent sequence over $A$. If there is a constant $C > 1$, a sequence $(u_n)_{n \in \mathbb{N}}$ of primitive
words in ${\rm Fac}(\mathbf{w})$ and an increasing sequence $(k_n)_{n \in \mathbb{N}}$ of positive integers such that $(|u_n|)_{n \in \mathbb{N}}$ is increasing and such that for all $n$, all integers $i$ with $\frac{k_n}{C} \leq i \leq k_n$ belong to ${\rm Pow}(u_n,{\rm Fac}(\mathbf{w}))$, then $\mathbf{w}$ does not have a sub-linear complexity.
\end{proposition}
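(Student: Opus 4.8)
The plan is to argue by contradiction through Cassaigne's criterion (Theorem~\ref{theorem: cassaigne diff de complexite}): if $\mathbf{w}$ had a sub-linear complexity, there would be a constant $K$ bounding, for every length, the number of right special factors of that length. It therefore suffices to exhibit, for infinitely many $n$, a single length carrying more than $K$ right special factors. Throughout I would write $q_n = |u_n|$ and $m_n = \lceil k_n/C\rceil$, so that every exponent in $\{m_n, m_n+1, \dots, k_n\}$ lies in ${\rm Pow}(u_n, {\rm Fac}(\mathbf{w}))$, and I would keep in mind that $(q_n)_n$ is strictly increasing.

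First I would record the elementary consequences of the hypotheses. Since $j \in {\rm Pow}(u_n, {\rm Fac}(\mathbf{w}))$ for $j$ in the range, there is, for each such $j$, an occurrence $p\,u_n^{\,j}\,s$ in $\mathbf{w}$ with $s$ not a prefix of $u_n$; in particular $u_n^{\,j}$ can be continued ``off-period''. On the other hand, whenever $j < k_n$ the word $u_n^{\,j+1}$ also occurs, so $u_n^{\,j}$ can be continued periodically by the next letter of $u_n$. Hence each $u_n^{\,j}$ with $m_n \le j \le k_n-1$ is a right special factor of $\mathbf{w}$, of length $j q_n$. Primitivity of $u_n$ is what makes this usable: the $q_n$ cyclic rotations of $u_n$ are pairwise distinct, so the factors of minimal period exactly $q_n$ are genuine, and for different indices $n$ they have pairwise distinct minimal periods and are therefore distinct factors. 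Recurrence enters only to guarantee that every occurrence can be extended as far as needed to realise the factors described.

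Next I would convert this into many special factors sharing one length. Two devices are available. The first is to read, for a run of exponent $j$, the deviation $s$ and note that $u_n^{\,j}$ followed by the longest common prefix of $s$ and $u_n$ is again right special; sliding the phase of the leading periodic block and varying $j$ then yields, at a length $\ell_n \approx k_n q_n$, of the order of $q_n\,k_n$ pairwise distinct factors (distinct phases differ in their first $q_n$ letters, distinct exponents differ at the deviation). The second is to collect, across all scales, the factors of minimal period $q_n$: at a given length $\ell$ these occur for every $n$ with $q_n \le \ell \le k_n q_n$, and since the periods $q_n$ are distinct these contributions are disjoint. One then chooses $\ell$ so that one of the two counts exceeds $K$.

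The step I expect to be the main obstacle is precisely the quantitative one. A single family $u_n$ only forces, at any length $\ell$ in its active window $[q_n,\,k_n q_n]$, a number of distinct factors that is \emph{linear} in $\ell$: the phase-times-exponent count saturates at a constant multiple of $\ell$, and the purely periodic count is $q_n \le \ell$. Thus no single scale can by itself defeat a linear upper bound, and the heart of the argument must be to exploit the growth of the periods $q_n$ together with the multiplicative width $C$ of the exponent windows, so that several scales overlap at one length $\ell$ and their disjoint contributions accumulate beyond $K\ell$. Making this overlap explicit --- choosing $\ell$ so that the set $\{\,n : q_n \le \ell \le k_n q_n\,\}$ is large, and controlling the resulting sum $\sum q_n$ against $\ell$ --- is the delicate point, and it is exactly here that all three hypotheses (primitivity of the $u_n$, the fixed constant $C$, and the strict growth of $|u_n|$) have to be used at once.
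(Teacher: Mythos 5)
Your reduction to Cassaigne's criterion is legitimate, but the engine you propose for producing many special factors does not run, and the step you flag as ``the delicate point'' is in fact where the proposal fails. First, the cross-scale accumulation is unavailable: the hypotheses give no control on the gaps between consecutive periods $q_n=|u_n|$, nor any guarantee that $k_nq_n \geq q_{n+1}$, so the set $\{n : q_n \leq \ell \leq k_nq_n\}$ may have at most one element for every length $\ell$ (take, say, $q_n = 2^{2^n}$ and $k_n = n$). No choice of $\ell$ makes several scales overlap, so the disjoint contributions you want to sum never accumulate beyond a single scale. Second --- and this is decisive --- your assertion that ``no single scale can by itself defeat a linear upper bound'' is false, and the paper's proof consists precisely of refuting it. You arrive at that assertion because you only count factors lying inside the periodic runs (suffixes of $u_n^{j}w$ with $w$ a prefix of $u_n$); at a fixed length these collapse to at most one word per deviation prefix $w$ (two exponents $j \neq j'$ with the same $w$ yield the \emph{same} suffix), so this family indeed saturates --- in fact it can degenerate to a single right special factor per length per scale, which also undermines your claim that each $u_n^{j}$ is right special (the deviation inside $s$ need not occur at its first letter).

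What is missing is the \emph{left} border of a maximal run, and this is exactly the paper's mechanism. The paper studies return words to $u_n^{k_n-i}$: primitivity (via the $xy=yx$ proposition) forces any such return word to be either $u_n$ itself or of length greater than $(k_n-i)|u_n|$; recurrence --- used here substantively, not merely to extend occurrences --- then supplies for each admissible exponent a factor $r_iu_n^{k_n-i}s_i$ with a long non-periodic left context $r_i$ and a right deviation $s_i$. Sliding a window of length $k_n|u_n|$ across this configuration yields $i|u_n|$ pairwise distinct factors for exponent $k_n-i$: since $C<2$ (a harmless normalization), the run fills more than half the window and occurs in it only once, so its position inside the window identifies the offset; a second primitivity argument separates windows coming from distinct exponents. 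Summing over $i$ gives
\[
	p_{\mathbf{w}}(k_n|u_n|) \; \geq \; 1+\sum_{i=1}^{k_n-\lceil k_n/C\rceil} i\,|u_n| \; \geq \; \frac{|u_n|}{2}\left(\frac{C-1}{C}k_n\right)\left(\frac{C-1}{C}k_n-1\right),
\]
which is of order $k_n\cdot(k_n|u_n|)$: a single scale already defeats every linear bound once $k_n$ is large, with no interaction between scales needed, and this is where the full exponent window $[k_n/C,\,k_n]$ with fixed $C$ is consumed. In short, the proposal discards the only counting device (return words pinning the run's position) that makes the statement provable, and replaces it with an overlap of scales that the hypotheses cannot provide.
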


Before proving the result, let us recall the classical following result (see for instance Proposition 3.1.2 in \cite{Lothaire1983}).

\begin{proposition}
\label{prop: xy yx}
If $x$ and $y$ are two non-empty words such that $xy = yx$, then there is a word $v$ smaller than $x$ and $y$ such that $x = v^n$ and $y = v^m$ for some positive integers $n$ and $m$.
\end{proposition}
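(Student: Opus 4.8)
The plan is to prove Proposition~\ref{prop: xy yx} by induction on the total length $|x| + |y|$, following the subtractive (Euclidean) scheme. Throughout I would assume without loss of generality that $|x| \le |y|$, the symmetric case being obtained by exchanging the roles of $x$ and $y$. The induction is strong induction: I assume the statement holds for every pair of non-empty words whose lengths sum to something strictly smaller than $|x| + |y|$, and I deduce it for the pair $(x,y)$.

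First I would record the elementary prefix observation that drives the whole argument: since $xy = yx$ and $|x| \le |y|$, the word $x$ is a prefix of $yx = xy$, hence $x$ is a prefix of $y$. This lets me write $y = xz$ for a uniquely determined, possibly empty, word $z$ with $|z| = |y| - |x|$. I then split into two cases according to whether $z$ is empty. If $z = \varepsilon$, then $y = x$ and I take $v = x$ with $n = m = 1$, which settles the terminating case. If $z \ne \varepsilon$, I substitute $y = xz$ into $xy = yx$ to obtain $x\,xz = xz\,x$ and cancel the common prefix $x$, which yields $xz = zx$.

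In this second case $x$ and $z$ are both non-empty and satisfy $|x| + |z| = |y| < |x| + |y|$, so the induction hypothesis applies to the pair $(x,z)$ and produces a word $v$ with $x = v^a$ and $z = v^b$ for positive integers $a,b$ and $|v| \le \min(|x|,|z|)$. Reassembling, $y = xz = v^{a+b}$, so $v$ is the desired common root with $n = a$ and $m = a + b$; moreover $|v| \le |x| \le |y|$, so $v$ is smaller than both $x$ and $y$, as required.

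The argument is essentially routine, so there is no serious obstacle; the only points demanding care are (i) checking that the reduced pair $(x,z)$ genuinely has both entries non-empty and strictly smaller total length, so that the induction is well-founded and the hypothesis legitimately applies, and (ii) the left-cancellation step $x\,xz = xz\,x \Rightarrow xz = zx$, which is valid precisely because the two sides share the prefix $x$. If one wants $v$ to be strictly shorter than $x$ and $y$ even in the degenerate case $x = y$, one may additionally replace $v$ by its primitive root, which leaves the conclusion unchanged.
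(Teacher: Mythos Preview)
The paper does not actually prove Proposition~\ref{prop: xy yx}; it merely recalls it as a classical result, with a reference to Proposition~3.1.2 in \cite{Lothaire1983}, and then uses it inside the proof of Proposition~\ref{prop: puissances}. Your inductive argument via the subtractive Euclidean scheme is the standard proof of this fact and is correct; it is essentially the argument one finds in the cited reference.

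One small remark on the wording: in the statement ``smaller'' should be read as ``of length at most'', so your terminating case $x=y$, $v=x$, $n=m=1$ is already acceptable without passing to the primitive root. In the applications made in the paper (where one of the two words is strictly shorter than the other), this non-strict reading is exactly what is needed to conclude non-primitivity.
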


\begin{proof}
Let us give a lower bound on the number of factors of length $k_n |u_n|$ of $\mathbf{w}$. When $C$ is at least equal to $2$, the property is also true with $2 - \varepsilon$ instead of $C$ for all $\varepsilon > 0$, thus, without loss of generality, we assume $C < 2$. 

Let $n$ be a positive integer and let $i$ in $\{0,1,\dots,k_n-\left\lceil\frac{k_n}{C}\right\rceil\}$.
By hypothesis the word $u_n^{k_n-i}$ is a factor of $\mathbf{w}$.

Let us study the return words to $u_n^{k_n-i}$. Since $u_n$ is a primitive word, a return word $r$ to $u_n^{k_n-i}$ is either $u_n$ or has length $|r| > (k_n-i)|u_n|$. Indeed, if not, one of the following holds true:
\begin{enumerate}
	\item $|r| = j|u_n|$ with $2 \leq j \leq k_n-i$; in this case, we have $r = u_n^j$ which is not a return word to $u_n$;
	\item $|r| < (k_n-i)|u_n|$ and $|r|$ is not a multiple of $|u_n|$; in this case, there is a non-empty word $u'$ smaller than $u_n$ such that $u' u_n = u_n u'$ and, by Proposition~\ref{prop: xy yx}, $u_n$ is not a primitive word.
\end{enumerate}
 
Now, since $k_n-i$ belongs to ${\rm Pow}(u_n,{\rm Fac}(\mathbf{w}))$ and $\mathbf{w}$ is recurrent, there are two words $p_i$ and $s_i$ such that $p_i u_n^{k_n-i} s_i$ belongs to ${\rm Fac}(\mathbf{w})$, $|p_i| \leq |u_n|$, $|s_i| \leq |u_n|$ and $u_n$ does not admit $p_i$ as a suffix neither $s_i$ as a prefix.
Thus, from what precedes, $p_i$ is a suffix of a return word $r_i$ to $u_n^{k_n-i}$ with $|r_i| > (k_n-i)|u_n|$ and such that $r_i u_n^{k_n-i} s_i$ is a factor of the recurrent word $\mathbf{w}$.

By hypothesis, we have $C < 2$ and $k_n-i  \geq \left\lceil \frac{k_n}{C} \right\rceil$ so it comes
\[
	|r_i u_n^{k_n-i}| > 2 \left\lceil \frac{k_n}{C} \right\rceil |u_n| > k_n |u_n|.
\]

Let $v_i$ denote the suffix of length $k_n|u_n|$ of $r_i u_n^{k_n-i}$ and, when $i \geq 1$, let $x_i$ denote a factor of $\mathbf{w}$ of length $(k_n+i)|u_n|$ that admits $v_i s_i$ as a prefix. Let also $x_0 = u_n^{k_n}$). For a given word $w = w_1 \cdots w_{|w|}$, we let $w[r:s]$ denote its factor $w_r \cdots w_{s-1}$. By definition of return words, the word $u_n^{k_n-i}$ occurs only once in $v_i$. Thus all the words 
\[
	x_i[\ell:\ell+k_n|u_n|], \quad \ell \in \{1,2,\dots,i|u_n|\}, 
\]
are different.

Let us show that if $i$ and $j$ belong to $\{1,\dots,k_n-\left\lceil\frac{k_n}{C}\right\rceil\}$, $i < j$, then all words 
\begin{eqnarray*}
	x_i[\ell : \ell + k_n |u_n|], 		& \ell \in \{1,2,\dots,i|u_n|\} \\
	x_j[m : m + k_n |u_n|], 			& m \in \{1,2,\dots,j|u_n|\}
\end{eqnarray*}
are also distinct (by construction they are not equal to $x_0$). Suppose by contrary that there are some integers $\ell \in \{1,2,\dots,i|u_n|\}$ and $m \in \{1,2,\dots,j|u_n|\}$ such that
\[
	x_i[\ell:\ell+k_n|u_n|] = x_j[m:m+k_n|u_n|] = X. 
\]
Since $\ell \leq i|u_n|$ and $m \leq j|u_n|$, there are some words $\alpha$, $\beta$, $\gamma$ and $\delta$ such that 
\[
	X = \alpha u_n^{k_n-i} \beta = \gamma u_n^{k_n-j} \delta.
\]
We have $|\alpha|>|\gamma|$ otherwise $u_n^{k_n-j}$ would occur twice in $v_j$. 
Let us show that we also have $|\alpha| < |\gamma| + (k_n-j)|u_n|-|u_n|$. As $k_n - i > k_n - j \geq \left\lceil \frac{k_n}{C} \right\rceil > \frac{k_n}{2}$, we have
\[
	(k_n-i)|u_n| \geq (k_n-j+1)|u_n| > \frac{k_n|u_n|}{2}+|u_n|.
\]
Thus, if $|\alpha| \geq |\gamma| + (k_n-j)|u_n|-|u_n|$, we have 
\begin{eqnarray*}
	|\alpha| + (k_n-i)|u_n|  &\geq	& |\gamma| + (k_n-j)|u_n|-|u_n| + (k_n-j+1)|u_n|	\\
						& >		& |\gamma| + k_n |u_n|,
\end{eqnarray*}
which is a contradiction because $|\alpha u^{k_n-i}\beta|=k_n|u_n|$.

We have just proved that 
\[
	|\gamma| < |\alpha| < |\gamma| + (k_n-j)|u_n|-|u_n| = |\gamma u_n^{k_n-j}| - |u_n|.
\]
This implies that, if $p$ is the prefix of length $(|\alpha|-|\gamma|) \mod |u_n|$ of $u_n$, we have $u_np = pu_n$. Since $|\alpha|-|\gamma| \neq 0 \mod |u_n|$ (otherwise $s_j$ would be a prefix of $u_n$), Proposition~\ref{prop: xy yx} implies that $u_n$ is not primitive, hence a contradiction.  

Now we can conclude the proof: since all words $x_i[\ell:\ell+k_n|u_n|]$, $i \in \{1,\dots,k_n-\left\lceil\frac{k_n}{C}\right\rceil\}$, $\ell \in \{1,2,\dots,i|u_n|\}$, are distinct and different from $x_0$, we have
\begin{eqnarray*}
	p_{\mathbf{w}}(k_n|u_n|) 	&	\geq	& 1	+ \sum_{i=1}^{k_n - \left\lceil\frac{k_n}{C}\right\rceil} i |u_n|	\\
				&	\geq 	&	|u_n|	\sum_{i=1}^{\frac{C-1}{C}k_n-1} i	\\
				&	\geq 	&	\frac{|u_n|}{2}	 \left(\frac{C-1}{C}k_n \left(\frac{C-1}{C}k_n-1 \right) \right)
\end{eqnarray*}
and we deduce that the complexity is not sub-linear.
\end{proof}

\section{Beyond linearity}

Until now, we have provided several examples showing that various natural approaches to characterize $S$-adic sequences that have a sub-linear complexity does not appear to be promising. To conclude this paper, we raise a new problem related to $S$-adicity and, more precisely, to everywhere growing $S$-adic sequences.

For purely morphic sequences, the complexity function can have only 5 asymptotic behaviours and only depends on the growth rate of images (see Theorem~\ref{thm: pansiot}). For $S$-adic sequences we have seen in Section~\ref{subsection S-adic} that things are highly more complicated. However, it has been proved in~\cite{Ferenczi} (see also~\cite{Leroy,Leroy-Richomme}) that any uniformly recurrent sequence with sub-linear complexity is everywhere growing $S$-adic with ${\rm Card}(S)<+\infty$. This is a kind of generalization of the third point of Theorem~\ref{thm: pansiot}. Moreover, one can check that all examples considered in previous sections (and more generally all $S$-adic representations of well-known families of sequences such as codings of rotations, codings of interval exchanges, etc.) are everywhere growing. It is also interesting to note that for purely morphic sequences, the class of highest complexity $\Theta(n^2)$ can be reached only by morphisms with bounded letters (still Theorem~\ref{thm: pansiot}). Furthermore, up to now, Cassaigne's constructions (Proposition~\ref{prop:Cassaigne adique}) are the only ones that allow to build $S$-adic sequences with arbitrarily high complexity and they admit several bounded letters. Consequently, the fact that the length of all images tends to infinity with $n$ seems to be important to get a \textit{reasonably low complexity}. 

Observe that, given an $S$-adic sequence $\mathbf{w}$, the everywhere growing property is not a necessary condition for $\mathbf{w}$ to have a low complexity. Indeed, Cassaigne's constructions also hold for sequences with low complexity. One can also think to the \textit{Chacon substitution} $\varrho$ defined by $\varrho(0)=0010$ and $\varrho(1)=1$ whose fixed point $\varrho^{\omega}(0)$ has complexity $p(n)=2n+1$ for all $n$ (see~\cite{chacon}). However, \textit{the existence} of an everywhere growing $S'$-adic representation of $\mathbf{w}$ could be necessary (this is the case for the Chacon substitution and for any uniformly recurrent sequence with sub-linear complexity). 

That property is neither a sufficient condition since the sequence $\mathbf{w}_{\gamma,E}$ of Example~\ref{ex: morse+n^2} satisfies it and does not always have a sub-linear complexity. However the following question seems to be natural.

\begin{question}
Is it possible to reach any high complexity with everywhere growing $S$-adic sequences? 
\end{question}

This question seems to be a new non-trivial problem. Proposition~\ref{prop: S-adic bounded by n log n} below provides a partial answer to that question. Indeed, it deals with \textit{expansive} $S$-adic sequences, \textit{i.e.}, with $S$-adic sequences such that for all morphisms $\sigma$ in $S$ and all letters $a$, we have $|\sigma(a)| \geq 2$. The proof can be found in~\cite{Leroy} and involves techniques similar to those used in~\cite{D0L} for D0L systems.

\begin{proposition}
\label{prop: S-adic bounded by n log n}
If $\mathbf{w}$ is an expansive $S$-adic sequence with ${\rm Card}(S)< +\infty$, then $p_{\mathbf{w}}(n) \in O(n \log n)$.
\end{proposition}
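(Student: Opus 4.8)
The plan is to bound the number of \emph{right special factors} of $\mathbf{w}$ and then invoke the link with the first difference of the complexity. Set $M = \max_{\sigma \in S,\, a}|\sigma(a)|$, which is finite since ${\rm Card}(S)<+\infty$, and for $k \geq 0$ let $\mathbf{w}^{(k)}$ denote the $S$-adic sequence directed by $(\sigma_n)_{n \geq k}$, so that $\mathbf{w} = \mathbf{w}^{(0)}$ and $\mathbf{w}^{(k)} = \sigma_k(\mathbf{w}^{(k+1)})$. Expansiveness gives $|\sigma_k(a)| \geq 2$ for every letter $a$, so any factor of $\mathbf{w}^{(k)}$ of length $n$ is a factor of $\sigma_k(v)$ for some factor $v$ of $\mathbf{w}^{(k+1)}$ with $|v| \leq n/2 + 2$ (the images of the interior letters of $v$ lie entirely inside the factor, so there are at most $n/2 + 2$ such letters). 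Iterating, a factor of length $n$ descends to a factor of bounded length after $O(\log n)$ desubstitutions; this geometric contraction, coming directly from $|\sigma(a)| \geq 2$, is what will produce the logarithmic factor.

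The heart of the argument is a one-level desubstitution lemma for right special factors: if $u$ is a right special factor of $\mathbf{w}^{(k)}$ with $|u| > M$, then $u = p\,\sigma_k(u')$, where $u'$ is a right special factor of $\mathbf{w}^{(k+1)}$ and $p$ is a proper suffix of some image $\sigma_k(c)$, so that $|p| < M$. The point is that the morphisms being non-erasing with images starting with determined letters, the branching that makes $u$ right special must already occur at the corresponding cut in $\mathbf{w}^{(k+1)}$, so it comes from a right special $u'$; the bounded prefix $p$ merely records the offset of the left end of $u$ inside the image that contains it. For a fixed target length this correspondence is boundedly many-to-one: once $|u|$ and $u'$ are fixed, the length $|p| = |u| - |\sigma_k(u')|$ is determined, and $p$ must be a suffix of one of the ${\rm Card}(A)$ images $\sigma_k(c)$, so at most ${\rm Card}(A)$ right special factors $u$ of a given length share a common ancestor $u'$. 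Composing these maps along $\mathbf{w}^{(0)}, \mathbf{w}^{(1)}, \dots$ attaches to each right special factor of $\mathbf{w}$ of length $n$ a chain of right special ancestors whose lengths shrink by a factor at least $2$ at each step and reach bounded length after $O(\log n)$ steps.

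One then concludes by a careful bookkeeping over these $O(\log n)$ levels: at each level only boundedly many right special factors are short (they form a finite set determined by $S$), and the ancestor maps are boundedly many-to-one on each length, so the number of right special factors of $\mathbf{w}$ of a given length $n$ is $O(\log n)$. By the reasoning underlying Theorem~\ref{theorem: cassaigne diff de complexite}, this gives $s(n) = p_{\mathbf{w}}(n+1) - p_{\mathbf{w}}(n) \leq ({\rm Card}(A)-1)\cdot(\#\text{ right special factors of length } n) = O(\log n)$, and summing over lengths (using $\sum_{i \leq n}\log i = O(n\log n)$) yields $p_{\mathbf{w}}(n) \in O(n \log n)$. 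The main obstacle is precisely the desubstitution lemma together with this final count. Making the ancestor map well defined and boundedly many-to-one requires a synchronization (recognizability) argument that must be carried out uniformly over the infinitely many, a priori non-injective, morphisms $\sigma_k$, which is exactly where the finiteness of $S$ and the uniform bound $M$ are essential and where the techniques developed in~\cite{D0L} for D0L systems are adapted (see~\cite{Leroy}). The delicate point is to control the interplay between the variable contraction factors and the fiber multiplicities, so that the bounded multiplicity does not accumulate into more than $O(\log n)$ right special factors per length.
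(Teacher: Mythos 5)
The paper does not actually prove this proposition: it defers the proof to~\cite{Leroy}, noting only that the techniques are those used in~\cite{D0L} for D0L systems. Measured against that method, your sketch takes a different route, and it has two genuine gaps. The first is the desubstitution lemma itself. A right special factor $u$ of $\mathbf{w}^{(k)}$ need not end at a boundary of the decomposition of $\mathbf{w}^{(k)}$ into blocks $\sigma_k(a)$, so the exact form $u = p\,\sigma_k(u')$ already fails; more seriously, the two extensions $ua$ and $ub$ arise from two occurrences of $u$ that may be cut with different phases, and since nothing in the hypotheses forces the morphisms $\sigma_k$ to be injective, let alone recognizable, the branching at level $k$ need not come from any branching at level $k+1$: there may be no right special ancestor $u'$ at all, the two occurrences desubstituting to distinct, non-special words of $\mathbf{w}^{(k+1)}$. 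You flag this as a synchronization issue to be handled ``using the finiteness of $S$'', but finiteness of $S$ only yields the uniform bound $M$; it yields no recognizability. Moss\'e-type synchronization needs primitivity and aperiodicity hypotheses absent here, and recognizability along an arbitrary directive sequence of possibly non-injective morphisms is precisely what cannot be assumed. So the ancestor map on which your whole count rests is not well defined.

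Second, even granting the lemma, the bookkeeping does not give $O(\log n)$ right special factors of length $n$. Since $|u| = |p| + |\sigma_k(u')|$ and image lengths vary in $[2,M]$, the length $|u'|$ is not a function of $|u|$: ancestors of the length-$n$ factors spread over an interval of lengths of size $\Theta(n)$, and your fibers contribute a factor of order $M\,{\rm Card}(A)$ \emph{per level}, which composed over the $\Theta(\log n)$ levels gives only $(M\,{\rm Card}(A))^{O(\log n)} = n^{O(1)}$, not $O(\log n)$; the pointwise bound $s(n)=O(\log n)$ is in any case stronger than what the proposition needs, and your argument does not establish it. The method the paper points to avoids both problems by counting factors \emph{together with one interpretation}, where overcounting is harmless and no synchronization is required. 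Fix an occurrence of a factor $u$ of length $n$ and consider the cuts of $\mathbf{w}$ into blocks $\sigma_0\cdots\sigma_{k-1}(a)$; blocks at level $k$ have length at least $2^k$ by expansiveness. At level $k^* = \lceil \log_2 n \rceil$ the occurrence meets at most one cut; if it meets one, then $u = sp$ with $s$ a suffix of some block word $\sigma_0\cdots\sigma_{k^*-1}(c)$ and $p$ a prefix of some $\sigma_0\cdots\sigma_{k^*-1}(d)$, so $u$ is determined by $(c,d,|s|)$, giving at most ${\rm Card}(A)^2(n+1)$ such factors. Otherwise, let $k_0 < k^*$ be the largest level whose cuts meet the occurrence; then $u$ lies inside a single block $\sigma_0\cdots\sigma_{k_0}(e)$ and $u = s\,\sigma_0\cdots\sigma_{k_0-1}(m)\,p$, where $cmd$ is a factor of $\sigma_{k_0}(e)$ (up to degenerate cases where $s$, $m$, or $p$ is empty), $s$ is a suffix of $\sigma_0\cdots\sigma_{k_0-1}(c)$ and $p$ a prefix of $\sigma_0\cdots\sigma_{k_0-1}(d)$; such a $u$ is determined by $k_0$, $e$, the occurrence of $cmd$ in $\sigma_{k_0}(e)$, and $|s|$, giving at most $\log_2 n \cdot {\rm Card}(A)\, M^2 (n+1)$ factors. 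Summing yields $p_{\mathbf{w}}(n) \in O(n \log n)$ directly, and Example~\ref{example: complexity n log n} shows the logarithmic loss over levels is unavoidable. I would recommend recasting your proof along these lines rather than trying to repair the special-factor ancestor map.
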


Example~\ref{example: complexity n log n} shows that this bound is the best one we can obtain.

\begin{example}
\label{example: complexity n log n}
Let $\beta$ be the morphism
\[
	\vartheta :	\begin{cases}
					0 \mapsto 0120	\\
					1 \mapsto 11	\\
					2 \mapsto 222
				\end{cases}
\]
and consider its fixed point $\mathbf{w}=\vartheta^{\omega}(a)$. It can be seen as an expansive $\{\vartheta\}$-adic sequence and we know from Theorem~\ref{thm: pansiot} that $p_{\mathbf{w}}(n) =\Theta(n \log n)$.
\end{example}

\def\cprime{$'$}

\bigskip
\hrule
\bigskip

\noindent 2010 {\it Mathematics Subject Classification}:
Primary 68R15; Secondary 37B10.

\noindent \emph{Keywords: } 
Factor complexity; $S$-adicity; morphisms.

%\bigskip
%\hrule
%\bigskip
%
%\vspace*{+.1in}
%\noindent
%Received ;
%revised versions received  .
%Published in {\it Journal of Integer Sequences},.
%
%\bigskip
%\hrule
%\bigskip
%
%\noindent
%Return to
%\htmladdnormallink{Journal of Integer Sequences home page}{http://www.cs.uwaterloo.ca/journals/JIS/}.
%\vskip .1in

\end{document}